\def\var{\varepsilon}
\theoremstyle{plain}
\newtheorem{theorem}{Theorem}[section]
\newtheorem{proposition}[theorem]{Proposition}
\newtheorem{lemma}[theorem]{Lemma}
\newtheorem{corollary}[theorem]{Corollary}
\newtheorem{remark}[theorem]{Remark}
\newtheorem{definition}[theorem]{Definition}
\newcommand{\R}{\mathbb{R}}
\newcommand{\N}{\mathbb{N}}
\def\pa{\partial}
\begin{document}

\title{Homogenization of the discrete diffusive coagulation-fragmentation 
equations in perforated domains.}

\author{Laurent Desvillettes$^1$ and Silvia Lorenzani$^2$}

\date{\em $^{1}$ Universit\'{e} Paris Diderot, Sorbonne Paris Cit\'{e}, Institut de Math\'{e}matiques de Jussieu-Paris Rive Gauche, UMR 7586, CNRS, Sorbonne Universit\'{e}s, UPMC Univ. Paris 06, F-75013, Paris, France
\\
$^{2}$ Dipartimento di Matematica, Politecnico di Milano, Piazza Leonardo da
Vinci 32, 20133 Milano, Italy}

\maketitle

\begin{abstract}

The asymptotic behavior of the solution of an infinite set of Smoluchowski's
discrete coagulation-fragmentation-diffusion equations with non-homogeneous
Neumann boundary conditions, defined in a periodically perforated domain,
is analyzed.
Our homogenization result, based on Nguetseng-Allaire two-scale convergence,
is meant to pass from a microscopic model (where the physical processes are
properly described) to a macroscopic one (which takes into account only the
effective or averaged properties of the system).
When the characteristic size of the perforations vanishes, the information
given on the microscale by the non-homogeneous Neumann boundary condition
is transferred into a global source term appearing in the limiting 
(homogenized) equations.
Furthermore, on the macroscale, the geometric structure of the perforated
domain induces a correction in the diffusion coefficients.

\end{abstract}


\section{Introduction} \label{section0}

This paper is devoted to the homogenization of an infinite set of
Smoluchowski's discrete coagulation-fragmentation-diffusion equations
in a periodically perforated domain.
The system of evolution equations considered describes the dynamics of cluster
growth, that is the mechanisms allowing clusters to coalesce to form larger
clusters or break apart into smaller ones.
These clusters can diffuse in space with a diffusion constant which depends
on their size.
Since the size of clusters is not limited {\it{a priori}}, the system of
reaction-diffusion equations that we consider consists of an infinite number
of equations.
The structure of the chosen equations, defined in a perforated medium with
a non-homogeneous Neumann condition on the boundary of the perforations,
is useful for investigating several phenomena arising in porous media
\cite{11b}, \cite{9a}, \cite{11c} or in the field of biomedical research 
\cite{10}.

Typically, in a porous medium, the domain consists of two parts:
a fluid phase where colloidal species or chemical substances, transported
by diffusion, are dissolved and a solid skeleton (formed by grains or
pores) on the boundary of which deposition processes or chemical reactions
take place.
In recent years, the Smoluchowski equation has been also considered in
biomedical research to model the aggregation and diffusion of
$\beta$-amyloid peptide (A$\beta$) in the cerebral tissue, a process thought to be
associated with the development of Alzheimer's disease.
One can define a perforated geometry, obtained by removing from a fixed
domain (which represents the cerebral tissue) infinitely many small holes
(the neurons).
The production of A$\beta$ in monomeric form from the neuron membranes can be
modeled by coupling the Smoluchowski equation for the concentration of
monomers with a non-homogeneous Neumann condition on the boundaries of
the holes.

The results of this paper constitute a generalization of some
of the results contained in \cite{11b}, \cite{10}, by considering an infinite
system of equations where both the coagulation and fragmentation processes
are taken into account.
Unlike previous theoretical works, where existence and uniqueness of solutions
for an infinite system of coagulation-fragmentation equations (with
homogeneous Neumann boundary conditions) have been studied 
\cite{22}, \cite{12}, 
we focus in this paper on a distinct aspect, that is, the averaging of the system
of Smoluchowski's equations over arrays of periodically-distributed
microstructures.

Our homogenization result, based on Nguetseng-Allaire two-scale convergence
\cite{16}, \cite{3},
is meant to pass from a microscopic model (where the physical processes
are properly described) to a macroscopic one (which takes into account only
the effective or averaged properties of the system).

\subsection{Setting of the problem} \label{subsec01}

Let $\Omega$ be a bounded open set in $\mathbb {R}^3$ with a smooth boundary
$\partial \Omega$.
Let $Y$ be the unit periodicity cell $[0,1[^3$ (having the paving property).
We perforate $\Omega$ by removing from it a set $T_{\epsilon}$ of
periodically distributed holes defined as follows.
Let us denote by $T$ an open subset of $Y$ with a smooth boundary $\Gamma$,
such that $\overline {T} \subset \mathrm{Int}\, Y$.
Set $Y^{\ast}=Y \setminus T$ which is called in the literature the solid or 
material part.
We define $\tau (\epsilon \overline {T})$ to be the set of all translated
images of $\epsilon \overline {T}$ of the form $\epsilon (k+\overline {T})$,
$k \in \mathbb {Z}^3$.
Then,
$$T_{\epsilon}:=\Omega \cap \tau (\epsilon \overline {T}).$$
Introduce now the periodically perforated domain $\Omega_{\epsilon}$ defined
by
$$\Omega_{\epsilon}=\Omega \setminus \overline {T}_{\epsilon}.$$

For the sake of simplicity, we make the following standard assumption on the 
holes \cite{8a}, \cite{9b}:  there exists a 'security' zone around $\partial \Omega$ without 
holes, that is the holes do not intersect the boundary $\pa\Omega$, so that
 $\Omega_{\epsilon}$ is a connected set.

\medskip

The boundary $\partial \Omega_{\epsilon}$ of $\Omega_{\epsilon}$ is then 
composed of two parts.
The first one is the union of the boundaries of the holes strictly contained
in $\Omega$.
It is denoted by $\Gamma_{\epsilon}$ and is defined by 
$$\Gamma_{\epsilon}:=\cup \bigg\{  \partial (\epsilon (k+\overline {T})) \mid
\epsilon (k+\overline {T}) \subset \Omega  \bigg \}.$$
The second part of $\partial \Omega_{\epsilon}$ is its fixed exterior
boundary denoted by $\partial \Omega$.
It is easily seen that (see \cite{3a}, Eq. (3))

\begin{equation}\label{a}
\lim_{\epsilon \rightarrow 0} \epsilon \mid \Gamma_{\epsilon} {\mid}_{2}=
\mid \Gamma {\mid}_{2} \frac{\displaystyle \mid \Omega \mid_3}
{\displaystyle \mid Y \mid_3},
\end{equation}
where $\mid \cdot \mid_N$ is the $N$-dimensional Hausdorff measure.
\medskip

The previous definitions and Assumptions on $\Omega$ (and, $T$, $\Gamma$, $\Omega_{\epsilon}$, $T_{\epsilon}$, $\Gamma_{\epsilon}$, $\partial \Omega$) will be denoted in the
rest of the paper as {\bf{Assumption 0}}. 
\bigskip


Throughout this paper, we will abuse notations by denoting by $\epsilon$ a sequence
of positive real numbers which converges to zero.
We will consider in the following a discrete 
coagulation-fragmentation-diffusion model for the evolution of clusters
\cite{4}, \cite{5}.
Denoting by $u_i^{\epsilon}:=u_i^{\epsilon}(t,x) \geq 0$ the density of
clusters with integer size $i \geq 1$ at position $x \in \Omega_{\epsilon}$
and time $t \geq 0$, and by $d_i>0$ the diffusion constant for clusters of size
$i$, the corresponding system can be written as a family of
equations in $\Omega_{\epsilon}$, the first one being:

\begin{eqnarray} \label{1.1}
\begin{cases}
\frac{\displaystyle \partial{u_1^{\epsilon}}}{\displaystyle \partial t}- d_1 \, \Delta_x u_1^{\epsilon}+u_1^{\epsilon} \, \sum_{j=1}^{\infty} 
a_{1,j}
u_j^{\epsilon}=\sum_{j=1}^{\infty} B_{1+j} \, \beta_{1+j,1} \, 
u_{1+j}^{\epsilon}
& \text{in } [0,T] \times \Omega_{\epsilon}, \\

\\ 
\frac{\displaystyle \partial{u_1^{\epsilon}}}{\displaystyle \partial \nu} 
:= \nabla_x u_1^{\epsilon}
\cdot n=0 & \text{on } [0,T] \times \partial\Omega , \\

\\
\frac{\displaystyle \partial{u_1^{\epsilon}}}{\displaystyle \partial \nu}
:= \nabla_x u_1^{\epsilon} \cdot n=
\epsilon \, \psi(t, x, \frac{x}{\epsilon}) & \text{on } 
[0,T] \times \Gamma_{\epsilon} , \\

\\
u_1^{\epsilon}(0,x)=U_1 & \text{in } \Omega_{\epsilon} .
\end{cases}
\end{eqnarray}
We shall systematically make the following assumption on $\psi$ and $U_1$:
\medskip

{\bf{Assumption A}}: We suppose that
$\psi$ is a given (bounded) function satisfying the following conditions:
\par\noindent
(i) $\psi \in C^1 ([0, T]; B)$
with $B=C^1[{\overline \Omega}; C^1_{\#}(Y)]$ ($C^1_{\#}(Y)$ being the space of periodic $C^1$ functions with period relative to $Y$), 
\par\noindent
(ii) $\psi(t=0, x, \frac{x}{\epsilon})=0$ for $x\in \Omega_{\epsilon}$,
\par\noindent
and
$U_1$ is a  constant such that
$0 \le  U_1 \leq \Vert \psi \Vert_{L^{\infty} ([0, T];B)}$.
\medskip

In addition, if $i \geq 2$, we introduce the following equations:

\begin{eqnarray} \label{1.2}
\begin{cases}
\frac{\displaystyle \partial{u_i^{\epsilon}}}{\displaystyle \partial t}- d_i
\,\Delta_x u_i^{\epsilon}
=Q_i^{\epsilon}+F_i^{\epsilon} & \text{in } [0,T] \times \Omega_{\epsilon} ,\\

\\ 
\frac{\displaystyle \partial{u_i^{\epsilon}}}{\displaystyle \partial \nu} 
:= \nabla_x u_i^{\epsilon}
\cdot n=0 & \text{on } [0,T] \times \partial\Omega , \\

\\
\frac{\displaystyle \partial{u_i^{\epsilon}}}{\displaystyle \partial \nu}
:= \nabla_x u_i^{\epsilon} \cdot n=0
& \text{on } 
[0,T] \times \Gamma_{\epsilon} , \\

\\
u_i^{\epsilon}(0,x)=0  & \text{in } \Omega_{\epsilon} , 
\end{cases}
\end{eqnarray}
where the terms $Q_i^{\epsilon}$, $F_i^{\epsilon}$ due to coagulation and
fragmentation, respectively, are given by

\begin{equation} \label{1.3}
Q_i^{\epsilon}:=\frac{1}{2} \, \sum_{j=1}^{i-1} a_{i-j,j} \, 
u_{i-j}^{\epsilon} \, u_j^{\epsilon}-\sum_{j=1}^{\infty} a_{i,j} \,
u_i^{\epsilon} \, u_j^{\epsilon},
\end{equation}

\begin{equation} \label{1.4}
F_i^{\epsilon}:=\sum_{j=1}^{\infty} B_{i+j} \, 
\beta_{i+j,i} \, u_{i+j}^{\epsilon}-B_i \,
u_i^{\epsilon}.
\end{equation}
The parameters $B_i$, $\beta_{i,j}$ and $a_{i,j}$, for integers $i, j \geq 1$,
represent the total rate $B_i$ of fragmentation of clusters of size $i$,
the average number $\beta_{i,j}$ of clusters of size $j$ produced by
fragmentation of a cluster of size $i$, and the coagulation rate $a_{i,j}$ of
clusters of size $i$ with clusters of size $j$.
These parameters represent rates, so they are always nonnegative; single
particles do not break up further, and mass should be conserved when a cluster
breaks up into smaller pieces, so one always imposes the:
\medskip

{\bf{Assumption B}}: The coagulation and fragmentation coefficients satisfy:
\begin{equation} \label{1.5}
a_{i,j}=a_{j,i} \geq 0, \, \; \; \; \beta_{i,j} \geq 0, \, \; \; \; 
(i, j \geq 1),
\end{equation}
\begin{equation} \label{1.6}
B_1=0, \, \; \; \; B_i \geq 0, \, \; \; \; (i \geq 2),
\end{equation}
\begin{equation} \label{1.7}
i=\sum_{j=1}^{i-1} j \, \beta_{i,j}, \, \; \; \;  (i \geq 2).
\end{equation}
In order to prove the bounds presented in the sequel, we need to impose
 additional restrictions on the coagulation  and fragmentation coefficients, together with constraints on the diffusion coefficients. They are summarized in the:
 \medskip
 
 {\bf{Assumption C}}: There exists $C>0$, $\zeta \in ]0,1]$ such that
\begin{equation}  \label{1.40}
a_{ij} \le C\, (i+j)^{1-\zeta} . 
\end{equation}
 Moreover,
for each $m \geq 1$, there exists $\gamma_m >0$ such that
\begin{equation} \label{1.7a}
B_j \, \beta_{j,m} \leq \gamma_m \, a_{m,j} \; \; \; \; \; \; \; 
\text{for} \; \; j \geq m+1.
\end{equation}
Finally, there exist constants $D_0, D_1>0$ such
that
\begin{equation}  \label{diff}
\forall i \in \N - \{0\}, \qquad  0 < D_0 \le d_i \le D_1. 
\end{equation}

 Note that the assumption (\ref{1.40}) on the coagulation coefficients $a_{ij}$ is quite standard: it enables to show that no gelation occurs in the considered system of coagulation-fragmentation equations, provided that the diffusion coefficients satisfy the bound (\ref{diff}), cf. \cite{5}. For a set of alternative assumptions (more stringent on the coagulation coefficients, but less stringent on the diffusion coefficients), see \cite{4}. Finally, assumption (\ref{1.7a}) is used in existence proofs for systems where both coagulation and fragmentation are considered, see \cite{22}.

\subsection{Main statement and comments} \label{subsec02}

Our aim is to study the homogenization of the set of equations
(\ref{1.1})-(\ref{1.2}) as ${\epsilon \rightarrow 0}$, i.e., to study
the behaviour of $u_i^{\epsilon}$ $(i \geq 1)$, as
${\epsilon \rightarrow 0}$, and obtain the equations satisfied
by the limit.
Since there is no obvious notion of convergence for the sequence
$u_i^{\epsilon}$ $(i \geq 1)$
(which is defined on a varying set $\Omega_{\epsilon}$:
this difficulty is specific to the case of perforated domains), we use
the natural tool of two-scale convergence as elaborated  by Nguetseng-Allaire, \cite{16}, \cite{3}. Our main statement shows that it is indeed possible to homogenize the equations: 

\begin{theorem} \label{t2.1} For $\epsilon>0$ small enough, there exists  
a strong solution
$$u_i^{\epsilon}:= u_i^\epsilon(t,x)
\in L^2([0,T]; H^2(\Omega_\epsilon)) \cap H^1(]0,T[; L^2(\Omega_\epsilon)) \quad (i \geq 1)
$$
to system \eqref{1.1} - \eqref{1.2}, which is moreover nonnegative, that is 
$$
u_i^\epsilon(t,x)\ge 0 \quad\mbox{for $(t,x)\in (0,T)\times \Omega_\epsilon$}.
$$
We now introduce the notation $\,\,\widetilde{}\,\,$ for the extension by zero outside
$\Omega_{\epsilon}$, and we denote by $\chi := \chi(y)$ the characteristic function of $Y^{\ast}$.
\par 
Then the sequences $\widetilde{u_i^{\epsilon}}$ and
$\widetilde{\nabla_x u_i^{\epsilon}}$ ($i \geq 1$)
two-scale converge (up to a subsequence) to
$(t,x,y) \mapsto [\chi(y) \, u_i(t,x)]$ and
$(t,x,y) \mapsto [\chi(y) (\nabla_x u_i(t,x)+\nabla_y u_i^1(t,x,y))]$
($i \geq 1$),
respectively,
where the limiting functions
$[ (t,x) \mapsto u_i(t,x), (t,x,y) \mapsto u_i^1(t,x,y) ]$ ($i \geq 1$)
are weak solutions lying in
$L^2 (0,T; H^1 (\Omega)) \times 
L^2 ([0,T] \times \Omega; H_{\#}^1(Y)/\mathbb{R})$ of the following two-scale
homogenized systems:

If $i=1$:
\begin{eqnarray} \label{4.1} 
\begin{cases}
\theta \, \frac{\displaystyle \partial u_1}{\displaystyle \partial t}(t,x)
- d_1\, \nabla_x \cdot \bigg[ \, A \, 
\nabla_x u_1(t,x) \bigg]
+\theta \, u_1(t,x)
\sum_{j=1}^{\infty} a_{1,j} \, u_j(t,x)  \\
=\theta \, \sum_{j=1}^{\infty} B_{1+j} \, \beta_{1+j,1} \, u_{1+j}(t,x)+
d_1 \, \displaystyle \int_{\Gamma} \psi(t,x,y) \, d\sigma(y)  
& \text{ in }  [0,T] \times \Omega, \\

\\
[ A \, \nabla_x u_1 (t,x)] \cdot n=0 & \text{ on }  
[0,T] \times \partial \Omega ,\\

\\
u_1(0,x)=U_1 & \text{ in }  \Omega ;
\end{cases}
\end{eqnarray}

If $i \geq 2$:
\begin{eqnarray} \label{4.2} 
\begin{cases}
\theta \, \frac{\displaystyle \partial u_i}{\displaystyle \partial t}(t,x)
- d_i \, \nabla_x \cdot \bigg[ \, A \, 
\nabla_x u_i(t,x) \bigg]
+\theta \, u_i(t,x)
\sum_{j=1}^{\infty} a_{i,j} \, u_j(t,x) \\
+\theta \, B_i \, u_i(t,x)  
=\frac{\displaystyle \theta}{\displaystyle 2} \sum_{j=1}^{i-1} a_{j,{i-j}}
u_j(t,x) \, u_{i-j}(t,x) \\
+\theta \, \sum_{j=1}^{\infty} B_{i+j} \, \beta_{i+j,i} \, u_{i+j}(t,x)
& \text{ in }  [0,T] \times \Omega, \\

\\
[ A \, \nabla_x u_i (t,x)] \cdot n=0 & \text{ on }  
[0,T] \times \partial \Omega ,\\

\\
u_i(0,x)=0 & \text{ in }  \Omega ,
\end{cases}
\end{eqnarray}
where $\theta=\int_{Y} \chi(y) dy=\vert Y^{\ast} \vert$ is
 the volume fraction of material,
and $A$ is a matrix (with constant coefficients) defined by
$$A_{jk}=\displaystyle \int_{Y^{\ast}} (\nabla_y w_j+ \hat{e}_j) \cdot
(\nabla_y w_k+ \hat{e}_k) \, dy , $$
with $\hat{e}_j$ being the $j$-th unit vector in $\mathbb{R}^3$, and
$(w_j)_{1 \leq j \leq 3}$ the family of solutions of the cell
problem
\begin{eqnarray} \label{4.3}
\begin{cases}
-\nabla_y \cdot [\nabla_y w_j+ \hat{e}_j]=0 \, \, \, \qquad \; \; \; \; \; 
\text{ in} \, \, Y^{\ast} ,\\
(\nabla_y w_j+\hat{e}_j) \cdot n=0 \, \, \, \qquad \; \; \; \; \; \; \; \;
\text{ on} \, \, \Gamma ,\\
y \mapsto w_j(y) \; \; \; \; \; Y-\text{periodic} .
\end{cases}
\end{eqnarray}
Finally,
$$u_i^1(t,x,y)=\sum_{j=1}^3 \, w_j (y) \,
\frac{\displaystyle \partial u_i}{\displaystyle \partial x_j}(t,x)
\; \; \; (i \geq 1).$$
\end{theorem}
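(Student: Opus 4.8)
The plan is to proceed in four stages: establish well-posedness of the $\epsilon$-problem, derive $\epsilon$-uniform a priori bounds on suitable extensions, extract two-scale and strong limits, and finally pass to the limit in the weak formulation to identify the homogenized system together with its cell problem. For fixed $\epsilon$ I would first truncate the infinite system at cluster size $N$, replacing the infinite sums in \eqref{1.3}--\eqref{1.4} by finite ones and discarding the equations for $i>N$. The truncated system is a finite quasi-positive reaction--diffusion system with Neumann data, so its strong solvability in $L^2([0,T];H^2(\Omega_\epsilon))\cap H^1(]0,T[;L^2(\Omega_\epsilon))$ follows from parabolic maximal regularity together with a fixed-point argument, and nonnegativity follows from the quasi-positive structure via the maximum principle. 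Summing the equations against the weights $i$ and using Assumption~B yields formal mass balance up to the boundary source, while the sub-linear bound \eqref{1.40} and the comparison \eqref{1.7a} of Assumption~C provide moment estimates uniform in $N$ that rule out gelation; passing $N\to\infty$ then produces a global strong nonnegative solution, along the lines of the existence theory of \cite{22}, \cite{4}, \cite{5}.

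The essential point is then to bound the solution uniformly in $\epsilon$ after extension to the fixed domain $\Omega$. I would use the standard extension operator $H^1(\Omega_\epsilon)\to H^1(\Omega)$ associated with the perforated geometry of \textbf{Assumption 0}, whose norm is bounded uniformly in $\epsilon$. Testing each equation by $u_i^\epsilon$ and summing against $i$ gives energy estimates in which the boundary contribution is controlled by the crucial scaling of the source: by \eqref{a} one has $\epsilon\int_{\Gamma_\epsilon}|\psi(t,x,\tfrac{x}{\epsilon})|^2\,d\sigma=O(1)$, so the Neumann data on $\Gamma_\epsilon$ acts as a bounded volume forcing in the limit. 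Combined with the moment control from Assumption~C, this yields bounds for $\widetilde{u_i^{\epsilon}}$ in $L^\infty(0,T;L^2(\Omega))\cap L^2(0,T;H^1(\Omega))$ and for its time derivative in a dual space, uniformly in $\epsilon$. From these bounds and the Nguetseng--Allaire theory one extracts the two-scale limits asserted for $\widetilde{u_i^{\epsilon}}$ and, through Allaire's gradient compactness in perforated domains, for $\widetilde{\nabla_x u_i^{\epsilon}}$, with corrector $u_i^1\in L^2([0,T]\times\Omega;H_{\#}^1(Y)/\R)$.

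To identify the limit system I would pass to the limit in the weak formulation using oscillating test functions of the form $\phi(t,x)+\epsilon\,\phi_1(t,x,\tfrac{x}{\epsilon})$ with $\phi_1$ periodic in the fast variable. The surface source converges by two-scale convergence on the periodic boundary $\Gamma_\epsilon$, producing the term $d_1\int_\Gamma\psi(t,x,y)\,d\sigma(y)$ in \eqref{4.1}. The resulting two-scale variational identity splits into a microscopic $y$-equation, whose solvability forces $u_i^1(t,x,y)=\sum_{j=1}^3 w_j(y)\,\frac{\pa u_i}{\pa x_j}(t,x)$ with $w_j$ solving the cell problem \eqref{4.3}, and a macroscopic $x$-equation; eliminating $u_i^1$ then yields the effective matrix $A$ and the volume fraction $\theta=\int_Y\chi$, giving exactly \eqref{4.1}--\eqref{4.2}.

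The decisive difficulty, and the step I expect to be the main obstacle, is to pass to the limit \emph{simultaneously} in the quadratic coagulation products $u_i^{\epsilon}u_j^{\epsilon}$ and in the coupled infinite series, since two-scale convergence is merely a weak notion and does not commute with products. I would resolve this by upgrading to strong $L^2((0,T)\times\Omega)$ convergence of the extended densities through an Aubin--Lions compactness argument (using the spatial $H^1$-bound together with the uniform time-derivative bound), which passes the nonlinear products, and by invoking the uniform moment estimates from Assumption~C to dominate the tails of the coagulation and fragmentation series uniformly in $\epsilon$, thereby justifying term-by-term passage via dominated convergence. Securing these two ingredients rigorously is where the bulk of the technical work lies.
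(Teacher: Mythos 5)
Your global architecture (truncation for existence, $\epsilon$-uniform estimates, oscillating test functions $\phi+\epsilon\phi_1$, cell problem, elimination of the corrector) matches the paper's. But there is a genuine gap at the single most important technical step: the $\epsilon$-uniform $L^2$ control of the infinite sums. You propose to obtain it by ``testing each equation by $u_i^{\epsilon}$ and summing against $i$,'' combined with ``moment control from Assumption C.'' This does not work. Whatever weights $c_i$ you choose, testing the $i$-th equation by $c_i u_i^{\epsilon}$ and summing produces, via the weak formulation \eqref{1.9}, the cubic quantity $\tfrac12\sum_{i,j}a_{i,j}\,u_i^{\epsilon}u_j^{\epsilon}\,(c_{i+j}u_{i+j}^{\epsilon}-c_iu_i^{\epsilon}-c_ju_j^{\epsilon})$, which has no sign and cannot be absorbed by the quadratic dissipation; and Assumption C is a hypothesis on the coefficients $a_{i,j}$ and $B_j\beta_{j,m}$, not an estimate on the solution, so invoking it for ``moment control'' is circular --- the moment bound is precisely what must be proved. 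The paper's resolution is the adapted Pierre--Schmitt duality argument (Lemma \ref{l1.1}): the weighted mass $\rho^{\epsilon}=\sum_i i\,u_i^{\epsilon}$ satisfies the closed equation in \eqref{1.15} with a coefficient $A^{\epsilon}$ that is merely bounded in $L^{\infty}$ (so naive energy estimates fail there too), and one tests it against the solution $w^{\epsilon}$ of the \emph{backward dual} problem \eqref{1.17}, handling the extra boundary term generated by $\psi$ on $\Gamma_{\epsilon}$. The resulting bound \eqref{1.8}, transferred to $\sum_j a_{i,j}u_j^{\epsilon}$ through \eqref{1.40} (Corollary \ref{cor22}), is what makes every subsequent step run: the $L^{\infty}$ bounds of Lemmas \ref{l1.2}--\ref{l1.5}, the derivative bounds of Lemma \ref{l1.7}, and the uniform tail estimates of order $K^{-\zeta}$ that justify term-by-term passage to the limit in the series. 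Without a substitute for this duality lemma, your estimate scheme cannot start.

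A second, more localized error concerns the extensions. You claim bounds for $\widetilde{u_i^{\epsilon}}$ (the extension by zero) in $L^2(0,T;H^1(\Omega))$ and then strong $L^2((0,T)\times\Omega)$ convergence of ``the extended densities'' via Aubin--Lions. The zero-extension of an $H^1(\Omega_{\epsilon})$ function is in general not in $H^1(\Omega)$ (it jumps across $\Gamma_{\epsilon}$), and $\widetilde{u_i^{\epsilon}}$ can never converge strongly in $L^2$ to a nonzero limit: a strongly convergent sequence two-scale converges to a $y$-independent function, whereas here the two-scale limit is $\chi(y)u_i(t,x)$, genuinely $y$-dependent unless $u_i\equiv 0$. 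The Aubin--Lions device is salvageable only when applied to the Sobolev extensions $P_{\epsilon}u_i^{\epsilon}$ of Lemma \ref{lA.1}; this is in effect what the paper does, passing to the limit in the quadratic terms $u_i^{\epsilon}u_j^{\epsilon}$ through Proposition \ref{tB.2} (product of two two-scale convergent sequences, one of which converges together with its norms, see the Remark following that proposition), exactly as in \cite{10}. Finally, note that the paper's tail estimates for the coagulation series also use the per-component $L^{\infty}$ bounds of Lemma \ref{l1.5}, obtained by a De Giorgi-type level-set iteration on $\Gamma_{\epsilon}$ (Lemma \ref{l1.3} and Proposition \ref{t1.1}); your proposal stops at $L^2$-type bounds and never establishes these.
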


\subsection{Structure of the rest of the paper}\label{subsec03}

The paper is organized as follows.
In Section \ref{section2}, we derive all the {\it{a priori}} estimates needed for
two-scale homogenization.
In particular, in order to prove the uniform $L^2$-bound on the infinite
sums appearing in our set of Eqs. (\ref{1.1})-(\ref{1.2}), we extend to the
case of non-homogeneous Neumann boundary conditions a duality method first
devised by M. Pierre and D. Schmitt \cite{17} and largely exploited
afterwards \cite{4}, \cite{5}.
Then, Section \ref{section3} is devoted to the proof of our main results
on the homogenization of the infinite Smoluchowski discrete 
coagulation-fragmentation-diffusion equations in a periodically perforated
domain.
Finally, Appendix \ref{appA} and Appendix \ref{appB} are introduced to
summarize, respectively, some fundamental inequalities in Sobolev spaces
tailored for perforated media, and some basic results on the two-scale 
convergence method (used to perform the homogenization procedure).

\section{Estimates}\label{section2}

We first obtain the {\it{a priori}} estimates for the sequences 
$u_i^{\epsilon}$,
$\nabla_x u_i^{\epsilon}$, $\partial_t u_i^{\epsilon}$ in
$[0,T] \times \Omega_{\epsilon}$, that are independent of $\epsilon$. We start with an adapted duality lemma in the style of \cite{17}.

\begin{lemma} \label{l1.1}

Let $\Omega_{\epsilon}$ be an open set satisfying Assumption $0$.
 We suppose that Assumptions A, B, C hold.
Then, for all $T >0$, classical solutions to system (\ref{1.1})-(\ref{1.2})
satisfy the following bound:

\begin{equation} \label{1.8}
\displaystyle \int_{0}^{T} \displaystyle \int_{\Omega_{\epsilon}}
\, \bigg[ \sum_{i=1}^{\infty} i \, u_i^{\epsilon}(t, x) \bigg]^2 \, dt \, dx
\leq C ,
\end{equation}
where $C$ is a positive constant independent of $\epsilon$.

\end{lemma}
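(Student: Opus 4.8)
The plan is to collapse the infinite system into a single scalar parabolic problem for the total mass density and then run a duality argument in the spirit of Pierre--Schmitt, adapted to the perforated geometry and to the non-homogeneous Neumann datum on $\Gamma_\epsilon$. I introduce the mass density $\rho^\epsilon := \sum_{i\ge 1} i\, u_i^\epsilon$ and the weighted mass $A^\epsilon := \sum_{i\ge 1} i\, d_i\, u_i^\epsilon$. By the bound (\ref{diff}) one has $D_0\,\rho^\epsilon \le A^\epsilon \le D_1\,\rho^\epsilon$, so that $A^\epsilon = a^\epsilon\,\rho^\epsilon$ for some bounded measurable coefficient $a^\epsilon(t,x)$ with $D_0 \le a^\epsilon \le D_1$. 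Multiplying the $i$-th equation of (\ref{1.1})--(\ref{1.2}) by $i$ and summing, the reaction terms cancel: the coagulation contribution $\sum_i i\,Q_i^\epsilon$ vanishes by the symmetry $a_{i,j}=a_{j,i}$, and the fragmentation contribution $\sum_i i\,F_i^\epsilon$ vanishes by the mass-conservation identity (\ref{1.7}) together with $B_1=0$. One is left with the scalar problem $\partial_t\rho^\epsilon = \Delta_x A^\epsilon$ in $\Omega_\epsilon$, with $\partial_\nu A^\epsilon = d_1\,\epsilon\,\psi$ on $\Gamma_\epsilon$ (only $u_1^\epsilon$ carries a non-homogeneous flux), $\partial_\nu A^\epsilon = 0$ on $\partial\Omega$, and $\rho^\epsilon(0,\cdot)=U_1$.

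Next I set up the dual problem. For an arbitrary $\Theta \ge 0$ in $L^2([0,T]\times\Omega_\epsilon)$, let $w$ solve the backward equation $-\partial_t w - a^\epsilon\,\Delta_x w = \Theta$ in $\Omega_\epsilon$, with homogeneous Neumann conditions on $\partial\Omega$ and on $\Gamma_\epsilon$ and terminal condition $w(T,\cdot)=0$. Pairing $\rho^\epsilon$ with $\Theta$, integrating by parts in $t$, inserting the equation for $\rho^\epsilon$ and applying Green's formula twice, all interior terms telescope and only the initial datum and the boundary flux survive, giving the identity
\[
\int_0^T\!\!\int_{\Omega_\epsilon}\rho^\epsilon\,\Theta\,dx\,dt = \int_{\Omega_\epsilon} U_1\,w(0,x)\,dx + d_1\,\epsilon\int_0^T\!\!\int_{\Gamma_\epsilon}\psi\,w\,d\sigma(y)\,dt.
\]
It then suffices to bound the right-hand side by $C\,\|\Theta\|_{L^2([0,T]\times\Omega_\epsilon)}$ with $C$ independent of $\epsilon$.

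The a priori bounds for $w$ come from testing the dual equation with $-\Delta_x w$: the term $\int_{\Omega_\epsilon}\partial_t w\,\Delta_x w$ produces $-\tfrac12\frac{d}{dt}\|\nabla_x w\|_{L^2(\Omega_\epsilon)}^2$ (the boundary term vanishing by the homogeneous Neumann condition), while $\int_{\Omega_\epsilon} a^\epsilon\,(\Delta_x w)^2$ is controlled from below since $a^\epsilon\ge D_0$, and a weighted Young inequality absorbs $\int \Theta\,\Delta_x w$. Integrating from $t$ to $T$ (where $\nabla_x w(T)=0$) gives uniform bounds on $\sup_t\|\nabla_x w(t)\|_{L^2}$ and $\|\Delta_x w\|_{L^2}$, whence on $\|\partial_t w\|_{L^2}$, $\sup_t\|w(t)\|_{L^2}$ and $\|w(0)\|_{L^2}$, all $\le C\,\|\Theta\|_{L^2}$ independently of $\epsilon$ (these energy identities are local and do not feel the perforations). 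Since $a^\epsilon$ is merely bounded measurable, this step is made rigorous by regularizing $a^\epsilon$ with smooth coefficients valued in $[D_0,D_1]$ and passing to the limit, as in Pierre's method. The first term in the identity is then immediately $\le U_1\,|\Omega|^{1/2}\,\|w(0)\|_{L^2}\le C\,\|\Theta\|_{L^2}$.

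The main obstacle is the uniform control of the boundary term, where the surface measure of $\Gamma_\epsilon$ blows up like $\epsilon^{-1}$ by (\ref{a}); here the $\epsilon$ prefactor is exactly what compensates. Writing $\epsilon\int_{\Gamma_\epsilon}|w| \le (\epsilon\,|\Gamma_\epsilon|_2)^{1/2}\,(\epsilon\int_{\Gamma_\epsilon}|w|^2)^{1/2}$ and invoking the scaled trace inequality for perforated domains from Appendix \ref{appA}, namely $\epsilon\int_{\Gamma_\epsilon}|w|^2\,d\sigma \le C(\|w\|_{L^2(\Omega_\epsilon)}^2 + \epsilon^2\|\nabla_x w\|_{L^2(\Omega_\epsilon)}^2)$, together with the boundedness of $\epsilon\,|\Gamma_\epsilon|_2$, the boundary term is dominated by $d_1\,\|\psi\|_{L^\infty}$ times $C\,(\int_0^T(\|w\|_{L^2}^2+\epsilon^2\|\nabla_x w\|_{L^2}^2)\,dt)^{1/2}\le C\,\|\Theta\|_{L^2}$, uniformly for $\epsilon\le 1$. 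Combining, $\int\!\int\rho^\epsilon\Theta\le C\,\|\Theta\|_{L^2}$ for every $\Theta\ge0$, and choosing $\Theta=\rho^\epsilon$ yields (\ref{1.8}). The remaining delicate point is to justify the infinite summation and the cancellation of the reaction terms rigorously: I would first carry out the computation for the truncated mass $\sum_{i=1}^N i\,u_i^\epsilon$, keeping track of the truncation errors, noting that the coagulation error has the favourable sign ($\le0$), controlling the fragmentation error via (\ref{1.7a}), and using that no gelation occurs thanks to (\ref{1.40}) and (\ref{diff}) so that this error vanishes as $N\to\infty$; monotone convergence then transfers the bound, uniform in $N$ and $\epsilon$, to the full mass $\rho^\epsilon$.
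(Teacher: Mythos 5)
Your proposal is correct and takes essentially the same route as the paper: the cancellation of the reaction terms collapses the system to the scalar parabolic problem for $\rho^{\epsilon}=\sum_i i\,u_i^{\epsilon}$, and the bound then follows from a Pierre--Schmitt duality argument with a backward dual problem with bounded measurable coefficient, energy estimates obtained by testing the dual equation with $\Delta_x w$, and the scaled trace inequality of Lemma \ref{lA.0} (together with the $\epsilon$ prefactor of the Neumann datum) to absorb the boundary term on $\Gamma_{\epsilon}$. The only cosmetic difference is that you run the duality with arbitrary data $\Theta\ge 0$ and conclude by taking $\Theta=\rho^{\epsilon}$, whereas the paper inserts $A^{\epsilon}\rho^{\epsilon}$ directly as the right-hand side of the dual problem and closes via the self-improving inequality $X\le C\,X^{1/2}$ for $X=\int_0^T\!\!\int_{\Omega_{\epsilon}}A^{\epsilon}(\rho^{\epsilon})^2\,dt\,dx$.
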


\begin{proof}

Let us consider the following fundamental identity (or weak formulation) of the
coagulation and fragmentation operators \cite{4}, \cite{5}:

\begin{equation} \label{1.9}
\sum_{i=1}^{\infty} \varphi_i \, Q_i^{\epsilon}= \frac{1}{2} 
\sum_{i=1}^{\infty} \, \sum_{j=1}^{\infty} a_{i,j} \, u_i^{\epsilon} \, 
u_j^{\epsilon} \, (\varphi_{i+j}-\varphi_i-\varphi_j) ,
\end{equation}

\begin{equation} \label{1.10}
\sum_{i=1}^{\infty} \varphi_i \, F_i^{\epsilon}=-\sum_{i=2}^{\infty}  B_i \,
u_i^{\epsilon} \bigg( \varphi_i-\sum_{j=1}^{i-1} \beta_{i,j} \, \varphi_j 
\bigg),
\end{equation}
which holds for any sequence of numbers $(\varphi_i)_{i \ge 1}$ such that the sums are defined.
By choosing $\varphi_i :=i$ above and thanks to (\ref{1.7}), we have the mass conservation property for the operators $Q_i^{\epsilon}$ and $F_i^{\epsilon}$:
\begin{equation} \label{1.11}
\sum_{i=1}^{\infty} i \, Q_i^{\epsilon}=\sum_{i=1}^{\infty} i \, F_i^{\epsilon}
=0.
\end{equation}
Therefore, summing together Eq. (\ref{1.1}) and Eq. (\ref{1.2}) multiplied by
$i$, taking into account the identity (\ref{1.11}), we get the (local in $x$) mass conservation property for the system:
\begin{equation} \label{1.12}
\frac{\partial}{\partial t} \bigg[ \sum_{i=1}^{\infty} i \, u_i^{\epsilon}
\bigg]-\Delta_x \bigg[ \sum_{i=1}^{\infty} i \, d_i \, u_i^{\epsilon} \bigg]=0.
\end{equation}
Denoting
\begin{equation} \label{1.13}
\rho^{\epsilon}(t,x)=\sum_{i=1}^{\infty} i \, u_i^{\epsilon}(t,x),
\end{equation}
and
\begin{equation} \label{1.14}
A^{\epsilon}(t,x)=[\rho^{\epsilon}(t,x)]^{-1} \sum_{i=1}^{\infty} i \, d_i \,
u_i^{\epsilon}(t,x) ,
\end{equation}
the following system can be derived from Eqs. (\ref{1.1}), (\ref{1.2}) and
(\ref{1.12}):

\begin{eqnarray} \label{1.15}
\begin{cases}
\frac{\displaystyle \partial{\rho^{\epsilon}}}{\displaystyle \partial t}-
\Delta_x (A^{\epsilon} \, \rho^{\epsilon})=0
& \text{in } [0,T] \times \Omega_{\epsilon}, \\
\\ 
\nabla_x (A^{\epsilon} \, \rho^{\epsilon})
\cdot n=0 & \text{on } [0,T] \times \partial\Omega, \\
\\
\nabla_x (A^{\epsilon} \, \rho^{\epsilon}) \cdot n=
d_1 \, \epsilon \, \psi(t, x, \frac{x}{\epsilon}) & \text{on } 
[0,T] \times \Gamma_{\epsilon} , \\
\\
\rho^{\epsilon}(0,x)=U_1 & \text{in } \Omega_{\epsilon} .
\end{cases}
\end{eqnarray}
We observe that (for all $t \in [0, T]$)
\begin{equation} \label{1.16}
\Vert A^{\epsilon}(t, \cdot) \Vert_{L^{\infty}(\Omega_{\epsilon})} \leq
\sup_i \, d_i.
\end{equation}
Multiplying the first equation in (\ref{1.15}) by the function $w^{\epsilon}$
defined by the following dual problem:

\begin{eqnarray} \label{1.17}
\begin{cases}
-\bigg( \frac{\displaystyle \partial{w^{\epsilon}}}{\displaystyle \partial t}+
A^{\epsilon} \, \Delta_x w^{\epsilon} \bigg)=A^{\epsilon} \, \rho^{\epsilon}
& \text{in } [0,T] \times \Omega_{\epsilon}, \\
\\ 
\nabla_x w^{\epsilon}
\cdot n=0 & \text{on } [0,T] \times \partial\Omega, \\
\\
\nabla_x w^{\epsilon} \cdot n=0 & \text{on } 
[0,T] \times \Gamma_{\epsilon} , \\
\\
w^{\epsilon}(T,x)=0 & \text{in } \Omega_{\epsilon} ,
\end{cases}
\end{eqnarray}
and integrating by parts on $[0,T] \times \Omega_{\epsilon}$, we end up with
the identity
\begin{equation} \label{1.18} 
\begin{split}
&\displaystyle \int_{0}^{T} \, \displaystyle \int_{\Omega_{\epsilon}}
A^{\epsilon}(t,x) \, (\rho^{\epsilon}(t,x))^2 \, dt \, dx=
\displaystyle \int_{\Omega_{\epsilon}} w^{\epsilon}(0,x) \, 
\rho^{\epsilon}(0,x) \, dx \\
&+\epsilon \, d_1 \, \displaystyle \int_{0}^{T} \, 
\displaystyle \int_{\Gamma_{\epsilon}} \psi(t, x, \frac{x}{\epsilon}) \,
w^{\epsilon}(t,x) \, dt \, d\sigma_{\epsilon}(x):=I_1+I_2,
\end{split}
\end{equation}
where $d\sigma_{\epsilon}$ is the measure on $\Gamma_{\epsilon}$.

Let us now estimate the terms $I_1$ and $I_2$.
From H\"older's inequality we obtain
\begin{equation} \label{1.19}
I_1=\displaystyle \int_{\Omega_{\epsilon}} w^{\epsilon}(0,x)  \, 
\rho^{\epsilon}(0,x) \, dx \leq U_1 \, \vert \Omega_{\epsilon} \vert^{1/2} \,
\Vert w^{\epsilon}(0, \cdot) \Vert_{L^2(\Omega_{\epsilon})}.
\end{equation}
Applying once more H\"older's inequality and using  estimate (\ref{1.16}), we get
\begin{equation} \label{1.20}
\begin{split}
&\displaystyle \int_{\Omega_{\epsilon}} \vert w^{\epsilon}(0,x) \vert^2 \,
dx=\displaystyle \int_{\Omega_{\epsilon}} 
\bigg \vert \displaystyle \int_{0}^{T} \sqrt{A^{\epsilon}} \,
\frac{\displaystyle \partial_t \, w^{\epsilon}}
{\displaystyle \sqrt{A^{\epsilon}}} \, dt \bigg \vert^2 \, dx \\ 
&\leq T \, \Vert A^{\epsilon} \Vert_{L^{\infty}(\Omega_{\epsilon})}
\displaystyle \int_{0}^{T} \, \displaystyle \int_{\Omega_{\epsilon}}
(A^{\epsilon})^{-1} \, \bigg \vert \frac{\partial}{\partial t} 
w^{\epsilon}(t,x) \bigg \vert^2 \, dt \, dx \\
& \leq T \, (\sup_i \, d_i) \displaystyle \int_{0}^{T} 
\displaystyle \int_{\Omega_{\epsilon}} (A^{\epsilon})^{-1} \,
\bigg \vert \frac{\partial}{\partial t} w^{\epsilon}(t,x) \bigg \vert^2 \, 
dt \, dx.
\end{split}
\end{equation}
By exploiting the dual problem (\ref{1.17}), Eq. (\ref{1.20}) becomes
\begin{equation} \label{1.21}
\begin{split}
&\displaystyle \int_{\Omega_{\epsilon}} \vert w^{\epsilon}(0,x) \vert^2 \,
dx \leq T \, (\sup_i \, d_i) \displaystyle \int_{0}^{T}
\displaystyle \int_{\Omega_{\epsilon}} (A^{\epsilon})^{-1} \,
\vert A^{\epsilon} \, \Delta_x w^{\epsilon}+A^{\epsilon} \, \rho^{\epsilon} 
\vert^2 \, dt \, dx \\
&\leq T \, (\sup_i \, d_i) \displaystyle \int_{0}^{T}
\displaystyle \int_{\Omega_{\epsilon}} (A^{\epsilon})^{-1} \,
\bigg[ 2 \, (A^{\epsilon})^2 \, (\Delta_x w^{\epsilon})^2+
2 \, (A^{\epsilon})^2 \, (\rho^{\epsilon})^2 \bigg] \, dt \, dx.
\end{split}
\end{equation}
Let us now estimate the first term on the right-hand side of (\ref{1.21}).
Multiplying the first equation in (\ref{1.17}) by ($\Delta_x w^{\epsilon}$), we see that
\begin{equation} \label{1.22}
\displaystyle \int_{\Omega_{\epsilon}} (\Delta_x w^{\epsilon}) \,
\bigg( \frac{\displaystyle \partial w^{\epsilon}}
{\displaystyle \partial t} \bigg) \, dx + 
\displaystyle \int_{\Omega_{\epsilon}} A^{\epsilon} \, 
(\Delta_x w^{\epsilon})^2 \, dx=-\displaystyle \int_{\Omega_{\epsilon}}
A^{\epsilon} \, \rho^{\epsilon} \, (\Delta_x w^{\epsilon}) \, dx ,
\end{equation}
and integrating by parts on $\Omega_{\epsilon}$, we get
\begin{equation} \label{1.23}
-\frac{\displaystyle \partial}{\displaystyle \partial t}
\displaystyle \int_{\Omega_{\epsilon}} \frac{\displaystyle \vert \nabla_x 
 w^{\epsilon} \vert^2}{\displaystyle 2} \, dx+ 
\displaystyle \int_{\Omega_{\epsilon}} A^{\epsilon} \, 
(\Delta_x w^{\epsilon})^2 \, dx=-\displaystyle \int_{\Omega_{\epsilon}}
A^{\epsilon} \, \rho^{\epsilon} (\Delta_x w^{\epsilon}) \, dx.
\end{equation}
Then, integrating once more over the time interval $[0,T]$ and using 
Young's inequality for the right-hand side of (\ref{1.23}), one finds that
\begin{equation} \label{1.24}
\displaystyle \int_{\Omega_{\epsilon}} \vert \nabla_x w^{\epsilon}(0,x) \vert^2
\, dx+\displaystyle \int_{0}^{T} \displaystyle \int_{\Omega_{\epsilon}}
A^{\epsilon} \, (\Delta_x w^{\epsilon})^2 \, dt \, dx \leq
\displaystyle \int_{0}^{T} \displaystyle \int_{\Omega_{\epsilon}}
(\rho^{\epsilon})^2 \, A^{\epsilon} \, dt \, dx.
\end{equation}
Since the first term of the left-hand side of (\ref{1.24}) is nonnegative,
we conclude that
\begin{equation} \label{1.25}
\displaystyle \int_{0}^{T} \displaystyle \int_{\Omega_{\epsilon}}
A^{\epsilon} \, (\Delta_x w^{\epsilon})^2 \, dt \, dx \leq
\displaystyle \int_{0}^{T} \displaystyle \int_{\Omega_{\epsilon}}
(\rho^{\epsilon})^2 \, A^{\epsilon} \, dt \, dx.
\end{equation}
Inserting Eq. (\ref{1.25}) into Eq. (\ref{1.21}), one obtains
\begin{equation} \label{1.26}
\displaystyle \int_{\Omega_{\epsilon}} \vert w^{\epsilon}(0,x) \vert^2 \, dx
\leq 4 \, T \, (\sup_i \, d_i) \displaystyle \int_{0}^{T}
\displaystyle \int_{\Omega_{\epsilon}} A^{\epsilon} \, (\rho^{\epsilon})^2 \, 
dt \, dx.
\end{equation}
Therefore, we end up with the estimate
\begin{equation} \label{1.27}
I_1 \leq 2 \, U_1 \, \bigg[ \vert \Omega_{\epsilon} \vert \, T \, \sup_i \, d_i
\bigg]^{1/2} \, \bigg[ \displaystyle \int_{0}^{T} 
\displaystyle \int_{\Omega_{\epsilon}} A^{\epsilon} \, (\rho^{\epsilon})^2
\, dt \, dx \bigg]^{1/2}.
\end{equation}
By using Lemma \ref{lA.0} of Appendix A and H\"older's inequality, the term $I_2$ in
(\ref{1.18}) can be rewritten as
\begin{equation} \label{1.28}
\begin{split}
&I_2=\epsilon \, d_1 \displaystyle \int_{0}^{T}
\displaystyle \int_{\Gamma_{\epsilon}} \psi(t, x, \frac{x}{\epsilon}) \,
w^{\epsilon}(t,x) \, dt \, d\sigma_{\epsilon}(x) \\
&\leq \sqrt{C_1\, \tilde{C}} \, d_1 \displaystyle \int_{0}^{T} \Vert \psi(t) \Vert_B \bigg \{
\bigg[ \displaystyle \int_{\Omega_{\epsilon}} \vert w^{\epsilon} \vert^2
\, dx \bigg]^{1/2}+\epsilon \bigg[ \displaystyle \int_{\Omega_{\epsilon}}
\vert \nabla_x w^{\epsilon} \vert^2 \, dx \bigg]^{1/2} \bigg \}, 
\end{split}
\end{equation}
where we have taken into account the following estimate (see Lemma \ref{lB.6} of Appendix B):

\begin{equation} \label{1.29}
\epsilon \, \displaystyle \int_{\Gamma_{\epsilon}}
\vert \psi (t,x,\frac{x}{\epsilon} ) \vert^2 \, d\sigma_{\epsilon}(x)
\leq {\tilde C} \, \Vert \psi (t) \Vert^2_B 
\end{equation}
(with ${\tilde C}$ being a positive constant independent of $\epsilon$ and
$B=C^1 [\overline{\Omega}; C^1_{\#} (Y)]$).
Note that we do not really need that $\psi$ be of class $C^1$ in the estimate above, 
continuity would indeed be sufficient.
\medskip

Since $\psi \in L^{\infty}([0,T];B)$, using the Cauchy-Schwarz inequality,
Eq. (\ref{1.28}) reads
\begin{equation} \label{1.30}
I_2 \leq C \, d_1 \, \Vert w^{\epsilon} 
\Vert_{L^2(0,T; L^2(\Omega_{\epsilon}))}
+C \, d_1 \, \epsilon \,
\Vert \nabla_x w^{\epsilon} \Vert_{L^2(0,T; L^2(\Omega_{\epsilon}))}:=J_1+J_2,
\end{equation}
where $C>0$ is a constant independent of $\epsilon$.
Let us now estimate the terms $J_1$ and $J_2$.
Using H\"older's inequality and estimate (\ref{1.16}), by following the
same strategy as the one leading  to (\ref{1.26}), we get
\begin{equation} \label{1.31}
\begin{split}
&\displaystyle \int_{0}^{T} \displaystyle \int_{\Omega_{\epsilon}}
\vert w^{\epsilon}(t,x) \vert^2 \, dt \, dx =
\displaystyle \int_{0}^{T} \displaystyle \int_{\Omega_{\epsilon}}
\bigg \vert \displaystyle \int_{t}^{T} \sqrt{A^{\epsilon}} \,
\frac{\displaystyle \partial_s w^{\epsilon}(s,x)}
{\displaystyle \sqrt{A^{\epsilon}}} \, ds \bigg \vert^2 \, dt \, dx \\
&\leq T^2 \, (\sup_i \, d_i) \displaystyle \int_{0}^{T} 
\displaystyle \int_{\Omega_{\epsilon}} (A^{\epsilon})^{-1} \,
\bigg \vert \frac{\displaystyle \partial w^{\epsilon}}
{\displaystyle \partial t} (t,x) \bigg \vert^2 \, dt \, dx \\
&\leq 
4 \, T^2 \, (\sup_i \, d_i) \displaystyle \int_{0}^{T}
\displaystyle \int_{\Omega_{\epsilon}} A^{\epsilon} \, (\rho^{\epsilon})^2
\, dt \, dx,
\end{split}
\end{equation}
so that
\begin{equation} \label{1.32}
\begin{split}
&J_1=C \, d_1 \bigg[ \displaystyle \int_{0}^{T} \displaystyle 
\int_{\Omega_{\epsilon}} \vert w^{\epsilon}(t,x) \vert^2 \, dt \, dx 
\bigg]^{1/2} \\
&\leq 2 \, C \, d_1 \, T \, (\sup_i \, d_i)^{1/2} \bigg[
\displaystyle \int_{0}^{T} \displaystyle \int_{\Omega_{\epsilon}} 
A^{\epsilon} \, (\rho^{\epsilon})^2 \, dt \, dx \bigg]^{1/2}.
\end{split}
\end{equation}
In order to estimate $J_2$, we go back to Eq. (\ref{1.23}).
Integrating over $[t,T]$, one obtains
\begin{equation} \label{1.33}
\begin{split}
&\frac{\displaystyle 1}{\displaystyle 2} \displaystyle \int_{t}^{T}
\displaystyle \int_{\Omega_{\epsilon}} \frac{\partial}{\partial s}
\vert \nabla_x w^{\epsilon}(s,x) \vert^2 \, ds \, dx -
 \displaystyle \int_{t}^{T} \displaystyle \int_{\Omega_{\epsilon}}
A^{\epsilon} \, (\Delta_x w^{\epsilon})^2 \, ds \, dx \\
&=\displaystyle \int_{t}^{T} \displaystyle \int_{\Omega_{\epsilon}}
A^{\epsilon} \, \rho^{\epsilon} \, (\Delta_x w^{\epsilon}) \, ds \, dx.
\end{split}
\end{equation}
 Young's inequality applied to the right-hand side of Eq. (\ref{1.33}) leads to
\begin{equation} \label{1.34}
\displaystyle \int_{\Omega_{\epsilon}} \vert \nabla_x w^{\epsilon}(t,x) 
\vert^2 \, dx+ \displaystyle \int_{t}^{T}
\displaystyle \int_{\Omega_{\epsilon}} A^{\epsilon} \, 
(\Delta_x w^{\epsilon})^2 \, ds \, dx \leq \displaystyle \int_{t}^{T}
\displaystyle \int_{\Omega_{\epsilon}} A^{\epsilon} \, (\rho^{\epsilon})^2
\, ds \, dx.
\end{equation}
Taking into account that the second term on the left-hand side of 
(\ref{1.34}) is nonnegative and integrating once more over time, we get
\begin{equation} \label{1.35}
\displaystyle \int_{0}^{T} \displaystyle \int_{\Omega_{\epsilon}}
\vert \nabla_x w^{\epsilon}(t,x) \vert^2 \, dt \, dx \leq T \,
\displaystyle \int_{0}^{T} \displaystyle \int_{\Omega_{\epsilon}}
A^{\epsilon} \, (\rho^{\epsilon})^2 \, dt \, dx.
\end{equation}
Therefore, we conclude that
\begin{equation} \label{1.36}
\begin{split}
J_2=&C \, d_1 \, \epsilon \, \bigg[ \displaystyle \int_{0}^{T}
\displaystyle \int_{\Omega_{\epsilon}} \vert \nabla_x w^{\epsilon}(t,x) 
\vert^2 \, dt \, dx \bigg]^{1/2} \\
&\leq C \, d_1 \, \epsilon \, (T)^{1/2} \, \bigg[ 
\displaystyle \int_{0}^{T} \displaystyle \int_{\Omega_{\epsilon}}
A^{\epsilon} \, (\rho^{\epsilon})^2 \, dt \, dx \bigg]^{1/2}.
\end{split}
\end{equation}
By combining (\ref{1.32}) and (\ref{1.36}), we end up with the estimate

\begin{equation} \label{1.37}
I_2 \leq d_1 \, \bigg[ 2 \, C \, T \, \sqrt{\sup_i \, d_i}+
C \, \epsilon \,
\sqrt{T} \bigg] \, \bigg[ \displaystyle \int_{0}^{T}
\displaystyle \int_{\Omega_{\epsilon}} A^{\epsilon} \, (\rho^{\epsilon})^2
\, dt \, dx \bigg]^{1/2}.
\end{equation}
Hence, inserting estimates (\ref{1.27}) and (\ref{1.37}) in 
Eq. (\ref{1.18}), one obtains
\begin{equation} \label{1.38}
 \displaystyle \int_{0}^{T} \displaystyle \int_{\Omega_{\epsilon}}
A^{\epsilon}(t,x) \, (\rho^{\epsilon}(t,x))^2 \, dt \, dx 
\leq C_3^2 ,
\end{equation}
where
\begin{equation} \label{1.39}
C_3=\max \bigg( 2 \, U_1 \, \sqrt{\vert \Omega_{\epsilon} \vert \, T \,
\sup_i \, d_i}, d_1 \, [2 \, C \, T \, \sqrt{\sup_i \, d_i}+C \,
\sqrt{T}] \bigg).
\end{equation}
Thus, recalling the definitions of $A^{\epsilon}$ and $\rho^{\epsilon}$, and using the lower bound on the diffusion rates in Assumption C,
the assertion of the Lemma  immediately follows.

\end{proof}

\begin{corollary} \label{cor22}
 Let $\Omega_{\epsilon}$ be an open set satisfying Assumption $0$.
 Under Assumptions A, B and C, 
the following bound holds for all classical solutions of (\ref{1.1}), (\ref{1.2}), when $i \ge 1$:
\begin{equation} \label{1.41}
\displaystyle \int_{0}^{T} \displaystyle \int_{\Omega_{\epsilon}}
\bigg \vert \sum_{j=1}^{\infty}  a_{i,j} \, u_j^{\epsilon}(t,x) \bigg \vert^2
\, dt \, dx \leq C_i,
\end{equation}
where $C_i$ does not depend on $\epsilon$ (but may depend on $i$).
\end{corollary}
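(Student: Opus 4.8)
The plan is to dominate the weighted tail $\sum_{j=1}^\infty a_{i,j}\, u_j^{\epsilon}$ pointwise by a multiple of the mass density $\rho^{\epsilon}=\sum_{j=1}^\infty j\, u_j^{\epsilon}$, and then to read off the desired $L^2$-bound directly from the estimate (\ref{1.8}) of Lemma \ref{l1.1}. The only structural input needed beyond Lemma \ref{l1.1} is the subcritical growth of the coagulation coefficients supplied by Assumption C.

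First I would establish the pointwise comparison. Because the solution is nonnegative ($u_j^{\epsilon}\ge 0$, from Theorem \ref{t2.1}) and the exponent in (\ref{1.40}) satisfies $1-\zeta\le 1$, the bound $a_{i,j}\le C\,(i+j)^{1-\zeta}$ yields $a_{i,j}\le C\,(i+j)$ for all $i,j\ge 1$. Combining this with the elementary inequality $i+j\le (i+1)\,j$, valid for every $j\ge 1$ since $i\ge 1$, I would obtain, for every $(t,x)$,
$$0\le \sum_{j=1}^\infty a_{i,j}\, u_j^{\epsilon}(t,x)\le C\,(i+1)\sum_{j=1}^\infty j\, u_j^{\epsilon}(t,x)=C\,(i+1)\,\rho^{\epsilon}(t,x),$$
with a constant independent of $\epsilon$.

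Squaring this inequality and integrating over $[0,T]\times\Omega_{\epsilon}$ then gives
$$\int_0^T\!\!\int_{\Omega_{\epsilon}}\bigg|\sum_{j=1}^\infty a_{i,j}\, u_j^{\epsilon}\bigg|^2\,dt\,dx\le C^2\,(i+1)^2\int_0^T\!\!\int_{\Omega_{\epsilon}}(\rho^{\epsilon})^2\,dt\,dx,$$
and the last integral is bounded by the $\epsilon$-independent constant of Lemma \ref{l1.1}. Taking $C_i:=C^2\,(i+1)^2\,C$ (with $C$ the constant from (\ref{1.8})) finishes the argument, the $i$-dependence being harmless and the $\epsilon$-uniformity inherited verbatim from Lemma \ref{l1.1}. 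I do not expect any genuine obstacle: the corollary is a soft consequence of the duality estimate, the one point deserving a line of care being the reduction of $a_{i,j}$ to the linear weight $j$, which is precisely where the constraint $\zeta\in\,]0,1]$ on the coagulation rate (through $1-\zeta\le 1$) is used.
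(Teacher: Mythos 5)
Your proposal is correct and follows essentially the same route as the paper: the paper's proof also consists of invoking the duality estimate (\ref{1.8}) for $\rho^{\epsilon}=\sum_j j\,u_j^{\epsilon}$ and then concluding via the growth bound (\ref{1.40}) of Assumption C, exactly the pointwise domination $a_{i,j}\le C(i+j)^{1-\zeta}\le C(i+1)\,j$ that you spell out. Your write-up merely makes explicit the elementary inequalities (and the use of nonnegativity) that the paper leaves implicit.
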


\begin{proof} Thanks to estimate (\ref{1.8}), we see that
$$ \displaystyle \int_{0}^{T} \displaystyle \int_{\Omega_{\epsilon}}
\bigg \vert \sum_{j=1}^{\infty} j\, u_j^{\epsilon}(t,x) \bigg \vert^2
\, dt \, dx \leq C. $$
We conclude using estimate (\ref{1.40}) of Assumption C.
\end{proof} 

\begin{remark}
We first notice that in order to get Corollary \ref{cor22} (and the results of this section which use it),
it would be sufficient to assume that $a_{i,j} \le C\, (i+j)$. 
We will however need the more stringent estimate (\ref{1.40}) of Assumption C in the proof of the homogenization result in next section. Note that
this Assumption ensures that no gelation occurs in the coagulation-fragmentation process that we consider (cf. \cite{5}).
\par 
We also could relax the hypothesis that the diffusion rates $d_i$ be bounded below (and replace it by the assumption that $d_i$ behaves as a (negative) power law), provided that the assumption on the growth 
coefficients $a_{i,j}$ be made more stringent (cf. \cite{4}).
In that situation, the duality lemma reads
$$ \displaystyle \int_{0}^{T} \displaystyle \int_{\Omega_{\epsilon}}
\bigg[ \sum_{i=1}^{\infty} i \, d_i \, u_i^{\epsilon}(t, x) \bigg] \,
\, \bigg[ \sum_{i=1}^{\infty} i \, u_i^{\epsilon}(t, x) \bigg] \, dt \, dx
\leq C . $$
\end{remark}

We now turn to $L^{\infty}$ estimates. We start with the

\begin{lemma} \label{l1.2}
 Let $\Omega_{\epsilon}$ be an open set satisfying Assumption $0$.
We also suppose that Assmptions A, B, and C hold. We finally
consider $T >0$,  and a classical 
solution $u_i^{\epsilon}$ ($i\ge 1$), of (\ref{1.1}) - (\ref{1.2}).
Then, the following estimate holds:
\begin{equation} \label{1.42} 
{\Vert u_1^{\epsilon} \Vert}_{L^{\infty}
(0, T; L^{\infty} (\Omega_{\epsilon}))}
\leq 
|U_1|+ \Vert u_1^\epsilon \Vert_{L^{\infty} (0, T ; 
L^{\infty} (\Gamma_{\epsilon}))} +  \gamma_1 + 1.
\end{equation}

\end{lemma}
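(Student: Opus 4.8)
The plan is to show that, after a constant shift, the equation for $u_1^{\epsilon}$ is governed by a parabolic operator with a nonnegative zeroth-order coefficient, for which a truncation (or maximum-principle) argument yields the stated pointwise bound. The decisive first step is to use Assumption C to convert the fragmentation source into a term with a favourable sign. Reindexing the right-hand side of the first equation in (\ref{1.1}) by $k=1+j$ and invoking (\ref{1.7a}) with $m=1$ (so $B_k\,\beta_{k,1}\le\gamma_1\,a_{1,k}$ for $k\ge2$), and recalling $u_j^{\epsilon}\ge0$, $a_{1,j}\ge0$, one gets
\[
\sum_{j=1}^{\infty}B_{1+j}\,\beta_{1+j,1}\,u_{1+j}^{\epsilon}=\sum_{k=2}^{\infty}B_k\,\beta_{k,1}\,u_k^{\epsilon}\le\gamma_1\sum_{k=2}^{\infty}a_{1,k}\,u_k^{\epsilon}\le\gamma_1\,S^{\epsilon},\qquad S^{\epsilon}:=\sum_{j=1}^{\infty}a_{1,j}\,u_j^{\epsilon}\ge0,
\]
where $S^{\epsilon}$ is finite (in fact in $L^2$ by Corollary \ref{cor22}). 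Since the coagulation loss term in (\ref{1.1}) is exactly $u_1^{\epsilon}\,S^{\epsilon}$, the choice $v^{\epsilon}:=u_1^{\epsilon}-\gamma_1$ gives
\[
\frac{\partial v^{\epsilon}}{\partial t}-d_1\,\Delta_x v^{\epsilon}+S^{\epsilon}\,v^{\epsilon}\le0\quad\text{in }[0,T]\times\Omega_{\epsilon},
\]
with the boundary data of $u_1^{\epsilon}$: homogeneous Neumann on $\partial\Omega$, $\partial_\nu v^{\epsilon}=\epsilon\,\psi$ on $\Gamma_{\epsilon}$, and $v^{\epsilon}(0,\cdot)=U_1-\gamma_1$.

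The second step is a Stampacchia-type truncation. Fix the level
\[
k:=\max\Big\{0,\;U_1-\gamma_1,\;\|u_1^{\epsilon}\|_{L^{\infty}(0,T;L^{\infty}(\Gamma_{\epsilon}))}-\gamma_1\Big\}\ge0,
\]
multiply the differential inequality by $(v^{\epsilon}-k)_+$ and integrate over $\Omega_{\epsilon}$. Integrating the Laplacian by parts and using the two Neumann conditions, the only boundary contribution is $d_1\,\epsilon\int_{\Gamma_{\epsilon}}\psi\,(v^{\epsilon}-k)_+\,d\sigma_{\epsilon}$; but on $\Gamma_{\epsilon}$ one has $v^{\epsilon}\le\|u_1^{\epsilon}\|_{L^{\infty}(\Gamma_{\epsilon})}-\gamma_1\le k$, so $(v^{\epsilon}-k)_+\equiv0$ there and this term drops. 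Moreover, on $\{v^{\epsilon}>k\}$ one has $v^{\epsilon}>k\ge0$, whence $S^{\epsilon}v^{\epsilon}(v^{\epsilon}-k)_+\ge0$ may also be discarded. One is left with
\[
\frac{d}{dt}\,\frac12\int_{\Omega_{\epsilon}}(v^{\epsilon}-k)_+^2\,dx+d_1\int_{\Omega_{\epsilon}}|\nabla_x(v^{\epsilon}-k)_+|^2\,dx\le0.
\]
Since $v^{\epsilon}(0,\cdot)=U_1-\gamma_1\le k$ makes the initial truncation vanish, integration in time forces $(v^{\epsilon}-k)_+\equiv0$, i.e. $u_1^{\epsilon}\le k+\gamma_1=\max\{\gamma_1,U_1,\|u_1^{\epsilon}\|_{L^{\infty}(\Gamma_{\epsilon})}\}$. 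Combined with $u_1^{\epsilon}\ge0$, this already gives (\ref{1.42}), the additive $+1$ being harmless slack.

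I expect the main obstacle to be the interaction between the maximum-principle mechanism and the non-homogeneous Neumann flux $\epsilon\,\psi$ on $\Gamma_{\epsilon}$: a naive comparison with a constant supersolution fails because mass is injected through the holes. The resolution — and the reason the trace $\|u_1^{\epsilon}\|_{L^{\infty}(\Gamma_{\epsilon})}$ appears in (\ref{1.42}) — is precisely to take the truncation level above that trace, so that the flux integrates against $(v^{\epsilon}-k)_+=0$ and disappears; note that no sign condition on $\psi$ is needed, since the boundary is controlled through the trace rather than through the flux. A secondary, purely technical point is to justify the truncation computation for the classical (hence smooth) solution and the finiteness of $S^{\epsilon}$, the latter guaranteed by Corollary \ref{cor22}; alternatively, the same conclusion follows from a pointwise maximum principle applied to $v^{\epsilon}-\delta t$, using Hopf's lemma to exclude a positive maximum on $\partial\Omega$ and letting $\delta\to0$.
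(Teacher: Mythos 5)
Your proof is correct, but it takes a genuinely different route from the paper. The paper proves Lemma \ref{l1.2} by a Moser-type $L^p$ argument: it tests the equation with $p\,(u_1^{\epsilon})^{p-1}$, splits the reaction terms using (\ref{1.7a}) into a sign-favourable part (where $u_1^{\epsilon}\ge\gamma_1$) and a bad region (where $u_1^{\epsilon}<\gamma_1$) whose contribution is bounded by $p\,\gamma_1^p\int\!\!\int\sum_{j\ge 2}a_{1,j}u_j^{\epsilon}$, controls that term quantitatively via the duality estimate (\ref{1.41}) of Corollary \ref{cor22}, bounds the boundary flux term by $1+(u_1^{\epsilon})^p$ together with the boundedness of $\epsilon\,\vert\Gamma_{\epsilon}\vert$, and finally lets $p\to\infty$. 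Your argument replaces all of this with a single Stampacchia truncation: the shift $v^{\epsilon}=u_1^{\epsilon}-\gamma_1$ turns (\ref{1.7a}) into the clean differential inequality $\partial_t v^{\epsilon}-d_1\Delta_x v^{\epsilon}+S^{\epsilon}v^{\epsilon}\le 0$, and choosing the truncation level above both $U_1-\gamma_1$ and the boundary trace (minus $\gamma_1$) makes the Neumann flux term vanish identically and keeps the zeroth-order term nonnegative on $\{v^{\epsilon}>k\}$, so no quantitative control of the infinite sums is needed at all — Corollary \ref{cor22} enters only to justify that $S^{\epsilon}$ is a genuine (integrable) function, not as an estimate. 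What your approach buys: it is shorter, avoids the duality bound and the $p\to\infty$ limit, and in fact yields the slightly sharper conclusion $u_1^{\epsilon}\le\max\{\gamma_1,\,U_1,\,\Vert u_1^{\epsilon}\Vert_{L^{\infty}(0,T;L^{\infty}(\Gamma_{\epsilon}))}\}$, with no additive slack. What the paper's approach buys: the same $L^p$ machinery (test functions $p\,u^{p-1}$, duality, case analysis) is reused almost verbatim in Lemma \ref{l1.5} for the components $i\ge 2$, where the quadratic gain term $\frac12\sum_{j<i}a_{i-j,j}u_j^{\epsilon}u_{i-j}^{\epsilon}$ does not have a favourable sign and a pure truncation at a constant level no longer closes by sign considerations alone, so the paper's uniform template covers all $i$ at once. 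One point you should make explicit rather than implicit: both your drop of $S^{\epsilon}v^{\epsilon}(v^{\epsilon}-k)_+$ and your bound on the fragmentation source use $u_j^{\epsilon}\ge 0$ for all $j$, i.e.\ nonnegativity of the classical solution; the paper uses this tacitly as well, so it is not a gap relative to the paper, but it is an assumption your write-up leans on.
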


\begin{proof}

Let us test the first equation of (\ref{1.1}) with the function
$$\phi_1 := p \, (u_1^{\epsilon})^{(p-1)} \; \; \; \; p \geq 2.$$
We stress that the function $\phi_1 $ is strictly positive  and continuously
differentiable on $[0,t] \times \overline\Omega$, for all $t>0$.
Integrating, the divergence theorem yields

\begin{equation} \label{1.43}
\begin{split}
&\displaystyle \int_{0}^{t} ds \, \displaystyle \int_{\Omega_{\epsilon}}
\frac{\partial}{\partial s} (u_1^{\epsilon})^{p} (s) \, dx +
d_1 \, p \, (p-1) \, \displaystyle \int_{0}^{t} ds \,
\displaystyle \int_{\Omega_{\epsilon}} \vert \nabla_x u_1^{\epsilon}
\vert^2 \, (u_1^{\epsilon})^{(p-2)} \, dx \\
&=-p \, \displaystyle \int_{0}^{t} ds \, \displaystyle \int_{\Omega_{\epsilon}}
a_{1,1} \, (u_1^{\epsilon})^{(p+1)} \, dx 
-p \, \displaystyle \int_{0}^{t} ds \, \displaystyle \int_{\Omega_{\epsilon}}
(u_1^{\epsilon})^{p} \, \sum_{j=2}^{\infty} a_{1,j} \, u_j^{\epsilon} \,dx \\
& +p \, \displaystyle \int_{0}^{t} ds \, \displaystyle \int_{\Omega_{\epsilon}}
(u_1^{\epsilon})^{(p-1)} \, \sum_{j=2}^{\infty} B_j \, \beta_{j,1} \,
 u_j^{\epsilon} \,dx +
\epsilon \, d_1 \, p \, \displaystyle \int_{0}^{t} ds \, 
\displaystyle \int_{\Gamma_{\epsilon}} \psi(s, x, \frac{x}{\epsilon}) \,
(u_1^{\epsilon})^{(p-1)} \, d\sigma_{\epsilon}(x) \\
& \leq -p \, \displaystyle \int_{0}^{t} ds \, 
\displaystyle \int_{\Omega_{\epsilon}} \sum_{j=2}^{\infty} [a_{1,j} \, 
u_1^{\epsilon}-B_j \, \beta_{j,1}] \, u_j^{\epsilon} \, 
(u_1^{\epsilon})^{(p-1)} \, dx \\
&+\epsilon \, d_1 \, p \, \displaystyle \int_{0}^{t} ds \,
\displaystyle \int_{\Gamma_{\epsilon}} \psi(s, x, \frac{x}{\epsilon}) \,
(u_1^{\epsilon})^{(p-1)} \, d\sigma_{\epsilon}(x).
\end{split}
\end{equation}
Exploiting Assumption C,
 we end up with the estimate
\begin{equation} \label{1.44}
\begin{split}
&\displaystyle \int_{0}^{t} ds \, \displaystyle \int_{\Omega_{\epsilon}}
\frac{\partial}{\partial s} (u_1^{\epsilon})^{p} (s) \, dx +
d_1 \, p \, (p-1) \, \displaystyle \int_{0}^{t} ds \,
\displaystyle \int_{\Omega_{\epsilon}} \vert \nabla_x u_1^{\epsilon}
\vert^2 \, (u_1^{\epsilon})^{(p-2)} \, dx \\
&\leq \epsilon \, d_1 \, p \, \displaystyle \int_{0}^{t} ds \,
\displaystyle \int_{\Gamma_{\epsilon}} \psi(s, x, \frac{x}{\epsilon}) \,
(u_1^{\epsilon})^{(p-1)} \, d\sigma_{\epsilon}(x)\\
& + p\,\gamma_1^p \int_{0}^{t} ds \, \displaystyle \int_{\Omega_{\epsilon}} 
\sum_{j=2}^{\infty} a_{1,j}\,  u_j^{\epsilon} \,\, dx .
\end{split}
\end{equation}
H\"older's inequality applied to the right-hand side of (\ref{1.44}),
together with  the duality estimate  (\ref{1.41}), leads
to

\begin{equation} \label{1.45}
\begin{split}
&\displaystyle \int_{\Omega_{\epsilon}} (u_1^{\epsilon}(t,x))^{p} \,dx
+d_1 \, p \, (p-1) \, \displaystyle \int_{0}^{t} ds \,
\displaystyle \int_{\Omega_{\epsilon}} \vert \nabla_x u_1^{\epsilon}
\vert^2 \, (u_1^{\epsilon})^{(p-2)} \, dx \\
&\leq \displaystyle \int_{\Omega_{\epsilon}} U_1^p \, dx+
\epsilon \, d_1 \, p \, \Vert \psi \Vert_{L^{\infty} (0,T; 
L^{\infty}(\Gamma_{\epsilon}))} \, \displaystyle \int_{0}^{t} ds \,
\displaystyle \int_{\Gamma_{\epsilon}} (u_1^{\epsilon})^{(p-1)} \, 
d\sigma_{\epsilon}(x) + C\,  p\,\gamma_1^p\,
|\Omega_{\epsilon}|^{1/2} . 
\end{split}
\end{equation}
Since the second term of the left-hand side of (\ref{1.45}) is nonnegative,
one gets
\begin{equation} \label{1.46}
\begin{split}
&\displaystyle \int_{\Omega_{\epsilon}} (u_1^{\epsilon}(t,x))^{p} \,dx
\leq \displaystyle \int_{\Omega_{\epsilon}} U_1^p \, dx \\
&+\epsilon \, d_1 \, p \, \Vert \psi \Vert_{L^{\infty} (0,T;
L^{\infty}(\Gamma_{\epsilon}))} \, \displaystyle \int_{0}^{t} ds \,
\displaystyle \int_{\Gamma_{\epsilon}} [1+(u_1^{\epsilon})^{p}] \,
d\sigma_{\epsilon}(x) + C\,  p\,\gamma_1^p\, |\Omega_{\epsilon}|^{1/2} \\
&\leq \displaystyle \int_{\Omega_{\epsilon}} U_1^p \, dx +
\epsilon \, d_1 \, p \, \Vert \psi \Vert_{L^{\infty} (0,T;
L^{\infty}(\Gamma_{\epsilon}))} \, T \, \vert \Gamma_{\epsilon} \vert \\
&+\epsilon \, d_1 \, p \, \Vert \psi \Vert_{L^{\infty} (0,T;
L^{\infty}(\Gamma_{\epsilon}))} \, \displaystyle \int_{0}^{t} ds \,
\displaystyle \int_{\Gamma_{\epsilon}} (u_1^{\epsilon})^{p} \,
d\sigma_{\epsilon}(x) + C\,  p\,\gamma_1^p\,
|\Omega|^{1/2}.
\end{split}
\end{equation}
Hence, we conclude that
\begin{equation} \label{1.47}
\sup_{t \in [0, T]} \lim_{p\to\infty} \bigg[
\displaystyle \int_{\Omega_{\epsilon}} (u_1^{\epsilon} (t, x))^{p} \,
dx \bigg]^{1/p} \leq \vert U_1 \vert +
\Vert u_1^\epsilon \Vert_{L^{\infty} (0, T ; 
L^{\infty} (\Gamma_{\epsilon}))} +  \gamma_1 + 1.
\end{equation}
\end{proof}
The boundedness of $u_1^{\epsilon}$ in $L^{\infty} ([0, T] \times
\Gamma_{\epsilon})$, uniformly in $\epsilon$, can then be immediately deduced
from Lemma \ref{l1.3} below.

\begin{lemma} \label{l1.3}
Let $\Omega_{\epsilon}$ be an open set satisfying Assumption $0$.
We also suppose that Assmptions A, B, and C hold. We finally
consider $T >0$,  and a classical 
solution $u_i^{\epsilon}$ ($i\ge 1$), of (\ref{1.1}) - (\ref{1.2}).
Then, for $\epsilon >0$ small enough,
\begin{equation} \label{1.48} 
{\Vert u_1^{\epsilon} \Vert}_{L^{\infty}
(0, T; L^{\infty} (\Gamma_{\epsilon}))}
\leq C,
\end{equation}
where $C$ does not depend on $\epsilon$.

\end{lemma}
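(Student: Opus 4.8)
The plan is to establish the trace bound by a comparison argument, constructing an explicit $\epsilon$-independent supersolution for $u_1^{\epsilon}$. First I would rewrite the first equation of \eqref{1.1} as $\pa_t u_1^{\epsilon} - d_1\Delta_x u_1^{\epsilon} = f^{\epsilon}$ with $f^{\epsilon} = -u_1^{\epsilon}\sum_{j\ge1}a_{1,j}u_j^{\epsilon} + \sum_{j\ge1}B_{1+j}\beta_{1+j,1}u_{1+j}^{\epsilon}$, and combine the structural bound \eqref{1.7a} of Assumption C with the nonnegativity of the $u_j^{\epsilon}$ (Theorem \ref{t2.1}) to get the key pointwise inequality $f^{\epsilon}\le S^{\epsilon}(\gamma_1-u_1^{\epsilon})$, where $S^{\epsilon}:=\sum_{j\ge1}a_{1,j}u_j^{\epsilon}\ge0$. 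The point is that the fragmentation gain into monomers is dominated by $\gamma_1$ times the coagulation loss, so that the reaction is non-positive as soon as $u_1^{\epsilon}>\gamma_1$.

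Next I would construct a cell corrector $\beta$ solving the Neumann problem $-\Delta_y\beta=\lambda$ in $Y^{\ast}$, $\nabla_y\beta\cdot n=\Psi$ on $\Gamma$ (with $n$ the normal pointing into the hole), where $\Psi:=\|\psi\|_{L^{\infty}}$ is a constant and $\lambda=-\Psi\,|\Gamma|_2/|Y^{\ast}|<0$ is fixed by the compatibility condition; $\beta$ is smooth and bounded independently of $\epsilon$. I then set the candidate supersolution
$$ v(t,x)=M+Ct+\epsilon^2\,\eta(x)\,\beta(x/\epsilon), $$
where $\eta$ is a smooth macroscopic cutoff equal to $1$ on a neighborhood of all the holes and vanishing near $\pa\Omega$ (possible thanks to the security-zone hypothesis in Assumption 0). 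The scaling $\epsilon^2$ is chosen so that on $\Gamma_{\epsilon}$ one has $\pa_{\nu}v=\epsilon\,\nabla_y\beta\cdot n=\epsilon\Psi\ge\epsilon\psi=\pa_{\nu}u_1^{\epsilon}$, while $\pa_{\nu}v=0$ on $\pa\Omega$ because $v$ is constant in $x$ there. In the bulk $\pa_t v-d_1\Delta_x v=C+d_1\lambda$ (plus $O(\epsilon)$ terms supported in the cutoff layer), so choosing $C>-d_1\lambda=d_1\Psi\,|\Gamma|_2/|Y^{\ast}|$ makes this strictly positive for $\epsilon$ small, and choosing $M>\max(\gamma_1,U_1)$ guarantees both $v(0,\cdot)\ge U_1=u_1^{\epsilon}(0,\cdot)$ and $v\ge\gamma_1$ throughout.

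With these choices $v$ is a supersolution: at an interior positive maximum of $z:=u_1^{\epsilon}-v$ one has $u_1^{\epsilon}=v+z>v\ge\gamma_1$, hence $f^{\epsilon}\le S^{\epsilon}(\gamma_1-u_1^{\epsilon})\le0<\pa_t v-d_1\Delta_x v$, contradicting $\pa_t z-d_1\Delta_x z\ge0$; the boundary inequalities $\pa_{\nu}z\le0$ on $\pa\Omega_{\epsilon}$ rule out a boundary maximum via Hopf's lemma. The parabolic maximum principle for the Neumann problem then yields $u_1^{\epsilon}\le v$ on $[0,T]\times\overline{\Omega_{\epsilon}}$, and since $0\le u_1^{\epsilon}$ and $\sup v\le M+CT+\epsilon^2\|\beta\|_{\infty}$, restricting to $\Gamma_{\epsilon}$ gives \eqref{1.48} with an $\epsilon$-independent constant for $\epsilon$ small.

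The main obstacle is the simultaneous treatment of the two boundary conditions: the corrector must reproduce the microscopic flux $\epsilon\psi$ on $\Gamma_{\epsilon}$ yet not spoil the homogeneous Neumann condition on $\pa\Omega$. This is precisely what the cutoff $\eta$ together with the security zone handle, at the cost of controlling the extra $O(\epsilon)$ terms generated by $\nabla\eta$ and $\Delta\eta$ in the transition layer; these are harmless for $\epsilon$ small because the leading part of $\pa_t v-d_1\Delta_x v$ there is bounded below by the fixed positive number $C+d_1\lambda$. A secondary technical point is the rigorous justification of the comparison principle for a Neumann problem whose reaction involves the infinite sum $S^{\epsilon}$; this is legitimate because we work with a fixed classical solution, so $S^{\epsilon}$ is a given nonnegative function and the argument reduces to a scalar comparison for $u_1^{\epsilon}$ alone.
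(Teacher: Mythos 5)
Your route is genuinely different from the paper's. The paper proves this lemma by a De~Giorgi-type truncation argument: it tests the equation with $(u_1^{\epsilon}-k)_{+}$ for $k\ge \hat k=\max(\Vert\psi\Vert_{L^{\infty}(0,T;B)},\gamma_1)$, absorbs the volume term via the anisotropic Sobolev inequality of Lemma~\ref{lA.2} on a short time interval $T_1$, arrives at the level-set estimate \eqref{2.18}, invokes Prop.~\ref{t1.1} (imported from \cite{10}) to convert that estimate into an $L^{\infty}$ bound on $\Gamma_{\epsilon}$, and finally iterates over time windows of length $T_1$. You replace all of this by a pointwise comparison with an explicit two-scale supersolution $M+Ct+\epsilon^2\eta(x)\beta(x/\epsilon)$. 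Your structural inequality $f^{\epsilon}\le S^{\epsilon}(\gamma_1-u_1^{\epsilon})$ is correct (it follows from \eqref{1.7a} and nonnegativity, and is exactly the mechanism the paper uses in \eqref{2.9}--\eqref{2.10}), your cell corrector $\beta$ with $\lambda=-\Psi\,|\Gamma|_2/|Y^{\ast}|$ is correctly normalized by the Neumann compatibility condition, and the maximum-principle/Hopf argument for the fixed classical solution is legitimate. If completed, your proof is more self-contained (no appeal to Prop.~\ref{t1.1}) and bounds $u_1^{\epsilon}$ on all of $\overline{\Omega_{\epsilon}}$, so it would deliver Lemma~\ref{l1.2} and Lemma~\ref{l1.3} simultaneously.

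There is, however, one concrete gap: the cutoff $\eta$ as you describe it does not exist. A fixed (``macroscopic'', $\epsilon$-independent) smooth function that equals $1$ on a neighborhood of \emph{all} the holes and vanishes near $\partial\Omega$ is impossible in the paper's geometry, because the security zone is only $O(\epsilon)$ thick: Eq.~\eqref{a} forces the holes to fill $\Omega$ asymptotically (a hole-free layer of fixed width $\delta$ would make the limit in \eqref{a} equal to $|\Gamma|_2\,|\Omega_\delta|_3/|Y|_3$ for the shrunken domain $\Omega_\delta$, not $|\Gamma|_2\,|\Omega|_3/|Y|_3$). Hence $\eta$ must depend on $\epsilon$ and transition in a strip of width $\sim\epsilon$, so $|\nabla\eta|\sim\epsilon^{-1}$, $|\Delta\eta|\sim\epsilon^{-2}$, and the correction terms $2\epsilon^{2}\nabla\eta\cdot\epsilon^{-1}\nabla_y\beta$ and $\epsilon^{2}\Delta\eta\,\beta$ are $O(1)$, not $O(\epsilon)$: your claim that they are ``harmless for $\epsilon$ small'' fails as written. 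The repair is routine but must be made explicit: build $\eta$ self-similarly, e.g.\ $\eta(x)=\zeta(\mathrm{dist}(x,\partial\Omega)/\epsilon)$, so that $\epsilon\Vert\nabla\eta\Vert_{\infty}+\epsilon^{2}\Vert\Delta\eta\Vert_{\infty}\le K$ with $K$ independent of $\epsilon$; check that the transition strip contains no holes, which requires reading Assumption~0 in the standard cell-based way (all cells meeting $\partial\Omega$ are hole-free, leaving a hole-free strip of width $\ge c_0\epsilon$ with $c_0=\mathrm{dist}(\overline T,\partial Y)$ --- this is also what Lemma~\ref{lA.0} and Lemma~\ref{lA.1} implicitly need); and then absorb the $O(1)$ errors into the constant rather than into powers of $\epsilon$, i.e.\ choose $C> d_1\Psi\,|\Gamma|_2/|Y^{\ast}| + d_1 K\,(\Vert\beta\Vert_{\infty}+2\Vert\nabla_y\beta\Vert_{\infty})$. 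With this modification (and keeping $\eta\equiv 1$ on a neighborhood of every hole so that the flux inequality $\partial_\nu v=\epsilon\Psi\ge\epsilon\psi$ on $\Gamma_{\epsilon}$ is not polluted by $\nabla\eta$ terms), your comparison argument goes through.
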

In order to establish Lemma \ref{l1.3}, we will first need the following
preliminary result, proven in \cite{10}:

\begin{proposition}[\cite{10}, Theorem 5.2, p.730-732] \label{t1.1}
Let $\Omega_{\epsilon}$ be an open set satisfying Assumption $0$, and $T>0$. We consider a sequence $w^{\epsilon}: = w^{\epsilon}(t,x)\ge 0$
defined on $[0,T] \times \Omega_{\epsilon}$ such that, for some  
$\hat{k}>0$, $\beta>0$, 
and all
$k \geq \hat{k}$, 
\begin{equation} \label{2.1} 
\Vert (w^{\epsilon} - k)_{+} \Vert^{2}_{Q_{\epsilon} (T)}:=
\sup_{0 \leq t \leq T}
\displaystyle \int_{\Omega_{\epsilon}} \vert (w^{\epsilon} - k)_{+} \vert^2 \, dx +
\displaystyle \int_{0}^{T} \, dt \displaystyle \int_{\Omega_{\epsilon}}
\vert \nabla [(w^{\epsilon} - k)_{+}] \vert^2 \, dx
\end{equation}
$$\leq
 \epsilon \, \beta \, k^2 \, 
\displaystyle \int_{0}^{T}  dt \, \int_{\Gamma_{\epsilon}} 1_{\{w^{\epsilon} > k\} }\, dx .
 $$

Then
\begin{equation} \label{2.2}
 ||w^{\epsilon}||_{L^{\infty}([0, T] \times \Gamma_{\epsilon} )} \leq   C(\beta, T, \Omega) \, \hat{k} ,
\end{equation}
where the positive constant $C(\beta, T, \Omega)$ may depend on 
$\beta$, but not on $\hat{k}$ and $\epsilon$.

\end{proposition}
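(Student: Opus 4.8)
The plan is to read the implication \eqref{2.1} $\Rightarrow$ \eqref{2.2} as an $L^\infty$ bound produced by a De Giorgi--Stampacchia truncation argument, in which the super-level sets of $w^\epsilon$ \emph{on the boundary} $\Gamma_\epsilon$ are shown to be empty above a level proportional to $\hat k$. For $k\ge\hat k$ I would set
\[
\phi(k):=\epsilon\int_0^T\!\!\int_{\Gamma_\epsilon}1_{\{w^\epsilon>k\}}\,d\sigma_\epsilon\,dt ,
\]
which is non-negative and non-increasing in $k$ and, by \eqref{a}, satisfies the $\epsilon$-uniform bound $\phi(\hat k)\le\epsilon\,T\,|\Gamma_\epsilon|_2\le C_\Omega\,T$. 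With this normalization the hypothesis \eqref{2.1} reads $\Vert(w^\epsilon-k)_+\Vert^2_{Q_\epsilon(T)}\le\beta\,k^2\,\phi(k)$, so that both $\sup_t\Vert(w^\epsilon-k)_+(t)\Vert_{L^2(\Omega_\epsilon)}^2$ and $\int_0^T\Vert\nabla(w^\epsilon-k)_+\Vert_{L^2(\Omega_\epsilon)}^2\,dt$ are controlled by $\beta k^2\phi(k)$. The whole point is to feed this energy control back into $\phi$ at a higher level with a \emph{super-linear} gain.

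Writing $v=(w^\epsilon-k)_+$ and using $v\ge h-k$ on $\{w^\epsilon>h\}$ for $h>k$, Chebyshev's inequality gives $(h-k)^q\phi(h)\le\epsilon\int_0^T\!\int_{\Gamma_\epsilon}|v|^q\,d\sigma_\epsilon dt$ for any $q\ge 1$. I would then invoke, uniformly in $\epsilon$ after the natural rescaling of the reference cell $Y^\ast$, a scaled trace inequality of the type used in Lemma~\ref{lB.6}, together with the extension operator and Sobolev inequalities of Appendix~\ref{appA}, to build a parabolic trace--Sobolev embedding: interpolating the spatial traces $H^1(\Omega_\epsilon)\hookrightarrow L^2(\Gamma_\epsilon)$ and $H^1(\Omega_\epsilon)\hookrightarrow L^4(\Gamma_\epsilon)$ (available in $\mathbb{R}^3$, the boundary being two-dimensional) against the time integrability $v\in L^\infty(0,T;L^2)\cap L^2(0,T;H^1)$ yields, for some fixed exponent $q>2$,
\[
\epsilon\int_0^T\!\!\int_{\Gamma_\epsilon}|v|^{q}\,d\sigma_\epsilon\,dt\le C\,\Vert v\Vert_{Q_\epsilon(T)}^{q},
\]
with $C$ independent of $\epsilon$. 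Combining Chebyshev at this higher power with \eqref{2.1} in the form $\Vert v\Vert_{Q_\epsilon(T)}^2\le\beta k^2\phi(k)$ produces the recursive inequality
\[
\phi(h)\le\frac{C\,(\beta\,k^2)^{q/2}}{(h-k)^{q}}\,\phi(k)^{q/2},\qquad h>k\ge\hat k ,
\]
in which the exponent $q/2>1$ supplies the required super-linear gain and $C$ depends only on $\beta,T,\Omega$.

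Finally I would apply the classical Stampacchia iteration lemma: a non-increasing non-negative $\phi$ satisfying the last inequality with exponent $q/2>1$ vanishes identically beyond a level $\hat k+d$, where $d^{\,q}\simeq C'(\beta,T,\Omega)\,\hat k^{\,q}\,\phi(\hat k)^{q/2-1}$; using $\phi(\hat k)\le C_\Omega T$ and solving the mild self-consistency in the range $k\le\hat k+d$ (needed because the coefficient carries the factor $k^{q}$) gives $d\le C(\beta,T,\Omega)\,\hat k$. Hence $\phi\big(C(\beta,T,\Omega)\hat k\big)=0$, i.e.\ $w^\epsilon\le C(\beta,T,\Omega)\hat k$ a.e.\ on $[0,T]\times\Gamma_\epsilon$, which is exactly \eqref{2.2}. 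I expect the main obstacle to be not the iteration but the \emph{$\epsilon$-uniformity} of the trace and trace--Sobolev inequalities on $\Omega_\epsilon$: one must rescale to the fixed cell $Y^\ast$, transfer the Sobolev constants through an extension operator compatible with Assumption~0, and track the powers of $\epsilon$ so that the normalization $\phi$ stays $O(1)$ and the smallness threshold in Stampacchia's lemma remains bounded away from the degenerate regime---this is precisely why the statement is restricted to $\epsilon>0$ small enough. A secondary point is to justify, for the classical solutions at hand, that the truncations $(w^\epsilon-k)_+$ are admissible in these inequalities and that the boundary $L^\infty$ norm is attained pointwise.
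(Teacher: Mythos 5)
Your De Giorgi--Stampacchia skeleton (the normalization $\phi(k)=\epsilon\int_0^T\!\int_{\Gamma_\epsilon}1_{\{w^\epsilon>k\}}\,d\sigma_\epsilon\,dt$, which is $O(1)$ by \eqref{a}, Chebyshev, the iteration lemma, and the confinement of the levels to $[\hat k,C\hat k]$ to tame the factor $k^q$) is the right family of techniques; note the paper itself does not prove Proposition \ref{t1.1} but imports it from \cite{10}. The fatal problem is the single inequality on which everything rests: the claimed $\epsilon$-uniform bound
\[
\epsilon\int_0^T\!\!\int_{\Gamma_\epsilon}|v|^{q}\,d\sigma_\epsilon\,dt\;\le\; C\,\Vert v\Vert_{Q_\epsilon(T)}^{q}\qquad\text{for some }q>2 .
\]
This is precisely what is \emph{not} available, and your proposed derivation cannot produce it. The rescaled trace estimates that Assumption 0 yields are Lemma \ref{lA.0} (uniform in normalized form, but only at exponent $2$) and Lemma \ref{lA.2}(ii), whose constant is $c\,\epsilon^{-3/2+2/q_2}$: with the diagonal admissible choice $r_2=q_2=8/3$ this gives $\epsilon\int_0^T\!\int_{\Gamma_\epsilon}|v|^{8/3}\le c'\,\epsilon^{-1}\Vert v\Vert_{Q_\epsilon(T)}^{8/3}$, not a uniform constant. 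Likewise, the fixed-time interpolation you invoke between the $L^2(\Gamma_\epsilon)$ and $L^4(\Gamma_\epsilon)$ traces gives
\[
\epsilon\int_{\Gamma_\epsilon}|v(t)|^{q}\,d\sigma_\epsilon\;\le\; C\,\epsilon^{\,3-3q/2}\Big(\Vert v(t)\Vert^2_{L^2(\Omega_\epsilon)}+\epsilon^2\Vert\nabla v(t)\Vert^2_{L^2(\Omega_\epsilon)}\Big)^{q/2},
\]
and $3-3q/2<0$ for every $q>2$ (the $L^4(\Gamma_\epsilon)$ endpoint costs $\epsilon^{-3}$ in normalized form, and interpolation inherits this blow-up). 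Nor is this a removable technicality: for $q>10/3$ the inequality you want is simply false. Take $v(t,x)=f(x)\,1_{[0,\epsilon^2]}(t)$ with $f$ a cutoff equal to $1$ on the boundary of a single hole and supported in an $\epsilon$-neighbourhood of it; then $\epsilon\int_0^T\!\int_{\Gamma_\epsilon}|v|^q\sim\epsilon^5$ while $\Vert v\Vert^2_{Q_\epsilon(T)}\sim\epsilon^3$, so the ratio is of order $\epsilon^{5-3q/2}\to\infty$. With the constants one can actually prove, your recursion becomes $\phi(h)\le C\epsilon^{3-3q/2}(h-k)^{-q}(\beta k^2)^{q/2}\phi(k)^{q/2}$, and the Stampacchia lemma then yields $d\sim\epsilon^{3/q-3/2}\,\hat k$, i.e.\ the degenerate conclusion $\Vert w^\epsilon\Vert_{L^\infty([0,T]\times\Gamma_\epsilon)}\le C\,\epsilon^{-(3/2-3/q)}\,\hat k$, which is useless as $\epsilon\to0$.

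The mechanism that actually proves the proposition in \cite{10} (and which is visible in how the present paper deploys the same toolkit in the proof of Lemma \ref{l1.3}, Eqs.\ \eqref{2.14}--\eqref{2.16}) harvests the superlinear gain in the \emph{interior}, not on the boundary. Hypothesis \eqref{2.1} bounds the interior energy $\Vert(w^\epsilon-k)_+\Vert^2_{Q_\epsilon(T)}$ by $\epsilon\beta k^2$ times the boundary level-set measure, the explicit factor $\epsilon$ exactly compensating $|\Gamma_\epsilon|_2\sim\epsilon^{-1}$ (cf.\ \eqref{a}); the interior anisotropic Sobolev inequality of Lemma \ref{lA.2}(i) --- whose constant \emph{is} uniform in $\epsilon$, thanks to the extension operators of Lemma \ref{lA.1} --- converts this energy bound into higher space-time integrability of $(w^\epsilon-k)_+$ in $\Omega_\epsilon$, hence into superlinear decay of the \emph{interior} level-set measure; and one returns to the boundary through the $L^2$-trace inequality of Lemma \ref{lA.0} (whose weights $\epsilon^{-1},\epsilon$ make its normalized form uniform), combined with H\"older against the interior level-set measure. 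This produces a coupled system of recursive inequalities for an interior and a boundary level-set quantity, in the style of Ladyzhenskaya--Solonnikov--Ural'tseva's boundary estimates, rather than the single pure-boundary Stampacchia recursion you set up. Any attempt that stays purely on $\Gamma_\epsilon$, as yours does, runs into the scaling obstruction above, because uniform-in-$\epsilon$ higher integrability of traces on $\Gamma_\epsilon$ controlled by $\Vert\cdot\Vert_{Q_\epsilon(T)}$ does not exist beyond the exponent $2$ by your proposed route, and does not exist at all beyond $10/3$.
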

\medskip


\begin{proof} {\it of Lemma \ref{l1.3}} :

Since this proof is close to the proof of Lemma $5.2$ in \cite{10}, we only
sketch it.
Let $T >0$ and $k \geq 0$ be fixed.
We define: $u_{\epsilon}^{(k)} (t):= (u_1^{\epsilon} (t)-k)_{+}$ for $t \geq 0$.
Its derivatives are
\begin{equation} \label{2.3}
\frac{\displaystyle \partial u_{\epsilon}^{(k)}}{\displaystyle \partial t}=
\frac{\displaystyle \partial u_1^{\epsilon}}{\displaystyle \partial t} \,
\mathbb{1}_{\{u_1^{\epsilon} >k\}},
\end{equation}
\begin{equation} \label{2.4}
\nabla_x u_{\epsilon}^{(k)}= \nabla_x u_1^{\epsilon} \,
\mathbb{1}_{\{u_1^{\epsilon} >k\}}.
\end{equation}
Moreover,
\begin{equation} \label{2.5}
u_{\epsilon}^{(k)} \mid_{\partial \Omega}=
(u_1^{\epsilon} \mid_{\partial \Omega} -k)_{+},
\end{equation}
\begin{equation} \label{2.6}
u_{\epsilon}^{(k)} \mid_{\Gamma_{\epsilon}}=
(u_1^{\epsilon} \mid_{\Gamma_{\epsilon}} -k)_{+} .
\end{equation}
We define $\hat{k} := \max(\Vert \psi
\Vert_{L^{\infty} (0, T; B)}, \gamma_1)$, and consider
 $k \geq \hat{k}$.
Then, 
\begin{equation} \label{2.7}
u_1^{\epsilon} (0, x)=U_1 \leq \hat{k} \leq k.
\end{equation}
For $t \in [0, T_1]$ with $T_1\le T$, we get therefore
\begin{equation} \label{2.8} 
\begin{split}
\frac{\displaystyle 1}{\displaystyle 2}
\displaystyle \int_{\Omega_{\epsilon}} \vert u_{\epsilon}^{(k)} (t) \vert^2 \,
dx &= \displaystyle \int_{0}^{t} \frac{\displaystyle d}{\displaystyle ds}
\bigg[  \frac{\displaystyle 1}{\displaystyle 2}
\displaystyle \int_{\Omega_{\epsilon}}  
\vert u_{\epsilon}^{(k)} (s) \vert^2  \, dx \bigg] \, ds \\
& =\displaystyle \int_{0}^{t} \, 
\displaystyle \int_{\Omega_{\epsilon}}  \, 
\frac{\displaystyle \partial u_{\epsilon}^{(k)} (s)}{\displaystyle \partial s}
\, u_{\epsilon}^{(k)} (s) \, dx ds.
\end{split}
\end{equation}
Taking into account Eq. (\ref{2.3}) and Eq. (\ref{1.1}), we obtain
that for all $s \in [0, T_1]$:
\begin{equation} \label{2.9}
\begin{split}
&\displaystyle \int_{\Omega_{\epsilon}}
\frac{\displaystyle \partial u_{\epsilon}^{(k)} (s)}{\displaystyle \partial s}
\, u_{\epsilon}^{(k)} (s) \, dx=
\displaystyle \int_{\Omega_{\epsilon}}
\frac{\displaystyle \partial u_1^{\epsilon} (s)}{\displaystyle \partial s}
\, u_{\epsilon}^{(k)} (s) \, dx \\
&=\displaystyle \int_{\Omega_{\epsilon}}
\bigg[ d_1 \, \Delta_x u_1^{\epsilon}- u_1^{\epsilon} \,
\sum_{j=1}^{\infty} a_{1, j} u_j^{\epsilon}+ \sum_{j=1}^{\infty}
B_{1+j} \, \beta_{1+j,1} \, u_{1+j}^{\epsilon} 
\bigg] \, u_{\epsilon}^{(k)} (s) \, dx \\
&=\epsilon \, d_1 \displaystyle \int_{\Gamma_{\epsilon}}
\psi \bigg(s, x, \frac{\displaystyle x}{\displaystyle \epsilon} \bigg) \,
u_{\epsilon}^{(k)} (s) \, d\sigma_{\epsilon}(x)-
d_1 \displaystyle \int_{\Omega_{\epsilon}} \nabla_x u_1^{\epsilon} (s) \cdot
\nabla_x u_{\epsilon}^{(k)} (s) \, dx \\
&-\displaystyle \int_{\Omega_{\epsilon}} (u_1^{\epsilon}(s))^2 \, a_{1,1} \,
u_{\epsilon}^{(k)} (s) \, dx-
\displaystyle \int_{\Omega_{\epsilon}} u_1^{\epsilon}(s) \,
\sum_{j=2}^{\infty} \bigg[ a_{1,j} \, u_j^{\epsilon}(s) \bigg] \, 
u_{\epsilon}^{(k)} (s) \, dx \\
&+\displaystyle \int_{\Omega_{\epsilon}} \bigg[ \sum_{j=2}^{\infty} B_j \,
\beta_{j,1} \,  u_j^{\epsilon}(s) \bigg] \, u_{\epsilon}^{(k)} (s) \, dx \\
& \leq \epsilon \, d_1 \displaystyle \int_{\Gamma_{\epsilon}}
\psi \bigg(s, x, \frac{\displaystyle x}{\displaystyle \epsilon} \bigg) \,
u_{\epsilon}^{(k)} (s) \, d\sigma_{\epsilon}(x)-
d_1 \displaystyle \int_{\Omega_{\epsilon}} \nabla_x u_1^{\epsilon} (s) \cdot
\nabla_x u_{\epsilon}^{(k)} (s) \, dx \\
&-\displaystyle \int_{\Omega_{\epsilon}} \sum_{j=2}^{\infty}
\bigg[ a_{1,j} \, u_1^{\epsilon}(s)-B_j \, \beta_{j,1} \bigg] \,
u_j^{\epsilon} (s) \, u_{\epsilon}^{(k)} (s) \, dx.
\end{split}
\end{equation}
By using Assumption C, Lemma \ref{lA.0} and Young's inequality,
 one has, remembering that $k\ge \gamma_1$,
\begin{equation} \label{2.10}
\begin{split}
&\displaystyle \int_{\Omega_{\epsilon}}
\frac{\displaystyle \partial u_{\epsilon}^{(k)} (s)}{\displaystyle \partial s}
\, u_{\epsilon}^{(k)} (s) \, dx \leq
\dfrac{\epsilon \, d_1}{2} \displaystyle \int_{B_k^{\epsilon}(s)}
\bigg \vert \psi\bigg(s, x, \frac{\displaystyle x}{\displaystyle \epsilon}
\bigg) \bigg \vert^2 \,
 \, d\sigma_{\epsilon}(x) \\
&+ \dfrac{C_1 \, d_1}{2}
 \displaystyle \int_{A_k^{\epsilon}(s)}
|u_{\epsilon}^{(k)} (s) |^2 \, dx- 
d_1\left( 1-\dfrac{C_1 \epsilon^2}{2}\right) \displaystyle
\int_{\Omega_{\epsilon}} \vert \nabla_x u_{\epsilon}^{(k)} (s)\vert^2  \, dx ,
\end{split}
\end{equation}
where we denote by $A_{k}^{\epsilon} (t)$ and $B_{k}^{\epsilon} (t)$ the set
of points in $\Omega_{\epsilon}$ and on $\Gamma_{\epsilon}$, respectively,
at which $u_1^{\epsilon} (t, x) >k$.
We observe that
$$\vert A_{k}^{\epsilon} (t) \vert \leq \vert \Omega_{\epsilon} \vert, \qquad \vert B_{k}^{\epsilon} (t) \vert \leq \vert \Gamma_{\epsilon} \vert, $$
where $\mid \cdot \mid$ is the (resp. 3-dimensional and 2-dimensional) Lebesgue measure.
Inserting Eq. (\ref{2.10}) into Eq. (\ref{2.8}) and varying over $t$, 
we end up with the estimate:
\begin{equation} \label{2.11} 
\begin{split}
&\sup_{0 \leq t \leq T_1} \bigg[ 
\frac{\displaystyle 1}{\displaystyle 2} \displaystyle \int_{\Omega_{\epsilon}}
\vert u_{\epsilon}^{(k)} (t) \vert^2  \, dx \bigg]+d_1 \bigg( 1- 
\frac{C_1 \, \epsilon^2}{2} \bigg)
\displaystyle \int_{0}^{T_1} dt \, 
\displaystyle \int_{\Omega_{\epsilon}} 
\vert \nabla_x u_{\epsilon}^{(k)} (t) \vert^2 \, dx  \\ 
& \leq \frac{\displaystyle C_1 \, d_1}{\displaystyle 2} 
\displaystyle \int_{0}^{T_1}  dt \, 
\displaystyle \int_{A_{k}^{\epsilon} (t)}
\vert u_{\epsilon}^{(k)} (t) \vert^2  \, dx  
+\frac{\displaystyle \epsilon \, d_1}{\displaystyle 2} 
\displaystyle \int_{0}^{T_1} dt \,
\displaystyle \int_{B_{k}^{\epsilon} (t)} 
\bigg \vert \psi \bigg (t, x, \frac{x}{\epsilon} \bigg ) \bigg \vert^2 \,
d\sigma_{\epsilon}(x) .
\end{split}
\end{equation}
Introducing the norm (as in the Prop. above):
\begin{equation} \label{2.12}
\Vert u \Vert^{2}_{Q_{\epsilon} (T)} := \sup_{0 \leq t \leq T}
\displaystyle \int_{\Omega_{\epsilon}} \vert u(t) \vert^2 \, dx+
\displaystyle \int_{0}^{T} \, dt \displaystyle \int_{\Omega_{\epsilon}}
\vert \nabla u (t) \vert^2 \, dx,
\end{equation}
 inequality (\ref{2.11}) can be rewritten as follows:
\begin{equation} \label{2.13} 
\begin{split}
\min \bigg{\{\frac{\displaystyle 1}{\displaystyle 2},
d_1 \bigg( 1-\frac{\displaystyle C_1 \, \epsilon^2}{\displaystyle 2} \bigg) 
\bigg\}}
 \Vert u_{\epsilon}^{(k)} \Vert^{2}_{Q_{\epsilon} (T_1)} &\leq
\frac{\displaystyle C_1 \, d_1}{\displaystyle 2} \displaystyle \int_{0}^{T_1}  
dt \, 
\displaystyle \int_{A_{k}^{\epsilon} (t)}  
\vert u_{\epsilon}^{(k)} (t) \vert^2  \, dx \\
&+\frac{\displaystyle \epsilon \, d_1}{\displaystyle 2} 
\displaystyle \int_{0}^{T_1} dt
\displaystyle \int_{B_{k}^{\epsilon} (t)}
\bigg \vert \psi \bigg( t, x, \frac{\displaystyle x}{\displaystyle \epsilon}
\bigg) \bigg \vert^2 \,
d\sigma_{\epsilon}(x).
\end{split}
\end{equation}
Let us estimate the right-hand side of (\ref{2.13}).
From H\"older's inequality, we obtain
\begin{equation} \label{2.14}
\displaystyle \int_{0}^{T_1}  dt \,
\displaystyle \int_{A_{k}^{\epsilon} (t)}  \, 
\vert u_{\epsilon}^{(k)} (t) \vert^2 \, dx \leq
\Vert  u_{\epsilon}^{(k)} \Vert^{2}_{L^{\overline {r}_1} 
(0, T_1; L^{\overline {q}_1} (\Omega_{\epsilon}))} \,
\Vert \mathbb{1}_{A_{k}^{\epsilon}} \Vert_{L^{r'_1} (0, T_1; L^{q'_1} 
(\Omega_{\epsilon}))},
\end{equation}
with $r'_1=\frac{\displaystyle r_1}{\displaystyle r_1-1}$,
$q'_1=\frac{\displaystyle q_1}{\displaystyle q_1-1}$,
$\overline {r}_1 = 2 \, r_1$,
 $\overline {q}_1 = 2 \, q_1$, where
 $\overline {r}_1 \in
(2, \infty)$ and $\overline {q}_1 \in (2, 6)$
have been chosen in such a way that
$$\frac{\displaystyle 1}{\displaystyle \overline {r}_1}+
\frac{\displaystyle 3}{\displaystyle 2 \, \overline {q}_1}=
\frac{\displaystyle 3}{\displaystyle 4}.$$

In particular, $r'_1, q'_1<\infty$, so that \eqref{2.14} yields
\begin{equation} \label{2.15}
\displaystyle \int_{0}^{T_1}  dt \,
\displaystyle \int_{A_{k}^{\epsilon} (t)}  \, 
\vert u_{\epsilon}^{(k)} (t) \vert^2 \, dx \leq
\Vert  u_{\epsilon}^{(k)} \Vert^{2}_{L^{\overline {r}_1} 
(0, T_1; L^{\overline {q}_1} (\Omega_{\epsilon}))} \,
 \vert \Omega\vert^{1/q'_1} \, T_1^{1/r'_1}.
\end{equation}
If we choose (for $\epsilon>0$ small enough)
$$
T_1^{1/r'_1} < \dfrac{\min\{1,d_1\}}{2C_1d_1 c^2}\,  \vert \Omega\vert^{-1/q'_1} \le
 \dfrac{\min \bigg{\{\frac{\displaystyle 1}{\displaystyle 2},
d_1 \bigg( 1-\frac{\displaystyle C_1 \, \epsilon^2}{\displaystyle 2} \bigg) 
\bigg\}}}{C_1d_1 c^2 }\,  \vert \Omega\vert^{-1/q'_1} , 
$$
then from Lemma A.3 (i) (and $c$ being the constant appearing in formula (\ref{A.4}) of this Lemma) it follows that
\begin{equation} \label{2.16}
\frac{\displaystyle C_1 \, d_1}{\displaystyle 2}
\displaystyle \int_{0}^{T_1}  dt \displaystyle \int_{A_{k}^{\epsilon} (t)}
\vert u_{\epsilon}^{(k)} (t) \vert^2 \, dx \leq
\frac{\displaystyle 1}{\displaystyle 2} 
\min \bigg{\{\frac{\displaystyle 1}{\displaystyle 2},
d_1 \bigg( 1-\frac{\displaystyle C_1 \, \epsilon^2}{\displaystyle 2} \bigg) 
\bigg\}} \,
\Vert  u_{\epsilon}^{(k)} \Vert^{2}_{Q_{\epsilon} (T_1)}.
\end{equation}

Analogously, from H\"older's inequality, we have (remember that  $k \geq \hat{k}$)
\begin{equation} \label{2.17} 
\begin{split}
\frac{\displaystyle \epsilon \, d_1}{\displaystyle 2}
\displaystyle \int_{0}^{T_1}  dt \, \displaystyle \int_{B_{k}^{\epsilon} (t)}
\bigg \vert \psi \bigg( t, x, \frac{\displaystyle x}{\displaystyle \epsilon} 
\bigg) \bigg \vert^2 \, 
d\sigma_{\epsilon}(x) & \leq 
\frac{\displaystyle \epsilon \, d_1 \, k^2}{\displaystyle 2}
\bigg( \frac{\displaystyle \hat{k}^2}{\displaystyle k^2} \bigg) \,
\Vert \mathbb{1}_{B_{k}^{\epsilon}} \Vert_
{L^1 (0, T_1; L^1 (\Gamma_{\epsilon}))} \\
& \leq \frac{\displaystyle \epsilon \, d_1 \, k^2}{\displaystyle 2} \,
\displaystyle \int_{0}^{T_1}  dt \, \vert B_{k}^{\epsilon} (t) \vert.
\end{split}
\end{equation}
Thus, estimate (\ref{2.13}) yields
\begin{equation} \label{2.18} 
\Vert  u_{\epsilon}^{(k)} \Vert^{2}_{Q_{\epsilon} (T_1)} \leq
\epsilon \, \beta \, k^2 \,  
\displaystyle \int_{0}^{T_1}  dt \, 
\vert B_{k}^{\epsilon} (t) \vert ,
\end{equation}
with $\beta := \max(1, d_1) + 1/2$.
\par 
Hence, using Prop. \ref{t1.1} for $w^{\epsilon} := u_1^{\epsilon}$, we obtain
$$\Vert u_1^{\epsilon} \Vert_{L^{\infty} (0, T_1; L^{\infty} 
(\Gamma_{\epsilon}))} \leq C(\Omega, \beta, T_1)\, {\hat k},$$
where the positive constant $C(\Omega, \beta, T_1)$ does not depend on
$\epsilon$ or $\hat{k}$.
\medskip

 The same argument can be repeated on the cylinder $[T_1, 2\,T_1]$ with 
 $k \ge \hat{k_1} := \max(\gamma_1, C(\Omega, \beta, T_1)\, {\hat k})$, 
  yielding 
 $$\Vert u_1^{\epsilon} \Vert_{L^{\infty} (0, 2\,T_1; L^{\infty} 
(\Gamma_{\epsilon}))} \leq C(\Omega, \beta, T_1)\, {\hat k_1}.$$
 Thanks to a straightforward induction, one gets the bound 
 for $u_1^{\epsilon}$ in $L^{\infty} (0, T; L^{\infty} 
(\Gamma_{\epsilon}))$.

\end{proof}

We finally write the following $L^{\infty}$ bound for all 
$u_i^{\epsilon}$:

\begin{lemma} \label{l1.5}
 Let $\Omega_{\epsilon}$ be an open set satisfying Assumption $0$.
We also suppose that Assmptions A, B, and C hold. We finally consider $T >0$, 
 and a classical 
solution $u_i^{\epsilon}$ ($i\in \N - \{0\}$) of (\ref{1.1}), (\ref{1.2}).
Then, the following uniform with respect to $\epsilon>0$ (small enough)
 estimate holds for all
$i\in \N - \{0\}$:
\begin{equation} \label{2.19}
\Vert u_i^{\epsilon} \Vert_{L^{\infty} (0, T; L^{\infty} (\Omega_{\epsilon}))}
\leq K_i,
\end{equation}
 where $K_1$ is given by
Lemma \ref{l1.2}, estimate (\ref{1.42}) and Lemma \ref{l1.3}, estimate (\ref{1.48}), and, for $i \ge 2$,
\begin{equation} \label{2.20}
K_i=1+\frac{\displaystyle \bigg[\sum_{j=1}^{i-1} a_{j,i-j} K_j K_{i-j}
\bigg]}{\displaystyle (B_i+a_{i,i})} + \gamma_i .
\end{equation}

\end{lemma}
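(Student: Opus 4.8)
The plan is to argue by strong induction on $i$, obtaining the pointwise bound $u_i^{\epsilon}\le K_i$ directly from an energy estimate on the truncation $(u_i^{\epsilon}-K_i)_+$. The base case $i=1$ is exactly Lemma \ref{l1.2} (estimate (\ref{1.42})) combined with Lemma \ref{l1.3} (estimate (\ref{1.48})), which provide an $\epsilon$-independent constant $K_1$. So I fix $i\ge 2$ and assume inductively that $0\le u_j^{\epsilon}\le K_j$ for all $1\le j\le i-1$; since the right-hand side of (\ref{2.20}) involves only these lower-order constants, the recursion is well defined, and the resulting $K_i$ is manifestly independent of $\epsilon$.

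The heart of the matter is a pointwise upper bound for the reaction term $Q_i^{\epsilon}+F_i^{\epsilon}$ on the set $\{u_i^{\epsilon}>K_i\}$. Writing $S_i:=\sum_{j=1}^{\infty}a_{i,j}\,u_j^{\epsilon}\ge 0$, the coagulation loss in (\ref{1.3}) contributes $-u_i^{\epsilon}S_i$. The one genuinely delicate point is the fragmentation gain $\sum_{j\ge1}B_{i+j}\beta_{i+j,i}u_{i+j}^{\epsilon}$, which couples $u_i^{\epsilon}$ to the larger clusters (uncontrolled in $L^{\infty}$); the mechanism that saves the argument is assumption (\ref{1.7a}), which gives $\sum_{j\ge1}B_{i+j}\beta_{i+j,i}u_{i+j}^{\epsilon}\le\gamma_i\sum_{j\ge1}a_{i,i+j}u_{i+j}^{\epsilon}\le\gamma_i S_i$, so that this gain is absorbed by the coagulation loss as soon as $u_i^{\epsilon}>\gamma_i$. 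Combining these with (\ref{1.4}) I obtain
$$Q_i^{\epsilon}+F_i^{\epsilon}\le \tfrac12\sum_{j=1}^{i-1}a_{i-j,j}\,u_{i-j}^{\epsilon}u_j^{\epsilon}-(u_i^{\epsilon}-\gamma_i)\,S_i-B_i\,u_i^{\epsilon}.$$
On $\{u_i^{\epsilon}>K_i\}$ one has $u_i^{\epsilon}>K_i\ge\gamma_i$, hence $(u_i^{\epsilon}-\gamma_i)>0$; bounding the middle term via $S_i\ge a_{i,i}u_i^{\epsilon}\ge0$ and the coagulation gain via the induction hypothesis by $\tfrac12\tilde M$, with $\tilde M:=\sum_{j=1}^{i-1}a_{j,i-j}K_jK_{i-j}$, I get $Q_i^{\epsilon}+F_i^{\epsilon}\le g(u_i^{\epsilon})$ where $g(s):=\tfrac12\tilde M-a_{i,i}(s-\gamma_i)s-B_i s$.

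It remains to check that $g(s)\le0$ for $s\ge K_i$. From (\ref{2.20}) one has $K_i-\gamma_i=1+\tilde M/(B_i+a_{i,i})$, so $K_i\ge\tilde M/(B_i+a_{i,i})$ and $a_{i,i}(K_i-\gamma_i)+B_i\ge a_{i,i}+B_i$; multiplying these and using $K_i\ge1$ gives $K_i\big[a_{i,i}(K_i-\gamma_i)+B_i\big]\ge\tilde M\ge\tfrac12\tilde M$, i.e. $g(K_i)\le0$. Since $g$ is a downward parabola (or, if $a_{i,i}=0$, a decreasing affine function) whose vertex lies below $\gamma_i\le K_i$, it follows that $g(s)\le0$ for all $s\ge K_i$, and therefore $Q_i^{\epsilon}+F_i^{\epsilon}\le0$ on $\{u_i^{\epsilon}>K_i\}$. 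I then test the $i$-th equation of (\ref{1.2}) with $w^{\epsilon}:=(u_i^{\epsilon}-K_i)_+\ge0$. The Neumann conditions for $i\ge2$ are homogeneous on both $\partial\Omega$ and $\Gamma_{\epsilon}$, so the boundary terms vanish and, with $\nabla_x w^{\epsilon}=\nabla_x u_i^{\epsilon}\,\mathbb{1}_{\{u_i^{\epsilon}>K_i\}}$, the diffusion term gives $d_i\int_{\Omega_{\epsilon}}|\nabla_x w^{\epsilon}|^2\,dx\ge0$. Since the reaction term is nonpositive where $w^{\epsilon}>0$, the source integral $\int_{\Omega_{\epsilon}}(Q_i^{\epsilon}+F_i^{\epsilon})\,w^{\epsilon}\,dx\le0$, whence
$$\frac{d}{dt}\,\frac12\int_{\Omega_{\epsilon}}|w^{\epsilon}(t)|^2\,dx\le0.$$
As $u_i^{\epsilon}(0)=0\le K_i$ yields $w^{\epsilon}(0)=0$, I conclude $w^{\epsilon}\equiv0$, i.e. $u_i^{\epsilon}\le K_i$; the regularity of the classical solution justifies the integration by parts and the chain rule for $t\mapsto\|w^{\epsilon}(t)\|_{L^2}^2$, and this closes the induction with $\epsilon$-independent constants.
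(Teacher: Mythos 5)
Your proof is correct, and the constants come out exactly as in \eqref{2.20}, but your route for $i\ge 2$ is genuinely different from the paper's. The paper proceeds, following \cite{22}, by testing the $i$-th equation with $\phi_i=p\,(u_i^{\epsilon})^{p-1}$ and estimating the $L^p$ norms: the terms involving clusters of size $\le i$ are bounded by pointwise optimization in $u_i^{\epsilon}$ (this is where $\tilde M/(B_i+a_{i,i})$ appears), while the tail $\sum_{j>i}(a_{i,j}u_i^{\epsilon}-B_j\beta_{j,i})u_j^{\epsilon}$ is handled by restricting, via \eqref{1.7a}, to the region $\{u_i^{\epsilon}\le\gamma_i\}$ and then invoking the duality estimate \eqref{1.41} of Corollary \ref{cor22} together with Cauchy--Schwarz; finally one lets $p\to\infty$. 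You instead run a Stampacchia truncation at the level $K_i$ itself: your key observation is that on $\{u_i^{\epsilon}>K_i\}\subset\{u_i^{\epsilon}>\gamma_i\}$, assumption \eqref{1.7a} lets the fragmentation gain $\sum_{j}B_{i+j}\beta_{i+j,i}u_{i+j}^{\epsilon}\le\gamma_i S_i$ be absorbed pointwise by the coagulation loss $-u_i^{\epsilon}S_i$, after which keeping only the $j=i$ term of $S_i$ makes the whole reaction term nonpositive above $K_i$, and a trivial energy/Gronwall argument on $(u_i^{\epsilon}-K_i)_+$ (using the homogeneous Neumann conditions for $i\ge2$) closes the induction. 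What your approach buys: it is more elementary (no $L^p$ iteration, no $p\to\infty$ limit), it never needs Corollary \ref{cor22} for $i\ge 2$ because the dangerous region $\{u_i^{\epsilon}\le\gamma_i\}$ is invisible to your test function, and it gives a transparent maximum-principle reading of the recursion \eqref{2.20}. What the paper's approach buys: it reuses the $L^p$ machinery already needed for $i=1$ (Lemma \ref{l1.2}, where a pure truncation argument cannot work because of the inhomogeneous flux $\epsilon\psi$ on $\Gamma_{\epsilon}$, and which you, like the paper, invoke together with Lemma \ref{l1.3} as the base case), and it stays aligned with the existence literature. Two caveats you share with the paper rather than introduce: formula \eqref{2.20} implicitly assumes $B_i+a_{i,i}>0$, and both arguments use nonnegativity of the $u_j^{\epsilon}$ (e.g.\ in $S_i\ge a_{i,i}u_i^{\epsilon}\ge 0$ and in discarding tail terms).
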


\begin{proof}

The Lemma can be proved directly by induction following the proof reported in
\cite{22} (Lemma $2.2$, p. 284). 
Since we have a zero initial condition for the system (\ref{1.2}), we have
chosen a function slightly different from the one used
 in \cite{22} to test
the $i$-th equation of (\ref{1.2}), namely
$$\phi_i := p \, (u_i^{\epsilon})^{(p-1)} \; \; \; \; p \geq 2.$$
We stress that the functions $\phi_i$ are strictly positive  and continuously
differentiable on $[0,t] \times \overline\Omega$, for all $t>0$.

Therefore, multiplying the $i$-th equation in system (\ref{1.2}) by $\phi_i$ and reorganizing the
terms appearing in the sums, we can write the estimate
$$ ||u_i^{\epsilon}||_{L^p(\Omega_\epsilon)}^p + d_i\,p\,(p-1) \int_0^t \int_{\Omega_\epsilon}  
|\nabla_x u_i^{\epsilon}|^2 \, (u_i^{\epsilon})^{p-2} dx ds  $$
$$ \le  \int_0^t \int_{\Omega_\epsilon}  \bigg[ \frac12 \sum_{j=1}^{i-1} a_{i-j,j} \, u_j^{\epsilon}\,  u_{i-j}^{\epsilon}
- a_{i,i}\, | u_i^{\epsilon}|^2 - B_i  \,u_i^{\epsilon} \bigg] \, p\, (u_i^{\epsilon})^{p-1} dx ds  $$
$$ -  \int_0^t \int_{\Omega_\epsilon}  \bigg[  \sum_{j=1}^{i-1} a_{i,j} \, u_i^{\epsilon}\,  u_{j}^{\epsilon}
+ \sum_{j=i+1}^{\infty}  (a_{i,j}\,  u_i^{\epsilon} - B_j \, \beta_{j,i} )\, \,u_j^{\epsilon} \bigg] \, p\, (u_i^{\epsilon})^{p-1} dx ds . $$
We now work using an induction on $i$. 
Supposing that we already know that $\Vert u_j^{\epsilon} \Vert_{L^{\infty} (0, T; L^{\infty} (\Omega_{\epsilon}))}
\leq K_j $ for all $j<i$, and using assumption C,
the previous estimate leads to
$$ ||u_i^{\epsilon}||_{L^p(\Omega_\epsilon)}^p  \le \int_0^t \int_{\Omega_\epsilon}  
\bigg[ \frac12 \sum_{j=1}^{i-1} a_{i-j,j} \, K_j\,  K_{i-j}
- a_{i,i}\, | u_i^{\epsilon}|^2 - B_i  \,u_i^{\epsilon} \bigg] \, p\, (u_i^{\epsilon})^{p-1} dx ds  $$
$$ +  \int_0^t \int_{\Omega_\epsilon}   \sum_{j=i+1}^{\infty}  a_{i,j}\,  (-u_i^{\epsilon} + \gamma_i) \, \,u_j^{\epsilon} \, p\, (u_i^{\epsilon})^{p-1} dx ds =: I_1 + I_2. $$
Then, optimizing w.r.t. $u_i^{\epsilon}$,
$$ I_1 \le \bigg[ \bigg(\sum_{j=1}^{i-1} a_{i-j,j} \, K_j\,  K_{i-j} \bigg)^p\, (B_i + a_{i,i})^{1-p} \bigg]\, |\Omega_{\epsilon} |\, T
+ p\, a_{i,i}\, |\Omega_{\epsilon} |\, T, $$
and 
$$ I_2 \le  \int_0^t \int_{\Omega_\epsilon}  \sum_{j=i+1}^{\infty}  a_{i,j}\, (\gamma_i - u_i^{\epsilon}) \, u_j^{\epsilon}\, 1_{\{u_i^\epsilon \le \gamma_i\}}
 \, p\, (u_i^{\epsilon})^{p-1} dx ds $$
 $$ \le p\, \gamma_i^p  \int_0^t \int_{\Omega_\epsilon} \bigg( \sum_{j=i+1}^{\infty}  a_{i,j} \,  u_j^{\epsilon} \bigg) dx ds $$
$$ \le C\, p\, \gamma_i^p  ( |\Omega_{\epsilon} |\, T )^{1/2}, $$
where Cauchy-Schwarz inequality and the duality Lemma  
(more precisely Eq. (\ref{1.41})) have been exploited.
\medskip

Using these estimates for bounding $ ||u_i^{\epsilon}||_{L^p(\Omega_\epsilon)}$ and letting $p \to \infty$, we end up with the desired
estimate.

\end{proof}


We end up this section with bounds for the derivatives of
$u_i^{\epsilon}$.

\begin{lemma} \label{l1.7}
Let $\Omega_{\epsilon}$ be an open set satisfying Assumption $0$.
We also suppose that Assmptions A, B, and C hold. We finally consider $T >0$,  and a classical 
solution $u_i^{\epsilon}$ ($i\in \N - \{0\}$, $\epsilon>0$ small enough) of (\ref{1.1}), (\ref{1.2}).
Then, the family $\partial_t u_i^{\epsilon}$ 
is bounded in
$L^2([0,T] \times \Omega_{\epsilon})$, and the
 family $\nabla_x u_i^{\epsilon}$ 
is bounded in
$L^{\infty}([0,T]; L^2( \Omega_{\epsilon}))$,
 uniformly in $\epsilon$ (but not in $i$).
\end{lemma}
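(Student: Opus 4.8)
The plan is to obtain both bounds at once from a single energy estimate, testing each evolution equation against its own time derivative $\partial_t u_i^\epsilon$. Multiplying the equation in \eqref{1.1} (resp.\ \eqref{1.2}) by $\partial_t u_i^\epsilon$, integrating over $\Omega_\epsilon$, and integrating the diffusion term by parts with the help of the Neumann conditions, I would arrive at the identity
\[
\int_{\Omega_\epsilon} |\partial_t u_i^\epsilon|^2\,dx
+ \frac{d_i}{2}\frac{d}{dt}\int_{\Omega_\epsilon}|\nabla_x u_i^\epsilon|^2\,dx
= \int_{\Omega_\epsilon} R_i^\epsilon\,\partial_t u_i^\epsilon\,dx
+ \mathcal{B}_i^\epsilon(t),
\]
where $R_i^\epsilon:=Q_i^\epsilon+F_i^\epsilon$ for $i\ge 2$ (and the analogous reaction term for $i=1$), and the surface contribution $\mathcal{B}_i^\epsilon$ vanishes for $i\ge2$ while for $i=1$ it equals $d_1\,\epsilon\int_{\Gamma_\epsilon}\psi(t,x,\tfrac{x}{\epsilon})\,\partial_t u_1^\epsilon\,d\sigma_\epsilon(x)$. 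The idea is then to control the right-hand side so that, after Young's inequality, half of the term $\int_{\Omega_\epsilon}|\partial_t u_i^\epsilon|^2\,dx$ is absorbed on the left, while the rest stays bounded uniformly in $\epsilon$. Integrating in time over $[0,t]$ and dropping the nonnegative gradient term at time $t$ --- recalling that $\nabla_x u_i^\epsilon(0)=0$, since $u_i^\epsilon(0)$ is constant in space --- then delivers simultaneously the $L^2$ bound on $\partial_t u_i^\epsilon$ and the $L^\infty(0,T;L^2)$ bound on $\nabla_x u_i^\epsilon$.

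The reaction terms are dealt with using the estimates proven above. The finite coagulation sum $\frac12\sum_{j=1}^{i-1}a_{i-j,j}\,u_{i-j}^\epsilon\,u_j^\epsilon$ is bounded in $L^\infty([0,T]\times\Omega_\epsilon)$, hence in $L^2$ with a constant proportional to $|\Omega|^{1/2}$, by the uniform bounds $\|u_j^\epsilon\|_{L^\infty}\le K_j$ of Lemma \ref{l1.5}. The loss contributions $u_i^\epsilon\sum_{j\ge1}a_{i,j}u_j^\epsilon$ and $B_i\,u_i^\epsilon$ are products of the $L^\infty$-bounded factor $u_i^\epsilon$ with, respectively, the quantity $\sum_{j\ge1}a_{i,j}u_j^\epsilon$ which is bounded in $L^2([0,T]\times\Omega_\epsilon)$ by Corollary \ref{cor22}, and a bounded constant; both are therefore bounded in $L^2$ uniformly in $\epsilon$. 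For the fragmentation gain $\sum_{j\ge1}B_{i+j}\,\beta_{i+j,i}\,u_{i+j}^\epsilon$, I would use \eqref{1.7a} of Assumption C with $m=i$ to get the pointwise domination
\[
\sum_{j\ge1}B_{i+j}\,\beta_{i+j,i}\,u_{i+j}^\epsilon
\le \gamma_i\sum_{k\ge i+1}a_{i,k}\,u_k^\epsilon
\le \gamma_i\sum_{k\ge1}a_{i,k}\,u_k^\epsilon,
\]
whose $L^2$-norm is once more controlled by Corollary \ref{cor22}. Hence $\|R_i^\epsilon\|_{L^2([0,T]\times\Omega_\epsilon)}\le C_i$ uniformly in $\epsilon$, and the volume term $\int_{\Omega_\epsilon} R_i^\epsilon\,\partial_t u_i^\epsilon\,dx$ is disposed of by Young's inequality.

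The one genuine difficulty is the surface term $\mathcal{B}_1^\epsilon$, which carries $\partial_t u_1^\epsilon$ on $\Gamma_\epsilon$ and cannot be absorbed as it stands. The decisive step is to integrate it by parts in time, after the energy identity has been integrated over $[0,t]$:
\[
d_1\,\epsilon\int_0^t\!\!\int_{\Gamma_\epsilon}\psi\,\partial_s u_1^\epsilon\,d\sigma_\epsilon\,ds
= d_1\,\epsilon\int_{\Gamma_\epsilon}\psi(t)\,u_1^\epsilon(t)\,d\sigma_\epsilon
- d_1\,\epsilon\int_0^t\!\!\int_{\Gamma_\epsilon}\partial_s\psi\,u_1^\epsilon\,d\sigma_\epsilon\,ds,
\]
where the initial boundary contribution drops out thanks to the compatibility condition $\psi(0,\cdot,\cdot)=0$ from Assumption A(ii). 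Both remaining terms are then bounded uniformly in $\epsilon$ by combining three ingredients: the uniform trace estimate $\|u_1^\epsilon\|_{L^\infty(0,T;L^\infty(\Gamma_\epsilon))}\le C$ of Lemma \ref{l1.3}; the regularity $\psi,\partial_t\psi\in L^\infty([0,T];B)$ granted by Assumption A(i); and the uniform bound on $\epsilon\,|\Gamma_\epsilon|_2$ provided by \eqref{a}. This is precisely the point where the $C^1$-in-time regularity of $\psi$, its vanishing at $t=0$, and the earlier $L^\infty$ control of the trace on the perforation boundaries all enter in an essential way; without the time integration by parts one would be left with an uncontrollable $\partial_t u_1^\epsilon$ on $\Gamma_\epsilon$. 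Collecting the volume and surface bounds and absorbing the dissipation on the left concludes the proof.
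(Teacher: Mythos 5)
Your proposal is correct, and its skeleton coincides with the paper's proof: test each equation with $\partial_t u_i^{\epsilon}$, obtain the energy identity, control the reaction terms in $L^2([0,T]\times\Omega_{\epsilon})$ through the duality bound of Corollary \ref{cor22}, the $L^{\infty}$ bounds of Lemma \ref{l1.5} and hypothesis (\ref{1.7a}), absorb via Young's inequality, and (for $i=1$) integrate the surface term by parts in time, exploiting the compatibility condition $\psi(0,\cdot,\cdot)=0$ of Assumption A(ii) --- all of this is exactly what the paper does in (\ref{2.21})--(\ref{2.23}) and (\ref{2.27})--(\ref{2.29}). The only genuine divergence is in how the two remaining boundary integrals are closed after the time integration by parts. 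The paper applies Young's inequality on $\Gamma_{\epsilon}$, uses (\ref{1.29}) for the $\psi$ and $\partial_t\psi$ factors and the trace inequality of Lemma \ref{lA.0} for $u_1^{\epsilon}$; this reintroduces $\epsilon^2\int_{\Omega_{\epsilon}}|\nabla_x u_1^{\epsilon}|^2\,dx$ both at time $t$ (absorbed into the left-hand side, whence the factor $1-\epsilon^2 C_5$ in (\ref{2.24})) and integrated in time, which forces a final appeal to Gronwall's lemma. You instead bound $u_1^{\epsilon}$ on $\Gamma_{\epsilon}$ pointwise by the uniform $L^{\infty}(0,T;L^{\infty}(\Gamma_{\epsilon}))$ estimate of Lemma \ref{l1.3} and pair it with the boundedness of $\epsilon\,|\Gamma_{\epsilon}|_2$ from (\ref{a}), which yields constants directly and dispenses with both the trace inequality and Gronwall. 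Both routes are legitimate (Lemma \ref{l1.3} is established before this point, so there is no circularity), and yours is slightly more economical; the paper's route has the merit of needing only the $L^2$-trace machinery at this step. One cosmetic remark: to obtain both conclusions you should keep, rather than drop, the gradient term $\frac{d_i}{2}\int_{\Omega_{\epsilon}}|\nabla_x u_i^{\epsilon}(t)|^2\,dx$ on the left-hand side --- since both left-hand terms are nonnegative and the right-hand side is bounded uniformly in $\epsilon$ and $t$, the $L^2$ bound on $\partial_t u_i^{\epsilon}$ and the $L^{\infty}(0,T;L^2(\Omega_{\epsilon}))$ bound on $\nabla_x u_i^{\epsilon}$ follow simultaneously, as you in fact assert.
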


\begin{proof}
Since this proof is close to the proof of Lemma $5.9$ in \cite{10}, we only
sketch it.

Case $i=1$: Let us multiply the first equation in (\ref{1.1}) by the function
$\partial_t u_1^{\epsilon} (t, x)$.
Integrating, the divergence theorem yields

\begin{equation} \label{2.21} 
\begin{split}
&\displaystyle \int_{\Omega_{\epsilon}} \bigg \vert 
\frac{\partial u_1^{\epsilon} (t, x)}{\partial t} \bigg \vert^2 \, dx +
\frac{\displaystyle d_1}{\displaystyle 2} 
\displaystyle \int_{\Omega_{\epsilon}} 
\frac{\partial}{\partial t} (\vert \nabla_x u_1^{\epsilon} (t,x) \vert^2) \, dx
\\
&=\epsilon \, d_1 \, \displaystyle \int_{\Gamma_{\epsilon}} 
\psi \bigg(t,x,\frac{x}{\epsilon} \bigg) \, 
\frac{\partial u_1^{\epsilon}}{\partial t} \, d\sigma_{\epsilon}(x)-
\displaystyle \int_{\Omega_{\epsilon}} \, u_1^{\epsilon} \,
\bigg(\sum_{j=1}^{\infty} a_{1,j} \, u_j^{\epsilon} \bigg) \,
\frac{\partial u_1^{\epsilon}}{\partial t} \, dx \\
&+\displaystyle \int_{\Omega_{\epsilon}} \bigg(\sum_{j=1}^{\infty}
B_{1+j} \, \beta_{1+j,1} \, u_{1+j}^{\epsilon} \bigg) \, 
\, \frac{\partial u_1^{\epsilon}}{\partial t} \, dx.
\end{split}
\end{equation}
Using Young's inequality and exploiting the boundedness of 
$u_1^{\epsilon}$ in $L^{\infty} (0, T; L^{\infty} (\Omega_{\epsilon}))$,
one gets 
\begin{equation} \label{2.22}
\begin{split}
&C_1 \, \displaystyle \int_{\Omega_{\epsilon}} \bigg \vert
\frac{\partial u_1^{\epsilon} (t, x)}{\partial t} \bigg \vert^2 \, dx +
\frac{\displaystyle d_1}{\displaystyle 2} 
\displaystyle \int_{\Omega_{\epsilon}}
\frac{\partial}{\partial t} (\vert \nabla_x u_1^{\epsilon} (t,x) \vert^2) \, dx
\\
& \leq \epsilon \, d_1 \, \displaystyle \int_{\Gamma_{\epsilon}}
\psi \bigg(t,x,\frac{x}{\epsilon} \bigg) \,
\frac{\partial u_1^{\epsilon}}{\partial t} \, d\sigma_{\epsilon}(x)+
C_2 \, \displaystyle \int_{\Omega_{\epsilon}} \, \bigg \vert
\sum_{j=1}^{\infty} a_{1,j} \, u_j^{\epsilon} \bigg \vert^2 \, dx \\
&+C_3 \, \displaystyle \int_{\Omega_{\epsilon}} \, \bigg \vert
\sum_{j=2}^{\infty} B_j \, \beta_{j,1} \, u_j^{\epsilon} \bigg \vert^2 \, dx,
\end{split}
\end{equation}
where $C_1$, $C_2$ and $C_3$ are positive constants which do not depend on
$\epsilon$.
Integrating over $[0, t]$ with $t \in [0, T]$, thanks to estimate (\ref{1.41}) and
Assumption C, we end up with the estimate
\begin{equation} \label{2.23}
\begin{split}
C_1 \, \displaystyle \int_{0}^{t} ds \, \displaystyle \int_{\Omega_{\epsilon}}
\bigg \vert \frac{\partial u_1^{\epsilon}}{\partial s} \bigg \vert^2 \, dx &+
\frac {\displaystyle d_1}{\displaystyle 2} 
\displaystyle \int_{\Omega_{\epsilon}} 
\vert \nabla_x u_1^{\epsilon} (t, x) \vert^2 \, dx \leq C_4 \\ 
&+ \, \epsilon \, d_1 \displaystyle \int_{\Gamma_{\epsilon}}
\psi \bigg(t,x,\frac{x}{\epsilon} \bigg) \, u_1^{\epsilon} (t, x) \, 
d\sigma_{\epsilon}(x) \\
&- \, \epsilon \, d_1 \displaystyle \int_{0}^{t} ds \,
\displaystyle \int_{\Gamma_{\epsilon}} 
\frac{\partial}{\partial s} \psi \bigg(s,x,\frac{x}{\epsilon} \bigg) \,
u_1^{\epsilon} (s, x) \, d\sigma_{\epsilon}(x),
\end{split}
\end{equation}
since $\psi \bigg(t=0,x,\frac{\displaystyle x}{\displaystyle \epsilon} \bigg) 
\equiv 0$.

Applying once more Young's inequality and taking into account estimate
(\ref{1.29}) and Lemma \ref{lA.0}, estimate (\ref{2.23}) can be rewritten as
follows
\begin{equation} \label{2.24}
C_1 \, \displaystyle \int_{0}^{t} ds \, \displaystyle \int_{\Omega_{\epsilon}}
\bigg \vert \frac{\partial u_1^{\epsilon}}{\partial s} \bigg \vert^2 \, dx+
\frac {\displaystyle d_1}{\displaystyle 2} (1-{\epsilon}^2 \, C_5) \,
\displaystyle \int_{\Omega_{\epsilon}} 
\vert \nabla_x u_1^{\epsilon} (t, x) \vert^2 \, dx 
\end{equation}
$$ \leq C_6 + C_1\, \frac{d_1}2 \, \epsilon^2 \,\int_0^t
 \int_{\Omega_{\epsilon}} 
\vert \nabla_x u_1^{\epsilon} (s, x) \vert^2 \, dx ds, $$ 
where the positive constants $C_1$, $C_5$, $C_6$ do not depend on
$\epsilon$, since $\psi \in L^{\infty} (0, T; B)$, $u_1^{\epsilon}$ is
bounded in $L^{\infty} (0, T; L^{\infty} (\Omega_{\epsilon}))$,
and the following inequality holds: 
\begin{equation} \label{2.25}
\epsilon \displaystyle \int_{\Gamma_{\epsilon}}
\bigg \vert \partial_t \psi \bigg(t,x,\frac{x}{\epsilon} \bigg) \bigg \vert^2
\, d\sigma_{\epsilon}(x) \leq C_7 \, \Vert \partial_t \psi (t)
\Vert_B^2 \leq C_8,
\end{equation}
with $C_7$ and $C_8$ which do not depend on $\epsilon$.
Then, using Gronwall's lemma,
\begin{equation} \label{2.26}
\Vert \partial_t u_1^{\epsilon} \Vert^2_{L^2 (0, T; L^2 (\Omega_{\epsilon}))}
\leq C,
\end{equation}
and
\begin{equation} \label{2.26bis}
\Vert \nabla_x u_1^{\epsilon} \Vert^2_{L^{\infty} (0, T; L^2 (\Omega_{\epsilon}))}
\leq C,
\end{equation}
where $C \geq 0$ is a constant which does not depend on $\epsilon$.
\medskip

Case $i \geq 2$: Let us multiply the first equation in (\ref{1.2}) by the 
function
$\partial_t u_i^{\epsilon} (t, x)$.
Integrating, the divergence theorem yields

\begin{equation} \label{2.27} 
\begin{split}
&\displaystyle \int_{\Omega_{\epsilon}} \bigg \vert 
\frac{\partial u_i^{\epsilon} (t, x)}{\partial t} \bigg \vert^2 \, dx +
\frac{\displaystyle d_i}{\displaystyle 2} 
\displaystyle \int_{\Omega_{\epsilon}} 
\frac{\partial}{\partial t} (\vert \nabla_x u_i^{\epsilon} (t,x) \vert^2) \, 
dx \\
&=\frac{\displaystyle 1}{\displaystyle 2} \, 
\displaystyle \int_{\Omega_{\epsilon}} 
\bigg( \sum_{j=1}^{i-1} \, a_{i-j,j} \, u_{i-j}^{\epsilon} \,
u_j^{\epsilon} \bigg) \, \frac{\partial u_i^{\epsilon}}{\partial t} \, dx-
\displaystyle \int_{\Omega_{\epsilon}} \, u_i^{\epsilon} \,
\bigg(\sum_{j=1}^{\infty} a_{i,j} \, u_j^{\epsilon} \bigg) \,
\frac{\partial u_i^{\epsilon}}{\partial t} \, dx \\
&+\displaystyle \int_{\Omega_{\epsilon}} \bigg(\sum_{j=1}^{\infty}
B_{i+j} \, \beta_{i+j,i} \, u_{i+j}^{\epsilon} \bigg) \, 
\, \frac{\partial u_i^{\epsilon}}{\partial t} \, dx-
\displaystyle \int_{\Omega_{\epsilon}} \, B_i \, u_i^{\epsilon} \,
\frac{\partial u_i^{\epsilon}}{\partial t} \, dx.
\end{split}
\end{equation}

Using Young's inequality and exploiting the boundedness of
$u_i^{\epsilon}$ in $L^{\infty} (0, T; L^{\infty} (\Omega_{\epsilon}))$,
one gets
\begin{equation} \label{2.28}
\begin{split}
&C_1 \, \displaystyle \int_{\Omega_{\epsilon}} \bigg \vert
\frac{\partial u_i^{\epsilon} (t, x)}{\partial t} \bigg \vert^2 \, dx +
\frac{\displaystyle d_i}{\displaystyle 2} 
\displaystyle \int_{\Omega_{\epsilon}}
\frac{\partial}{\partial t} (\vert \nabla_x u_i^{\epsilon} (t,x) \vert^2) \, dx
\\
& \leq C_2+
C_3 \, \displaystyle \int_{\Omega_{\epsilon}} \, \bigg \vert
\sum_{j=1}^{\infty} a_{i,j} \, u_j^{\epsilon} \bigg \vert^2 \, dx 
+C_4 \, \displaystyle \int_{\Omega_{\epsilon}} \, \bigg \vert
\sum_{j=i+1}^{\infty} B_j \, \beta_{j,i} \, u_j^{\epsilon} \bigg \vert^2 \, dx,
\end{split}
\end{equation}
where $C_1$, $C_2$, $C_3$ and $C_4$ are positive constants 
which do not depend on
$\epsilon$.

Integrating over $[0, t]$ with $t \in [0, T]$, thanks to estimate (\ref{1.41}) and
Assumption C, we end up with the estimate
\begin{equation} \label{2.29}
C_1 \, \displaystyle \int_{0}^{t} ds \, \displaystyle \int_{\Omega_{\epsilon}}
\bigg \vert \frac{\partial u_i^{\epsilon}}{\partial s} \bigg \vert^2 \, dx +
\frac {\displaystyle d_i}{\displaystyle 2} 
\displaystyle \int_{\Omega_{\epsilon}} 
\vert \nabla_x u_i^{\epsilon} (t, x) \vert^2 \, dx \leq C_5  ,
\end{equation}
with $C_5 \geq 0$ independent of $\epsilon$ (but not on $i$).
We conclude that
\begin{equation} \label{2.30}
\Vert \partial_t u_i^{\epsilon} \Vert^2_{L^2 (0, T; L^2 (\Omega_{\epsilon}))}
\leq C,
\end{equation}
and
\begin{equation} \label{2.30bis}
\Vert \nabla_x u_i^{\epsilon} \Vert^2_{L^{\infty} (0, T; L^2 (\Omega_{\epsilon}))}
\leq C,
\end{equation}
where $C \geq 0$ is a constant independent of $\epsilon$ (but not on $i$).

\end{proof}

This concludes the section devoted to {\it{a priori}} estimates which are uniform w.r.t. the homogenization parameter $\epsilon$.

\section{Proof of the main result} \label{section3}

We start here the proof of our main Theorem \ref{t2.1}.

\subsection{Existence of solutions for a given $\epsilon >0$} \label{subsec31}

We first explain how
to get a proof of existence, for a given $\var>0$, of 
a (strong) solution to system \eqref{1.1} - \eqref{1.2}. 
We state the:
\medskip 

\begin{proposition} \label{ppr}
 Let $\epsilon>0$ small enough be given, $\Omega_{\epsilon}$ be a bounded regular open set of $\R^3$, and consider data satisfying
Assumptions A, B and C. Then there exists a
solution $(u_i^{\epsilon})_{i \ge 1}$ to system \eqref{1.1} - \eqref{1.2}, which is strong in the following sense: For all $T>0$ and $i\ge 1$, $u_i^{\epsilon} \in L^{\infty}([0,T] \times \Omega_{\epsilon})$, $\frac{\partial u_i^{\epsilon}}{\partial t} \in L^2([0,T] \times \Omega_{\epsilon})$, $\frac{\partial^2 u_i^{\epsilon}}{\partial x_k \partial x_l} \in L^2([0,T] \times \Omega_{\epsilon})$ for all $k,l \in \{1,..,3\}$.
\end{proposition}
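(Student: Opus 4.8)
The plan is to construct the solution for fixed $\epsilon>0$ by a truncation/approximation scheme, passing to the limit using the uniform-in-$N$ (the truncation parameter) a priori estimates that are essentially the ones already established in Section~\ref{section2}. The key observation is that all the bounds of Lemmas \ref{l1.1}--\ref{l1.7} were derived for classical solutions but depend only on structural features (Assumptions A, B, C) and \emph{not} on the number of equations; so if we set up a finite system they will survive the limit.

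First I would introduce, for each integer $N\ge 2$, the truncated system in which all coagulation and fragmentation sums are cut off at index $N$: one keeps the equations for $u_1^{\epsilon,N},\dots,u_N^{\epsilon,N}$, replaces every $\sum_{j=1}^\infty$ by $\sum_{j=1}^{N}$ (or $\sum_{j=1}^{N-i}$ where appropriate), and sets $u_i^{\epsilon,N}\equiv 0$ for $i>N$. This is a \emph{finite} system of semilinear parabolic equations on $[0,T]\times\Omega_\epsilon$ with smooth (locally Lipschitz, polynomial) nonlinearities and mixed Neumann data; its local-in-time existence, uniqueness, nonnegativity (the reaction terms are quasi-positive and the truncation preserves this) and parabolic regularity follow from standard fixed-point theory together with classical $L^p$-maximal-regularity for the heat semigroup with Neumann boundary conditions (see e.g.\ the references \cite{22}, \cite{12}, \cite{10}). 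Nonnegativity is guaranteed because each truncated equation has the form $\partial_t u_i - d_i\Delta u_i + (\text{nonneg.})\,u_i = (\text{nonneg.})$, so $u_i^{\epsilon,N}\ge 0$ by the maximum principle.

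Next I would propagate the a priori estimates uniformly in $N$. The duality Lemma~\ref{l1.1} applies verbatim to the truncated system (mass is still formally conserved by the truncated coagulation--fragmentation operators, since the identities \eqref{1.9}--\eqref{1.11} are finite sums and the choice $\varphi_i=i$ still annihilates them), giving the bound \eqref{1.8} with $C$ independent of both $\epsilon$ \emph{and} $N$; Corollary~\ref{cor22} then yields \eqref{1.41} uniformly in $N$. The $L^\infty$ bounds of Lemmas~\ref{l1.2}, \ref{l1.3}, \ref{l1.5} and the derivative bounds of Lemma~\ref{l1.7} are obtained by the same test-function arguments applied to the finite system, and crucially the constants $K_i$ in \eqref{2.20} are defined by a recursion that does not see $N$. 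Thus for each fixed $i$, the sequences $(u_i^{\epsilon,N})_N$, $(\partial_t u_i^{\epsilon,N})_N$, $(\nabla_x u_i^{\epsilon,N})_N$ and $(\nabla_x^2 u_i^{\epsilon,N})_N$ are bounded in $L^\infty([0,T]\times\Omega_\epsilon)$, $L^2([0,T]\times\Omega_\epsilon)$, $L^\infty(0,T;L^2(\Omega_\epsilon))$ and $L^2([0,T]\times\Omega_\epsilon)$ respectively. I would obtain the $L^2$-bound on second spatial derivatives either from interior-plus-boundary elliptic regularity applied to $-d_i\Delta u_i^{\epsilon,N}= (\text{r.h.s.})-\partial_t u_i^{\epsilon,N}$ (whose right-hand side is bounded in $L^2$ thanks to \eqref{1.41} and the $L^\infty$ bounds), or by testing with $\Delta u_i^{\epsilon,N}$ and controlling the boundary terms via the smoothness of $\psi$.

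Finally I would pass to the limit $N\to\infty$ by a diagonal extraction: for each fixed $i$, the above bounds and the Aubin--Lions--Simon compactness lemma give a subsequence along which $u_i^{\epsilon,N}\to u_i^\epsilon$ strongly in $L^2([0,T]\times\Omega_\epsilon)$ (and a.e.), with $\partial_t u_i^{\epsilon,N}\rightharpoonup \partial_t u_i^\epsilon$ and $D^2_x u_i^{\epsilon,N}\rightharpoonup D^2_x u_i^\epsilon$ weakly in $L^2$. \textbf{The main obstacle} is the passage to the limit inside the infinite coagulation and gain-of-fragmentation sums, since $N$ appears both as the truncation index \emph{and} as the summation cutoff. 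I would control this using the uniform $L^2$-bound \eqref{1.41}: the tails $\sum_{j>M} a_{i,j} u_j^{\epsilon,N}$ are controlled uniformly in $N$ because $\sum_j j\,u_j^{\epsilon,N}$ is bounded in $L^2$ and $a_{i,j}\le C(i+j)^{1-\zeta}$ (Assumption~C) makes the series absolutely convergent with a uniformly small tail; combined with the a.e.\ convergence of each $u_j^{\epsilon,N}$ and a Vitali/dominated-convergence argument, this lets the finite sums converge to the full infinite sums in $L^2([0,T]\times\Omega_\epsilon)$. The limit functions $u_i^\epsilon$ then satisfy \eqref{1.1}--\eqref{1.2} in the strong ($L^2$-a.e.) sense, inherit nonnegativity from the $u_i^{\epsilon,N}\ge 0$, and lie in the stated regularity class, which completes the proof of the Proposition.
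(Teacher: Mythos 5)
Your proposal is correct and follows essentially the same route as the paper: truncation of the coagulation--fragmentation sums at a finite index, existence for the resulting finite reaction--diffusion system by standard theory, propagation of the duality estimate and the $L^\infty$/derivative bounds uniformly in the truncation parameter, and a diagonal compactness argument (weak $L^2$ plus a.e.\ convergence) to pass to the limit, with the tails of the infinite sums controlled via Assumption C and the duality bound \eqref{1.41}. The only cosmetic difference is that you invoke Aubin--Lions--Simon for the strong convergence, whereas the paper deduces a.e.\ convergence more tersely from the uniform $L^2$ bound on $\partial_t u_i^n - d_i\Delta_x u_i^n$; both are standard and equivalent for this purpose.
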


\begin{proof}
We introduce a finite size truncation of this system,
which writes (once the notation of the dependence w.r.t. $\var$ of the unknowns has been eliminated 
for readability):
\begin{eqnarray} \label{1.1tr}
\begin{cases}
\frac{\displaystyle \partial{u_1^{n}}}{\displaystyle \partial t}-
d_1\, \Delta_x  u_1^{n} + u_1^{n} \, \sum_{j=1}^{n} 
a_{1,j}
u_j^{n}=\sum_{j=1}^{n-1} B_{1+j} \, \beta_{1+j,1} \, 
u_{1+j}^{n}
& \text{in } [0,T] \times \Omega_{\epsilon}, \\

\\ 
\frac{\displaystyle \partial{u_1^{n}}}{\displaystyle \partial \nu} 
:= \nabla_x u_1^{n}
\cdot n=0 & \text{on } [0,T] \times \partial\Omega , \\

\\
\frac{\displaystyle \partial{u_1^{n}}}{\displaystyle \partial \nu}
:= \nabla_x u_1^{n} \cdot n=
\epsilon \, \psi(t, x, \frac{x}{\epsilon}) & \text{on } 
[0,T] \times \Gamma_{\epsilon} , \\

\\
u_1^{n}(0,x)=U_1 & \text{in } \Omega_{\epsilon} ,
\end{cases}
\end{eqnarray}

and, if $i =2,..,n$,

\begin{eqnarray} \label{1.2tr}
\begin{cases}
\frac{\displaystyle \partial{u_i^{n}}}{\displaystyle \partial t}-
 d_i \,\Delta_x u_i^{n}
=Q_i^{n}+F_i^{n} & \text{in } [0,T] \times \Omega_{\epsilon} ,\\

\\ 
\frac{\displaystyle \partial{u_i^{n}}}{\displaystyle \partial \nu} 
:= \nabla_x u_i^{n}
\cdot n=0 & \text{on } [0,T] \times \partial\Omega , \\

\\
\frac{\displaystyle \partial{u_i^{n}}}{\displaystyle \partial \nu}
:= \nabla_x u_i^{n} \cdot n=0
& \text{on } 
[0,T] \times \Gamma_{\epsilon} , \\

\\
u_i^{n}(0,x)=0  & \text{in } \Omega_{\epsilon} , 
\end{cases}
\end{eqnarray}
where the truncated coagulation and breakup kernels $Q_i^{n}$, $F_i^{n}$ write

\begin{equation} \label{1.3tr}
Q_i^{n}:=\frac{1}{2} \, \sum_{j=1}^{i-1} a_{i-j,j} \, 
u_{i-j}^{n} \, u_j^{n}-\sum_{j=1}^{n} a_{i,j} \,
u_i^{n} \, u_j^{n},
\end{equation}

\begin{equation} \label{1.4tr}
F_i^{n}:=\sum_{j=1}^{n-i} B_{i+j} \, 
\beta_{i+j,i} \, u_{i+j}^{n}-B_i \,
u_i^{n}.
\end{equation}

We then observe  that the duality lemma (that is, Lemma \ref{l1.1} and Corollary \ref{cor22}) 
is still valid in this setting (with a proof that exactly follows 
the proof written above), so that we end up with the {\it{a priori}} estimate 
\begin{equation} \label{stre} \int_0^T\int_{\Omega_\var} \bigg| \sum_{i=1}^n i\, u_i^n(t,x) \bigg|^2 \, dt dx \le C, 
\end{equation}
where $C$ is a constant which does not depend on $n$.
\medskip

Using now a proof analogous to that of Lemmas \ref{l1.2} to \ref{l1.5}, we can obtain the 
{\it{a priori}} estimate 
\begin{equation}\label{newe9}
||u_i^n||_{L^{\infty}([0,T] \times \Omega_{\var})} \le C_i, 
\end{equation} 
where $C_i>0$ is a constant which also does not depend on $n$ (but may depend on $i$)
\medskip

At this point, we use  standard theorems for systems of reaction-diffusion 
equations in order to get the existence and uniqueness
of a smooth solution to system  \eqref{1.1tr} - \eqref{1.2tr} 
(for a given $n \in \N - \{0\}$). 
We refer to \cite{LD_parma}, Prop. 3.2 p. 97 and Thm. 3.3 p. 105 for a 
complete description of a case with a slightly different boundary condition 
(homogeneous Neumann instead of
inhomogeneous Neumann) and a different right-hand side (but having the same 
crucial property, that is leading to an $L^{\infty}$ {\it{a priori}} bound on the 
components of the unknown). 
\medskip

We now briefly explain how to pass to the limit when $n \to \infty$ in such a 
way that the limit of $u_i^n$ satisfies the
system \eqref{1.1} - \eqref{1.2}. 
First, we notice that thanks to the duality estimate (\ref{stre}), 
each component sequence $(u_i^n)_{n \ge i}$ is bounded in
$L^2([0,T] \times \Omega_{\var})$. 
As a consequence, we can extract a subsequence from $(u_i^n)_{n \ge i}$ still 
denoted by
 $(u_i^n)_{n \ge i}$ (the extraction is done diagonally in such a way that it 
gives a subsequence which is common for all $i$)
 which converges in $L^2([0,T] \times \Omega_{\var})$ weakly towards some 
function $u_i \in L^2([0,T] \times \Omega_{\var})$. 
 Using then the {\it{a priori}} estimates (\ref{newe9}) and (\ref{stre}),
 we see that
\begin{equation}\label{newe10}
  ||\frac{\displaystyle \partial{u_i^{n}}}{\displaystyle \partial t}-
d_i \Delta_x u_i^{n} ||_{L^2([0,T] \times \Omega_{\var})} 
\le C_i, 
\end{equation}
where $C_i$ may depend on $i$ but not on $n$, so that the convergence 
in fact holds for a.e $(t,x) \in [0,T] \times \Omega_{\var}$.
This is sufficient to pass to the limit in system 
\eqref{1.1tr} - \eqref{1.2tr}  and get a weak solution $(u_i)_{i \ge 1}$ to system \eqref{1.1} - \eqref{1.2}. Moreover, thanks to estimates (\ref{newe9}) and (\ref{newe10}), this solution 
is strong, in the sense that for all $T>0$, $u_i^{\epsilon} \in L^{\infty}([0,T] \times \Omega_{\epsilon})$, $\frac{\partial u_i^{\epsilon}}{\partial t} \in L^2([0,T] \times \Omega_{\epsilon})$, $\frac{\partial^2 u_i^{\epsilon}}{\partial x_k \partial x_l} \in L^2([0,T] \times \Omega_{\epsilon})$ for all $k,l \in \{1,..,3\}$.
\end{proof}

\subsection{Homogenization} \label{subsec32}

We now present the end of the proof of our main Theorem \ref{t2.1}, in which we use the solutions  to system \eqref{1.1} - \eqref{1.2} for a given $\epsilon>0$ obtained in Prop. \ref{ppr}, and the (uniform w.r.t. $\epsilon$) {\it{a priori}} estimates of Section 2, in order to perform the homogenization process corresponding to the limit
$\epsilon \to 0$.
\medskip

We recall that we use the notation $\,\widetilde{}\,$ for the extension by $0$ to $\Omega$ of
functions defined on $\Omega_{\epsilon}$, and the notation $\chi$ for the characteristic
function of $Y^*$.

\medskip

In view of Lemmas \ref{l1.5} and \ref{l1.7},
the sequences $\widetilde{u_i^{\epsilon}}$,  
$\widetilde{\nabla_x u_i^{\epsilon}}$ and $ \widetilde{\frac{\displaystyle \partial u_i^{\epsilon}}
{\displaystyle \partial t}} $ ($i \geq 1$) are bounded in 
$L^2 ([0,T] \times \Omega)$.  Using Proposition \ref{tB.1} and 
Proposition \ref{tB.3}, and following \cite{3}, Thm 2.9, p.1498 which is specially designed for perforated domains (in the elliptic case, but the 
transfer to the parabolic case is easy)
%
%
 they two-scale converge, up to a subsequence, respectively, to functions of the form:
$[(t,x,y)\mapsto \chi(y) \, u_i(t,x)]$, 
$[(t,x,y) \mapsto \chi(y) \, (\nabla_x u_i(t,x)+\nabla_y u_i^1(t,x,y))]$,
and $\bigg[ (t,x,y) \to \chi(y) \, \frac{\displaystyle \partial u_i}
{\displaystyle \partial t} (t,x) \bigg]$, for $i \geq 1$.
\par 
In the formulas above, $u_i \in L^2 (0,T; H^1 (\Omega))$ and 
$ u_i^1 \in L^2 ([0,T] \times \Omega; H_{\#}^1(Y)/\mathbb{R})$. 

In the case when $i=1$, let us multiply the first equation of (\ref{1.1}) 
by the test function $(t,x) \mapsto \phi_{\epsilon}(t,x,\frac{x}{\epsilon})$,
where
\begin{equation}\label{phiep}
\phi_{\epsilon}(t,x,y) := \phi(t,x)+\epsilon \, \phi_1(t,x,y), 
\end{equation}
with $\phi \in C^1 ([0,T] \times \overline {\Omega})$, and
$\phi_1 \in C^1 ([0,T] \times \overline {\Omega}; C_{\#}^{\infty}(Y))$.
Integrating, the divergence theorem yields
\begin{equation} \label{4.4} 
\begin{split}
&\displaystyle \int_0^T \int_{\Omega_{\epsilon}} 
\frac{\displaystyle \partial u_1^{\epsilon}}{\displaystyle \partial t} \,
\phi_{\epsilon} (t,x,\frac{x}{\epsilon}) \, dt \, dx+
d_1 \, \displaystyle \int_0^T \int_{\Omega_{\epsilon}} \nabla_x u_1^{\epsilon} 
\cdot  \nabla_x \left[(t,x) \mapsto \phi_{\epsilon} (t,x,\frac{x}{\epsilon}) \right] \, dt \, dx \\ 
&+\displaystyle \int_0^T \int_{\Omega_{\epsilon}} u_1^{\epsilon}
\sum_{j=1}^{\infty} a_{1,j} \, u_j^{\epsilon} \, \phi_{\epsilon}(t,x,\frac{x}{\epsilon}) \, dt \, dx=
\epsilon \, d_1 \, \displaystyle \int_0^T \int_{\Gamma_{\epsilon}}
\psi \bigg( t,x,\frac{x}{\epsilon} \bigg) \, \phi_{\epsilon}(t,x,\frac{x}{\epsilon}) \, dt \,
d\sigma_{\epsilon}(x) \\
&+\displaystyle \int_0^T \int_{\Omega_{\epsilon}}
\sum_{j=1}^{\infty} B_{1+j} \, \beta_{1+j,1} \, u_{1+j}^{\epsilon} \,
\phi_{\epsilon}(t,x,\frac{x}{\epsilon}) \, dt \, dx .
\end{split}
\end{equation}
Using the two-scales convergences described above, we can directly pass 
to the limit in the two first terms of this weak formulation. It is also easy to pass to the limit in the fourth one thanks to Prop. B.6.
\medskip



 The passage to the limit in the last infinite sum can be performed  thanks 
to Assumption C, 
 the duality Lemma \ref{l1.1} (estimate (\ref{1.8})),
and Cauchy-Schwarz inequality (used in the last inequality below), indeed
$$ \bigg| \int_0^T \int_{\Omega_{\epsilon}}
\sum_{j=K}^{\infty} B_{1+j} \, \beta_{1+j,1} \, u_{1+j}^{\epsilon} \,
\phi_{\epsilon} \, dt \, dx \bigg| $$
$$ \le \int_0^T \int_{\Omega_{\epsilon}}
\sum_{j=K}^{\infty} \gamma_1\, a_{1,1+j}\, \, u_{1+j}^{\epsilon} \,\, dt \, dx \, \, ||\phi_{\epsilon}||_{\infty}$$
$$ \le C\, \int_0^T \int_{\Omega_{\epsilon}}
\sum_{j=K}^{\infty} (1+j)^{1 - \zeta}   \, u_{1+j}^{\epsilon} \,\, dt \, dx $$
$$ \le C \, K^{-\zeta} , $$
where $C$ does not depend on $\epsilon$.
\medskip

 The infinite sum in the third term of identity (\ref{4.4}) can be treated in the same way, using moreover Lemma \ref{l1.5}, indeed  
$$ \bigg| \int_0^T \int_{\Omega_{\epsilon}}
u_1^{\epsilon} \,\sum_{j=K}^{\infty}  a_{1,j} \, u_{j}^{\epsilon} \,
\phi_{\epsilon} \, dt \, dx \bigg| $$
$$ \le C\,\int_0^T \int_{\Omega_{\epsilon}}
\sum_{j=K}^{\infty} \, a_{1,j}\, \, u_{j}^{\epsilon} \,\, dt \, dx $$
$$ \le C \, K^{-\zeta} , $$
where $C$ does not depend on $\epsilon$.
\medskip

Note that the passage to the limit in quadratic terms like $u_1^{\epsilon}\, u_{j}^{\epsilon}$ can be performed thanks to Prop. B.3 (and the remark after this proposition), as done in \cite{10}.
\medskip 

Finally, the passage to the limit leads to the variational formulation:
\begin{equation} \label{4.5} 
\begin{split}
&\displaystyle \int_0^T \int_{\Omega} \int_{Y^{\ast}}
\frac{\displaystyle \partial u_1}{\displaystyle \partial t}(t,x) \, 
\phi(t,x) \, dt \, dx \, dy \\
&+d_1 \, \displaystyle \int_0^T \int_{\Omega} \int_{Y^{\ast}}
[\nabla_x u_1(t,x)+\nabla_y u_1^1(t,x,y)] \cdot
[\nabla_x \phi(t,x)+\nabla_y \phi_1(t,x,y)] \, dt \, dx \, dy \\ 
&+\displaystyle \int_0^T \int_{\Omega} \int_{Y^{\ast}} u_1(t,x)
\sum_{j=1}^{\infty} a_{1,j} \, u_j(t,x) \, \phi(t,x) \, dt \, dx \, dy \\
&=d_1 \, \displaystyle \int_0^T \int_{\Omega} \int_{\Gamma} \psi(t,x,y) \,
\phi(t,x) \, dt \, dx \, d\sigma(y) \\
&+\displaystyle \int_0^T \int_{\Omega} \int_{Y^{\ast}}
\sum_{j=1}^{\infty} B_{1+j} \, \beta_{1+j,1} \, u_{1+j}(t,x) \, \phi(t,x) \, 
dt \,  dx \, dy.
\end{split}
\end{equation}

Thanks to an integration by parts, we see that (\ref{4.5}) can be put
in the strong form 
(associated to the following homogenized system):
\begin{equation} \label{4.6}
-\nabla_y \cdot  [d_1 (\nabla_x u_1(t,x)+\nabla_y u_1^1(t,x,y))]=0 \, \, \, \qquad
\; \; \; \text{ in} \, \,
[0,T] \times \Omega \times Y^{\ast},
\end{equation}

\begin{equation} \label{4.7}
[\nabla_x u_1(t,x)+\nabla_y u_1^1(t,x,y)] \cdot n=0  \, \, \, \qquad 
\; \; \; \; \; \; \; \; \;  \; \; \text{ on} \, \, 
[0,T] \times \Omega \times \Gamma,
\end{equation}

\begin{equation} \label{4.8} 
\begin{split}
&\theta \, \frac{\displaystyle \partial u_1}{\displaystyle \partial t}(t,x)
-\nabla_x \cdot \bigg[ d_1 \, \displaystyle \int_{Y^{\ast}}
(\nabla_x u_1(t,x)+\nabla_y u_1^1(t,x,y)) dy \bigg] \\
&+\theta \, u_1(t,x)
\sum_{j=1}^{\infty} a_{1,j} \, u_j(t,x) 
=d_1 \, \displaystyle \int_{\Gamma} \psi(t,x,y) \, d\sigma(y) \\
&+\theta \, \sum_{j=1}^{\infty} B_{1+j} \, \beta_{1+j,1} \, u_{1+j}(t,x)
\, \, \, 
\; \; \; \text{ in} \, \, [0,T] \times \Omega,
\end{split}
\end{equation}

\begin{equation} \label{4.9}
\bigg[ \displaystyle \int_{Y^{\ast}} (\nabla_x u_1(t,x)+\nabla_y u_1^1(t,x,y)) 
\, dy \bigg] \cdot n =0  \, \, \, \qquad 
\; \; \; \; \; \text{ on} \, \, 
[0,T] \times \partial\Omega,
\end{equation}
where
$$\theta=\int_{Y} \chi(y) dy=\vert Y^{\ast} \vert$$
is the volume fraction of material.
Furthermore, a direct passage to the limit shows that
$$u_1(0,x)=U_1 \, \, \, \qquad \; \; \; \; \; \text{ in} \, \, \Omega.$$


Eqs. (\ref{4.6}) and (\ref{4.7}) can be reexpressed as follows:
\begin{equation} \label{4.10}
\triangle_y u_1^1(t,x,y)=0 \, \, \, \qquad \; \; \; \; \; \; \; \; \; \;
\; \; \; \; \; \; \; \; \; \; \; \; \; \; \; \; \; \; \;
\text{ in} \, \,  
[0,T] \times \Omega \times Y^{\ast},
\end{equation}
\begin{equation} \label{4.11}
\nabla_y u_1^1(t,x,y) \cdot n=-\nabla_x u_1(t,x) \cdot n
\, \, \, \qquad \; \; \; \; \;  \text{ on} \, \,
[0,T] \times \Omega \times \Gamma .
\end{equation}
Then, $u_1^1$ satisfying (\ref{4.10})-(\ref{4.11}) can be written as
\begin{equation} \label{4.12}
u_1^1(t,x,y)=\sum_{j=1}^3 \, w_j (y) \, 
\frac{\displaystyle \partial u_1}{\displaystyle \partial x_j}(t,x),
\end{equation}
where $(w_j)_{1 \leq j \leq 3}$
is the family of solutions of the cell problem:
\begin{eqnarray} \label{4.13}
\begin{cases}
-\nabla_y \cdot  [\nabla_y w_j+ \hat{e}_j]=0 \, \, \, \qquad \; \; \; \; \; 
\text{ in} \, \, Y^{\ast} ,\\
(\nabla_y w_j+\hat{e}_j) \cdot n=0 \, \, \, \qquad \; \; \; \; \; \; \; \;
\text{ on} \, \, \Gamma ,\\
y \mapsto w_j(y) \; \; \; \; \; Y-\text{periodic} ,
\end{cases}
\end{eqnarray}
and $\hat{e}_j$ is the $j$-th unit vector of the canonical basis of $\R^3$.
\medskip

By using the relation (\ref{4.12}) in Eqs. (\ref{4.8}) and (\ref{4.9}),
the system (\ref{4.1}) can be immediately derived (cf. \cite{3}).
\medskip

We now consider $ i \geq 2 $, and multiply the first equation of (\ref{1.2}) 
by the same test function  $(t,x) \mapsto \phi_{\epsilon}(t,x,\frac{x}{\epsilon})$
as previously (with $\phi_{\epsilon}$ defined by (\ref{phiep})). We get
\begin{equation} \label{4.14} 
\begin{split}
&\displaystyle \int_0^T \int_{\Omega_{\epsilon}} 
\frac{\displaystyle \partial u_i^{\epsilon}}{\displaystyle \partial t} \,
\phi_{\epsilon} (t,x,\frac{x}{\epsilon}) \, dt \, dx+
d_i \, \displaystyle \int_0^T \int_{\Omega_{\epsilon}} \nabla_x u_i^{\epsilon} 
\cdot  \nabla_x \left[(t,x) \mapsto \phi_{\epsilon} (t,x,\frac{x}{\epsilon}) \right] \, dt \, dx \\ 
&=-\displaystyle \int_0^T \int_{\Omega_{\epsilon}} u_i^{\epsilon}
\sum_{j=1}^{\infty} a_{i,j} \, u_j^{\epsilon} \, \phi_{\epsilon}(t,x,\frac{x}{\epsilon}) \, dt \, dx+
\frac{\displaystyle 1}{\displaystyle 2}
\displaystyle \int_0^T \int_{\Omega_{\epsilon}}
\sum_{j=1}^{i-1} a_{j,{i-j}} \, u_j^{\epsilon} \, u_{i-j}^{\epsilon} \,
\phi_{\epsilon}(t,x,\frac{x}{\epsilon}) \, dt \, dx \\
&+\displaystyle \int_0^T \int_{\Omega_{\epsilon}}
\sum_{j=1}^{\infty} B_{i+j} \, \beta_{i+j,i} \, u_{i+j}^{\epsilon} \,
\phi_{\epsilon}(t,x,\frac{x}{\epsilon}) \, dt \, dx-
\displaystyle \int_0^T \int_{\Omega_{\epsilon}}
B_i \, u_i^{\epsilon} \, \phi_{\epsilon}(t,x,\frac{x}{\epsilon}) \, dt \, dx.
\end{split}
\end{equation}

The passage to the two-scale limit can be done exactly as in the case when
$u_1^{\epsilon}$ was concerned, and leads to
\begin{equation} \label{4.15} 
\begin{split}
&\displaystyle \int_0^T \int_{\Omega} \int_{Y^{\ast}}
\frac{\displaystyle \partial u_i}{\displaystyle \partial t}(t,x) \, 
\phi(t,x) \, dt \, dx \, dy \\ 
&+d_i \, \displaystyle \int_0^T \int_{\Omega} \int_{Y^{\ast}}
[\nabla_x u_i(t,x)+\nabla_y u_i^1(t,x,y)] \cdot
[\nabla_x \phi(t,x)+\nabla_y \phi_1(t,x,y)] \, dt \, dx \, dy \\ 
&=-\displaystyle \int_0^T \int_{\Omega} \int_{Y^{\ast}} u_i(t,x)
\sum_{j=1}^{\infty} a_{i,j} \, u_j(t,x) \, \phi(t,x) \, dt \, dx \, dy \\
&+\frac{\displaystyle 1}{\displaystyle 2}
\displaystyle \int_0^T \int_{\Omega} \int_{Y^{\ast}} \sum_{j=1}^{i-1} 
a_{j,{i-j}} \, u_j(t,x) \, u_{i-j}(t,x) \, \phi(t,x) \, dt \, dx \, dy \\
&+\displaystyle \int_0^T \int_{\Omega} \int_{Y^{\ast}}
\sum_{j=1}^{\infty} B_{i+j} \, \beta_{i+j,i} \, u_{i+j}(t,x) \,
\phi(t,x) \, dt \, dx \, dy \\
&-\displaystyle \int_0^T \int_{\Omega} \int_{Y^{\ast}} B_i \, u_i(t,x) \,
\phi(t,x) \, dt \, dx \, dy.
\end{split}
\end{equation}


An integration by parts shows that (\ref{4.15}) is a variational formulation
associated to the following homogenized system:
\begin{equation} \label{4.16}
- \nabla_y \cdot [d_i (\nabla_x u_i(t,x)+\nabla_y u_i^1(t,x,y))]=0 \, \, \, \qquad
\; \; \; \text{ in} \, \,
[0,T] \times \Omega \times Y^{\ast},
\end{equation}

\begin{equation} \label{4.17}
[\nabla_x u_i(t,x)+\nabla_y u_i^1(t,x,y)] \cdot n=0  \, \, \, \qquad 
\; \; \; \; \; \; \; \; \;  \; \; \; \; \; \; \text{ on} \, \, 
[0,T] \times \Omega \times \Gamma,
\end{equation}

\begin{equation} \label{4.18} 
\begin{split}
&\theta \, \frac{\displaystyle \partial u_i}{\displaystyle \partial t}(t,x)
- \nabla_x \cdot \bigg[ d_i \, \displaystyle \int_{Y^{\ast}}
(\nabla_x u_i(t,x)+\nabla_y u_i^1(t,x,y)) dy \bigg] \\
&=-\theta \, u_i(t,x)
\sum_{j=1}^{\infty} a_{i,j} \, u_j(t,x)+ 
\frac{\displaystyle \theta}{\displaystyle 2}
\sum_{j=1}^{i-1} a_{j,{i-j}}  u_j(t,x) \, u_{i-j}(t,x) \\
&+\theta \, \sum_{j=1}^{\infty} B_{i+j} \, \beta_{i+j,i} \, u_{i+j}(t,x)-
\theta \, B_i \, u_i(t,x) 
\; \; \; \; \; \text{ in} \, \, [0,T] \times \Omega,
\end{split}
\end{equation}

\begin{equation} \label{4.19}
\bigg[ \displaystyle \int_{Y^{\ast}} (\nabla_x u_i(t,x)+\nabla_y u_i^1(t,x,y)) 
\, dy \bigg] \cdot n =0  \, \, \, \qquad 
\; \; \; \; \; \text{ on} \, \, 
[0,T] \times \partial\Omega,
\end{equation}
where $\theta$
still is the volume fraction of material. Once again, a direct passage to 
the limit shows that
$$u_i(0,x)=0 \, \, \, \qquad \; \; \; \; \; \text{ in} \, \, \Omega.$$
Eqs. (\ref{4.16}) and (\ref{4.17}) can be reexpressed as follows:
\begin{equation} \label{4.20}
\triangle_y u_i^1(t,x,y)=0 \, \, \, \qquad \; \; \; \; \; \; \; \; \; \;
\; \; \; \; \; \; \; \; \; \; \; \; \; \; \; \; \; \; \;
\text{ in} \, \,  
[0,T] \times \Omega \times Y^{\ast},
\end{equation}
\begin{equation} \label{4.21}
\nabla_y u_i^1(t,x,y) \cdot n=-\nabla_x u_i(t,x) \cdot n
\, \, \, \qquad \; \; \; \; \;  \text{ on} \, \,
[0,T] \times \Omega \times \Gamma .
\end{equation}
Then, $u_i^1$ satisfying (\ref{4.20}) - (\ref{4.21}) can be written as
\begin{equation} \label{4.22}
u_i^1(t,x,y)=\sum_{j=1}^3 \, w_j (y) \, 
\frac{\displaystyle \partial u_i}{\displaystyle \partial x_j}(t,x),
\end{equation}
where $(w_j)_{1 \leq j \leq 3}$
is the family of solutions of the cell problem (\ref{4.13}).
\medskip

By using the relation (\ref{4.22}) in Eqs. (\ref{4.18}) and (\ref{4.19}),
the system (\ref{4.2}) can be immediately derived (cf. \cite{3}).
\medskip

This concludes the proof of our main Theorem (Thm. \ref{t2.1}).

\appendix
\section{Appendix A} \label{appA}

 We introduce in this Appendix some results related to the theory of perforated domains, proven in previous works. In the three Lemmas stated below, $\Omega_{\epsilon}$ is
 a perforated domain satisfying Assumption $0$.
 
\begin{lemma} \label{lA.0}
There exists a constant $C_1>0$ which does not depend on $\epsilon$, such that
when
$v \in \text{Lip}\,(\Omega_{\epsilon})$,
then
\begin{equation} \label{A.0} 
{\Vert v \Vert}^2_{L^2(\Gamma_{\epsilon})} \leq
C_1 \, \bigg[ {\epsilon}^{-1} \displaystyle \int_{\Omega_{\epsilon}}
\vert v \vert^2 \, dx+ \epsilon 
\displaystyle \int_{\Omega_{\epsilon}} \vert \nabla_x v \vert^2
\, dx \bigg].
\end{equation}
\end{lemma}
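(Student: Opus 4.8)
The plan is to reduce the estimate on the perforated domain to a single fixed trace inequality on the reference cell $Y^{\ast}=Y\setminus T$, and then to recover the $\epsilon$-dependence through the rescaling dictated by the periodic structure. First I would record the standard trace inequality on the fixed cell: since $\Gamma$ is a smooth compact hypersurface and $Y^{\ast}$ has Lipschitz boundary, the trace operator $H^1(Y^{\ast})\to L^2(\Gamma)$ is continuous, so there is a constant $C>0$ depending only on the geometry of $Y^{\ast}$ and $\Gamma$ (hence independent of $\epsilon$) such that, for every $\hat v\in H^1(Y^{\ast})$,
\[
\int_{\Gamma}|\hat v|^2\,d\sigma(y)\le C\Big[\int_{Y^{\ast}}|\hat v|^2\,dy+\int_{Y^{\ast}}|\nabla_y\hat v|^2\,dy\Big].
\]

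Next I would transport this inequality to each perforated cell. For every index $k$ with $\epsilon(k+\overline T)\subset\Omega$, I introduce the change of variables $x=\epsilon(k+y)$, $y\in Y^{\ast}$, and set $\hat v(y):=v(\epsilon(k+y))$. The security-zone part of Assumption $0$ guarantees that each counted hole is an interior hole whose surrounding material part $\epsilon(k+Y^{\ast})$ lies inside $\Omega_{\epsilon}$, so that $\hat v\in\mathrm{Lip}(Y^{\ast})$ is well defined. The crucial point is to track the scaling of the three integrals: the surface measure obeys $d\sigma_{\epsilon}(x)=\epsilon^2\,d\sigma(y)$, the volume element satisfies $dx=\epsilon^3\,dy$, and since $\nabla_y\hat v=\epsilon\,\nabla_x v$ one has $|\nabla_y\hat v|^2=\epsilon^2\,|\nabla_x v|^2$. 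Applying the cell inequality to $\hat v$ and substituting these relations, then multiplying through by $\epsilon^2$, I obtain the local estimate
\[
\int_{\Gamma_{\epsilon}^{\,k}}|v|^2\,d\sigma_{\epsilon}
\le C\Big[\epsilon^{-1}\!\!\int_{\epsilon(k+Y^{\ast})}\!\!|v|^2\,dx
+\epsilon\!\!\int_{\epsilon(k+Y^{\ast})}\!\!|\nabla_x v|^2\,dx\Big],
\]
where $\Gamma_{\epsilon}^{\,k}=\partial(\epsilon(k+\overline T))$ is the portion of $\Gamma_{\epsilon}$ carried by the $k$-th hole; the constant $C$ is the same for every $k$ because all cells are dilated translates of one and the same reference cell.

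Finally I would sum over all admissible indices $k$. Since the sets $\epsilon(k+Y^{\ast})$ are pairwise disjoint and contained in $\Omega_{\epsilon}$, and since $\Gamma_{\epsilon}$ is exactly the disjoint union of the $\Gamma_{\epsilon}^{\,k}$, the summed right-hand side is controlled by the full integrals over $\Omega_{\epsilon}$, which yields \eqref{A.0} with $C_1=C$. I do not anticipate a serious obstacle: the only points requiring care are the bookkeeping of the powers of $\epsilon$ in the change of variables (the surface term scaling like $\epsilon^2$ against the volume term scaling like $\epsilon^3$ is precisely what produces the $\epsilon^{-1}$ and $\epsilon$ weights) and the verification that the reference-cell constant is genuinely uniform in $k$ and $\epsilon$, which follows from the self-similar geometry together with the security-zone hypothesis ensuring that each counted hole comes with a full surrounding material cell inside $\Omega$.
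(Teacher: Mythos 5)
Your proof is correct and is exactly the argument the paper has in mind: the paper's proof consists of a single sentence invoking ``the standard trace theorem by means of a scaling argument'' (citing Allaire--Damlamian--Hornung), and your cell-by-cell rescaling with the $\epsilon^2$ surface versus $\epsilon^3$ volume bookkeeping, followed by summation over the perforated cells, is precisely that argument written out in full.
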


\begin{proof} The inequality (\ref{A.0}) can be easily obtained from the standard trace
theorem by means of a scaling argument, cf. \cite{3a}.
\end{proof}

\begin{lemma} \label{lA.1}
There exists a family of linear continuous extension operators
$$P_{\epsilon}: W^{1,p}(\Omega_{\epsilon}) \rightarrow W^{1,p}(\Omega)$$
and a constant $C >0$ which does not depend on $\epsilon$, such that
$$P_{\epsilon} v=v  \; \; \; \text{in } \Omega_{\epsilon} ,$$
and
\begin{equation} \label{A.1}
\int_{\Omega} \vert P_{\epsilon} v \vert^{p} dx \leq
C \int_{\Omega_{\epsilon}} \vert v \vert^p dx \;,
\end{equation}
\begin{equation} \label{A.2}
\int_{\Omega} \vert \nabla  (P_{\epsilon} v) \vert^{p} dx \leq
C \int_{\Omega_{\epsilon}} \vert \nabla v \vert^p dx ,
\end{equation}
for each $v \in W^{1,p}(\Omega_{\epsilon})$ and for any $p \in (1, +\infty)$.
\end{lemma}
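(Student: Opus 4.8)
The plan is to build $P_{\epsilon}$ by a cell-by-cell construction, rescaling a single extension operator defined once and for all on the reference perforated cell $Y^{\ast}=Y\setminus T$, and then to check that the resulting estimates scale in an $\epsilon$-independent way. First I would fix the reference extension. Since $\overline{T}\subset\mathrm{Int}\,Y$ has smooth boundary $\Gamma$, the set $Y^{\ast}$ is a bounded Lipschitz domain, so Stein's extension theorem provides a linear operator $Q:W^{1,p}(Y^{\ast})\to W^{1,p}(\R^3)$ with $Qv=v$ on $Y^{\ast}$, bounded simultaneously from $L^p(Y^{\ast})$ into $L^p(Y)$ and from $W^{1,p}(Y^{\ast})$ into $W^{1,p}(Y)$ (after restriction to $Y$), with norms depending only on $Y^{\ast}$ and $p$. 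The sole purpose of $Q$ is to fill in the hole $T$; since $Y^{\ast}$ contains a full neighborhood of $\partial Y$, the extended function coincides with $v$ near $\partial Y$.

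Second, and this is the decisive point, I would not extend $v$ directly but its oscillation. Writing $m(v)=|Y^{\ast}|^{-1}\int_{Y^{\ast}}v\,dy$ for the mean over $Y^{\ast}$, I define the reference extension
$$\tilde{P}v:=Q\big(v-m(v)\big)+m(v).$$
Then $\tilde{P}v=v$ on $Y^{\ast}$, and since $m(v)$ is constant one has $\nabla\tilde{P}v=\nabla Q(v-m(v))$. The $L^p$ bound for $Q$ together with the triangle inequality for the mean gives $\Vert\tilde{P}v\Vert_{L^p(Y)}\le C\,\Vert v\Vert_{L^p(Y^{\ast})}$, while the $W^{1,p}$ bound for $Q$ combined with the Poincar\'e--Wirtinger inequality on $Y^{\ast}$ (applicable because $v-m(v)$ has zero average) gives the \emph{scale-critical} estimate $\Vert\nabla\tilde{P}v\Vert_{L^p(Y)}\le C\,\Vert\nabla v\Vert_{L^p(Y^{\ast})}$, in which the $L^p$ norm of $v$ has disappeared.

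Third, I would rescale and patch. On every cell $\epsilon(k+Y)$ contained in $\Omega$, I transport $\tilde{P}$ by the affine change of variables $y=(x-\epsilon k)/\epsilon$ to define $P_{\epsilon}v$; on the security zone near $\partial\Omega$, where Assumption $0$ guarantees that $\Omega_{\epsilon}$ has no holes, I simply set $P_{\epsilon}v=v$. Because $\tilde{P}v$ agrees with $v$ on a neighborhood of $\partial Y$, the pieces match across cell interfaces, so $P_{\epsilon}v\in W^{1,p}(\Omega)$. The change of variables produces factors $\epsilon^3$ (volume) and $\epsilon^{-p}$ (each gradient), which cancel exactly against the factors coming from the two reference estimates: the $L^p$ estimate yields $\int_{\epsilon(k+Y)}|P_{\epsilon}v|^p\le C\int_{\epsilon(k+Y^{\ast})}|v|^p$, and the critical gradient estimate yields $\int_{\epsilon(k+Y)}|\nabla P_{\epsilon}v|^p\le C\int_{\epsilon(k+Y^{\ast})}|\nabla v|^p$, both with the \emph{same} constant $C$ on every cell. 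Summing over $k$ gives \eqref{A.1} and \eqref{A.2}.

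The main obstacle is precisely the independence of the constant from $\epsilon$ in the gradient estimate \eqref{A.2}: a naive extension controls $\Vert\nabla\tilde{P}v\Vert_{L^p(Y)}$ by the full norm $\Vert v\Vert_{W^{1,p}(Y^{\ast})}$, and the $L^p$ part of that bound, once rescaled, carries a factor $\epsilon^{-p}$ that diverges as $\epsilon\to0$. The mean-subtraction step followed by Poincar\'e--Wirtinger is exactly what removes this term, and getting that cancellation right is the heart of the argument; the rest is bookkeeping in the scaling and a routine verification that the glued function has no jumps across the faces of the cells.
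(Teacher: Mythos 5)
Your proof is correct, but it is worth noting that the paper itself does not prove this lemma at all: its entire ``proof'' is a pointer to the literature (Chiad\`o Piat--Piatnitski, reference [6a] in the bibliography), so you have supplied the argument that the paper delegates. What you wrote is the classical construction (going back to Cioranescu--Saint Jean Paulin and, in the $W^{1,p}$ setting, Acerbi--Chiad\`o Piat--Dal Maso--Percivale), and it is essentially the argument behind the cited result: a reference extension on $Y^{\ast}$, mean subtraction plus Poincar\'e--Wirtinger to decouple the gradient estimate from the $L^p$ norm, and exact cancellation of the $\epsilon^{3}$ and $\epsilon^{-p}$ factors under rescaling. You correctly identified the scale-critical gradient bound as the crux; without the mean-subtraction step the rescaled constant would blow up like $\epsilon^{-p}$. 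Two small points deserve explicit mention. First, the Poincar\'e--Wirtinger inequality on $Y^{\ast}$ requires $Y^{\ast}$ to be connected; this is not literally stated in Assumption 0, but it is forced by the paper's requirement that $\Omega_{\epsilon}$ be connected (an interior component of $Y^{\ast}$ would disconnect $\Omega_{\epsilon}$), so your use of it is legitimate. Second, your patching prescription (extend on cells $\epsilon(k+Y)\subset\Omega$, identity elsewhere) tacitly uses the standard reading of the security-zone assumption, namely that every cell containing a hole lies entirely in $\Omega$; under the paper's hypotheses this is the intended interpretation, so the construction covers all holes and the glued function is globally $W^{1,p}$.
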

\begin{proof}
For the proof of this Lemma,  see for instance \cite{6a}.
\end{proof}

As a consequence of the existence of those extension operators, one can obtain 
Sobolev inequalities in $W^{1,p}(\Omega_{\epsilon})$ with  constants which do not
depend on $\epsilon$.

\begin{lemma} [Anisotropic Sobolev inequalities in perforated domains]
\label{lA.2}
\par\noindent
\par\noindent
(i) 
For $q_1$ and $r_1$ satisfying the
conditions
\begin{eqnarray} \label{A.3}
\begin{cases}
\frac{\displaystyle 1}{\displaystyle r_1}+
\frac{\displaystyle 3}{\displaystyle 2 q_1}=\frac{\displaystyle 3}
{\displaystyle 4} ,
\\
r_1 \in [2, \infty], \; q_1 \in [2, 6] \;  ,\\
\end{cases}
\end{eqnarray}
the following estimate holds (for $v \in H^1 (0,T; L^2 (\Omega_{\epsilon})) \cap 
L^2 (0,T; H^1 (\Omega_{\epsilon}))$):
\begin{equation} \label{A.4}
\Vert v \Vert_{L^{r_1} (0, T; L^{q_1} (\Omega_{\epsilon}))} \leq
c \, \Vert v \Vert_{Q_{\epsilon} (T)} ,
\end{equation}
where $c>0$ does not depend on $\epsilon$, and (we recall that)
\begin{equation} \label{A.5}
\Vert v \Vert^{2}_{Q_{\epsilon} (T)} := \sup_{0 \leq t \leq T}
\displaystyle \int_{\Omega_{\epsilon}} \vert v(t) \vert^2 \, dx+
\displaystyle \int_{0}^{T} \, dt \displaystyle \int_{\Omega_{\epsilon}}
\vert \nabla v(t) \vert^2 \, dx  ;
\end{equation}

\par\noindent
(ii)
For $q_2$ and $r_2$ satisfying the
conditions
\begin{eqnarray} \label{A.6}
\begin{cases}
\frac{\displaystyle 1}{\displaystyle r_2}+
\frac{\displaystyle 1}{\displaystyle q_2}=\frac{\displaystyle 3}
{\displaystyle 4},
\\
r_2 \in [2, \infty], \; q_2 \in [2, 4] \; , \\
\end{cases}
\end{eqnarray}
the following estimate holds (for $v \in H^1 (0,T; L^2 (\Omega_{\epsilon})) \cap 
L^2 (0,T; H^1 (\Omega_{\epsilon}))$):
\begin{equation} \label{A.7}
\Vert v \Vert_{L^{r_2} (0, T; L^{q_2} (\Gamma_{\epsilon}))} \leq
c \, {\epsilon}^{-\frac{3}{2}+\frac{2}{q_2}}  \, 
\Vert v \Vert_{Q_{\epsilon} (T)} ,
\end{equation}
where $c>0$ does not depend on $\epsilon$.
\end{lemma}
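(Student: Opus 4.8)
The plan is to transfer both estimates to fixed reference geometries where $\epsilon$-independent constants are available, the decisive tools being the uniform extension operators $P_\epsilon$ of Lemma \ref{lA.1} for part (i) and a cellwise rescaling onto $Y^*$ for part (ii). In both cases the genuine difficulty is that $\nabla_x v$ is only controlled in $L^2([0,T]\times\Omega_\epsilon)$ through the $Q_\epsilon(T)$-norm, so the whole scheme must be arranged so that the gradient never appears to a power higher than two, neither in space nor, after time integration, in time.

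For part (i) I would first extend $v$ in the space variable by $w:=P_\epsilon v$, so that $w=v$ on $\Omega_\epsilon$ and, by (\ref{A.1})--(\ref{A.2}) applied at a.e.\ time $t$, $\Vert w\Vert_{Q(T)}\le C\,\Vert v\Vert_{Q_\epsilon(T)}$ with $C$ independent of $\epsilon$ (here $Q(T)$ denotes the analogue of (\ref{A.5}) on the fixed domain $\Omega$). It then suffices to prove the anisotropic parabolic Sobolev inequality $\Vert w\Vert_{L^{r_1}(0,T;L^{q_1}(\Omega))}\le c\,\Vert w\Vert_{Q(T)}$ on the fixed domain $\Omega$, whose constant no longer involves $\epsilon$. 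This follows from the embedding $H^1(\Omega)\hookrightarrow L^6(\Omega)$ and the interpolation $\Vert w(t)\Vert_{L^{q_1}}\le\Vert w(t)\Vert_{L^2}^{1-\alpha}\Vert w(t)\Vert_{L^6}^{\alpha}$ with $\alpha=\tfrac32-\tfrac3{q_1}\in[0,1]$: raising to the power $r_1$, using $\sup_t\Vert w(t)\Vert_{L^2}\le\Vert w\Vert_{Q(T)}$ for the factor $\Vert w\Vert_{L^2}^{(1-\alpha)r_1}$, and integrating in time, the remaining space factor is integrated to the power $\alpha r_1$. The algebraic constraint (\ref{A.3}) is exactly $\alpha r_1=2$, which makes $\int_0^T(\Vert w\Vert_{L^2}+\Vert\nabla w\Vert_{L^2})^2\,dt\le C(1+T)\Vert w\Vert_{Q(T)}^2$ close the estimate. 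Restricting the bound back to $\Omega_\epsilon$ yields (\ref{A.4}).

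For part (ii) the extension alone is not enough, since the $\epsilon$-powers must be produced by scaling on the periodicity cell; I would establish the estimate only at the two endpoints of the admissible segment and then interpolate. At the endpoint $(q_2,r_2)=(2,4)$ I would start from the multiplicative trace inequality $\Vert\hat v\Vert_{L^2(\Gamma)}^2\le C\,\Vert\hat v\Vert_{L^2(Y^*)}\,\Vert\hat v\Vert_{H^1(Y^*)}$ on the reference cell, rescale it onto each translated hole $\epsilon(k+\Gamma)$ (surface measure scaling as $\epsilon^2$, volume as $\epsilon^3$, and $\nabla_x=\epsilon^{-1}\nabla_y$), and sum over cells using the discrete Cauchy--Schwarz inequality to obtain, at fixed time, $\Vert v(t)\Vert_{L^2(\Gamma_\epsilon)}^2\le C\big(\epsilon^{-1}\Vert v(t)\Vert_{L^2(\Omega_\epsilon)}^2+\Vert v(t)\Vert_{L^2(\Omega_\epsilon)}\Vert\nabla_x v(t)\Vert_{L^2(\Omega_\epsilon)}\big)$. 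Squaring and integrating in time, the crucial second term becomes $\sup_t\Vert v\Vert_{L^2(\Omega_\epsilon)}^2\int_0^T\Vert\nabla_x v\Vert_{L^2(\Omega_\epsilon)}^2\,dt\le\Vert v\Vert_{Q_\epsilon(T)}^4$, giving $\Vert v\Vert_{L^4(0,T;L^2(\Gamma_\epsilon))}\le c\,\epsilon^{-1/2}\Vert v\Vert_{Q_\epsilon(T)}$.

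At the endpoint $(q_2,r_2)=(4,2)$ I would instead use the plain trace embedding $\Vert\hat v\Vert_{L^4(\Gamma)}\le C\Vert\hat v\Vert_{H^1(Y^*)}$ (valid since the trace of $H^1(Y^*)$ lies in $H^{1/2}(\Gamma)\hookrightarrow L^4(\Gamma)$ on the two-dimensional surface $\Gamma$), rescale it cellwise and sum the fourth powers via $\ell^4\hookrightarrow\ell^2$, to get $\Vert v(t)\Vert_{L^4(\Gamma_\epsilon)}\le C\big(\epsilon^{-1}\Vert v(t)\Vert_{L^2(\Omega_\epsilon)}+\Vert\nabla_x v(t)\Vert_{L^2(\Omega_\epsilon)}\big)$; squaring and integrating over $[0,T]$ produces $\int_0^T\Vert\nabla_x v\Vert_{L^2(\Omega_\epsilon)}^2\,dt$ directly, hence $\Vert v\Vert_{L^2(0,T;L^4(\Gamma_\epsilon))}\le c\,\epsilon^{-1}\Vert v\Vert_{Q_\epsilon(T)}$. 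Writing a general admissible pair as $(1/r_2,1/q_2)=(1-\eta)(1/4,1/2)+\eta(1/2,1/4)$ with $\eta=2-4/q_2\in[0,1]$, an elementary H\"older interpolation in time and on $\Gamma_\epsilon$ between the two endpoint bounds yields (\ref{A.7}), with the interpolated prefactor $\epsilon^{-(1-\eta)/2-\eta}=\epsilon^{-3/2+2/q_2}$ and an $\epsilon$-independent constant. The main obstacle throughout is precisely this gradient bookkeeping: any trace inequality letting $\nabla_x v$ enter to a power exceeding two fails to close, which is exactly why the multiplicative trace at $q_2=2$, the full-gradient trace at $q_2=4$, and the endpoint-plus-interpolation structure are needed rather than a single direct estimate.
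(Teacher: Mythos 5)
Your proof is correct, and it is worth pointing out that the paper itself contains no argument for this lemma at all: its ``proof'' is a pointer to reference \cite{10}, so you have supplied a self-contained derivation of a statement the authors delegate entirely. Your route for part (i) is exactly the one the paper's surrounding text anticipates (``as a consequence of the existence of those extension operators\dots''): extend by $P_\epsilon$ so that all constants live on the fixed domain $\Omega$, then combine $H^1(\Omega)\hookrightarrow L^6(\Omega)$ with the interpolation $\Vert w(t)\Vert_{L^{q_1}}\leq \Vert w(t)\Vert_{L^2}^{1-\alpha}\Vert w(t)\Vert_{L^6}^{\alpha}$, $\alpha=\tfrac32-\tfrac{3}{q_1}$; your observation that condition \eqref{A.3} is precisely $\alpha r_1=2$ is the right way to see why the time integration closes, and the exponents are homogeneous since $(1-\alpha)r_1+2=r_1$. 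For part (ii), the reference proceeds by rescaling an anisotropic trace embedding cell by cell, whereas your endpoint-plus-interpolation scheme --- the multiplicative trace inequality $\Vert \hat v\Vert_{L^2(\Gamma)}^2\leq C\Vert\hat v\Vert_{L^2(Y^*)}\Vert\hat v\Vert_{H^1(Y^*)}$ at $(q_2,r_2)=(2,4)$, the embedding $H^{1/2}(\Gamma)\hookrightarrow L^4(\Gamma)$ at $(q_2,r_2)=(4,2)$, then H\"older in time and on $\Gamma_\epsilon$ --- is a genuinely clean alternative: each endpoint keeps the gradient at power two after time integration, and the interpolated prefactor $\epsilon^{-(1-\eta)/2-\eta}=\epsilon^{-3/2+2/q_2}$ comes out automatically. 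I verified the cell rescalings (surface measure $\epsilon^2$, volume $\epsilon^3$, $\nabla_y=\epsilon\nabla_x$) and both endpoint computations; they are right.

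Three small repairs. First, the summation step at the $(4,2)$ endpoint uses $\Vert a\Vert_{\ell^4}\leq\Vert a\Vert_{\ell^2}$, i.e.\ the embedding $\ell^2\hookrightarrow\ell^4$, not ``$\ell^4\hookrightarrow\ell^2$'' as written. Second, your cellwise argument tacitly requires that for every hole counted in $\Gamma_\epsilon$ the full perforated cell $\epsilon(k+Y^*)$ lies inside $\Omega_\epsilon$; this is guaranteed by the security-zone clause of Assumption 0 (and the same tacit step underlies the paper's Lemma \ref{lA.0}), but it deserves one sentence. Third, your constants grow with $T$ (factors of the form $(1+T)^{1/r}$), which is harmless because the lemma claims independence of $\epsilon$ only, but say so explicitly since the constant $c$ is later used on subintervals $[0,T_1]\subset[0,T]$.
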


\begin{proof}
For the proof of this Lemma, see \cite{10}.
\end{proof}

\section{Appendix B} \label{appB}

We present in this Appendix 
some results on two-scale convergence
(cf. \cite{3}, \cite{3a}, \cite{16}, \cite{9}, \cite{11}, \cite{14}).
Up to Prop. B.5, $\Omega$ is a bounded open set of $\R^3$ with smooth boundary, and $Y= [0,1[^3$. Then, for Prop. B.6 and 
Lemma B.7,  $\Omega_{\epsilon}$ is
 a perforated domain satisfying Assumption $0$.
\medskip

We start with the:

\begin{definition} \label{dB.0}
A sequence of functions $v^{\epsilon}$ in $L^2 ([0,T] \times \Omega)$
 two-scale converges to  $v_0 \in L^2 ([0,T] \times \Omega \times Y)$
if
\begin{equation} \label{B.1}
\lim_{\epsilon \rightarrow 0} \int_0^T \int_{\Omega} v^{\epsilon}(t,x) \,
\phi \bigg( t,x,\frac{x}{\epsilon} \bigg) \,dt \,dx=
\int_0^T \int_{\Omega} \int_{Y} v_0 (t,x,y) \, \phi(t,x,y) \,dt \, dx\, dy,
\end{equation}
for all $\phi \in C^1 ([0,T] \times \overline {\Omega}; C_{\#}^{\infty}(Y))$.
\end{definition}

We recall then the following classical Proposition:

\begin{proposition} \label{tB.1}
If $v^{\epsilon}$ is a bounded sequence in $L^2 ([0,T] \times \Omega)$, then
there exists a function $v_0 := v_0 (t,x,y)$ in $L^2 ([0,T] \times \Omega \times Y)$
such that, up to a subsequence,
$v^{\epsilon}$ two-scale converges to $v_0$.
\end{proposition}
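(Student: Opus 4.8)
The plan is to establish this foundational compactness result of Nguetseng and Allaire by viewing two-scale testing as a family of uniformly bounded linear functionals on $L^2([0,T]\times\Omega\times Y)$ and extracting a weak limit via Riesz representation. Throughout, call a function $\phi \in C^1([0,T]\times\overline{\Omega}; C_\#^\infty(Y))$ an \emph{admissible test function}, and write $\phi_\epsilon(t,x) := \phi(t,x,x/\epsilon)$, which is measurable and square-integrable thanks to the assumed regularity.

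First I would prove the key auxiliary fact that, for every admissible $\phi$,
$$\lim_{\epsilon\to 0}\int_0^T\int_\Omega |\phi_\epsilon(t,x)|^2\,dt\,dx = \int_0^T\int_\Omega\int_Y |\phi(t,x,y)|^2\,dt\,dx\,dy.$$
This rests on the classical mean-value property for oscillating periodic functions: if $g\in C_\#(Y)$, then $g(\cdot/\epsilon)$ converges weakly-$*$ in $L^\infty$ to its average $\int_Y g\,dy$. Since $|\phi|^2$ is again admissible (it inherits continuity in $(t,x)$ and $Y$-periodic continuity in $y$), applying this weak convergence to $|\phi|^2(t,x,\cdot/\epsilon)$ tested against the constant $1$ yields the displayed identity; the continuity in $(t,x)$ is what makes the periodic averaging uniform in those variables.

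Next, by the Cauchy--Schwarz inequality and the uniform bound $\|v^\epsilon\|_{L^2([0,T]\times\Omega)}\le M$, one has
$$\left|\int_0^T\int_\Omega v^\epsilon(t,x)\,\phi_\epsilon(t,x)\,dt\,dx\right| \le M\,\|\phi_\epsilon\|_{L^2([0,T]\times\Omega)},$$
and by the first step the right-hand side is, for $\epsilon$ small, bounded by $2M\,\|\phi\|_{L^2([0,T]\times\Omega\times Y)}$. Thus the functionals $\phi\mapsto \int_0^T\int_\Omega v^\epsilon\,\phi_\epsilon$ are uniformly bounded for the $L^2([0,T]\times\Omega\times Y)$ norm. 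Since that space is separable and admissible functions are dense in it, I would fix a countable dense family of admissible test functions and, by a diagonal argument, extract a subsequence of $\epsilon$ along which $\int v^\epsilon\phi_\epsilon$ converges for every member of the family. The uniform bound upgrades this to convergence for all admissible $\phi$, and the limit $\mu(\phi)$ extends by density to a bounded linear functional on $L^2([0,T]\times\Omega\times Y)$. By the Riesz representation theorem there is $v_0\in L^2([0,T]\times\Omega\times Y)$ with $\mu(\phi)=\int_0^T\int_\Omega\int_Y v_0\,\phi$, which is exactly the assertion that $v^\epsilon$ two-scale converges to $v_0$.

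I expect the main obstacle to be the first step: one must carefully justify that the oscillating functions $\phi_\epsilon$ are legitimate (measurable, square-integrable) and that their $L^2$ norms converge to the full $Y$-averaged norm. The remaining ingredients (uniform boundedness, separability and the diagonal extraction, Riesz representation) are soft functional analysis, but they all hinge on this quantitative averaging lemma, which is precisely where the periodicity and continuity hypotheses on $\phi$ are genuinely needed.
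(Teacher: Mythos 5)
Your proof is correct and is precisely the classical Nguetseng--Allaire compactness argument: the paper itself gives no proof of this proposition, recalling it as a known result from \cite{16} and \cite{3}, and your reconstruction (the mean-value lemma giving $\|\phi(\cdot,\cdot,\cdot/\epsilon)\|_{L^2([0,T]\times\Omega)} \to \|\phi\|_{L^2([0,T]\times\Omega\times Y)}$ for admissible $\phi$, diagonal extraction over a countable dense family of test functions, passage of the $L^2$ bound to the limit functional, and Riesz representation) matches the proof in those references. One wording caveat: for a \emph{fixed} $\epsilon$ the functional $\phi \mapsto \int_0^T\!\int_\Omega v^\epsilon \phi(t,x,x/\epsilon)\,dt\,dx$ is \emph{not} bounded by the $L^2([0,T]\times\Omega\times Y)$ norm (the trace $y=x/\epsilon$ lives on a null set of $\Omega\times Y$, so one can make $\|\phi\|_{L^2}$ tiny while $\phi(\cdot,\cdot,\cdot/\epsilon)\equiv 1$), hence the equiboundedness that powers the extraction is with respect to the sup norm, while the $L^2$ bound is only asymptotic --- exactly as your ``for $\epsilon$ small'' acknowledges --- and is inherited only by the limit functional before invoking density and Riesz.
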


Then, the following Proposition is useful for obtaining the limit of the product of
two two-scale convergent sequences.

\begin{proposition} \label{tB.2}
Let $v^{\epsilon}$ be a sequence of functions in $L^2 ([0,T] \times \Omega)$
which two-scale converges to a limit
$v_0 \in L^2 ([0,T] \times \Omega \times Y)$. Suppose furthermore that

\begin{equation} \label{B.2} 
\lim_{\epsilon \rightarrow 0} \int_0^T \int_{\Omega} \vert v^{\epsilon}(t,x)
\vert^2 \, dt \, dx=
\int_0^T \int_{\Omega} \int_{Y} \vert v_0 (t,x,y) \vert^2 \, dt \, dx \, dy .
\end{equation}
Then, for any sequence $w^{\epsilon}$ in $L^2 ([0,T] \times \Omega)$
that two-scale converges to a limit
$w_0 \in L^2 ([0,T] \times \Omega \times Y)$, we get the limit
\begin{equation} \label{B.3}
\begin{split}
\lim_{\epsilon \rightarrow 0} \int_0^T \int_{\Omega} v^{\epsilon}(t,x) &\,
w^{\epsilon}(t,x) \,
\phi \bigg( t,x,\frac{x}{\epsilon} \bigg) \,dt \,dx \\
&=\int_0^T \int_{\Omega} \int_{Y} v_0 (t,x,y) \, w_0 (t,x,y) \, 
\phi(t,x,y) \,dt \, dx\, dy,
\end{split}
\end{equation}
for all $\phi \in C^1 ([0,T] \times \overline {\Omega}; C_{\#}^{\infty}(Y))$.
\end{proposition}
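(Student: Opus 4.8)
The plan is to exploit the extra hypothesis \eqref{B.2}, which upgrades the two-scale convergence of $v^{\epsilon}$ to a \emph{strong} two-scale convergence and thereby allows $v^{\epsilon}$ itself to be used (approximately) as an admissible oscillating test function against $w^{\epsilon}$. Concretely, I would first fix a smooth approximant $\psi \in C^1([0,T] \times \overline{\Omega}; C_{\#}^{\infty}(Y))$ of the limit $v_0$, and measure how far $v^{\epsilon}(t,x)$ is from the oscillating function $(t,x) \mapsto \psi(t,x,\frac{x}{\epsilon})$ in $L^2([0,T] \times \Omega)$.

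For this first step I would expand the square
\[
\int_0^T \int_{\Omega} \Big| v^{\epsilon}(t,x) - \psi\big(t,x,\tfrac{x}{\epsilon}\big) \Big|^2 \, dt\,dx = \int_0^T\int_\Omega |v^{\epsilon}|^2 - 2\int_0^T\int_\Omega v^{\epsilon}\,\psi\big(\cdot,\cdot,\tfrac{\cdot}{\epsilon}\big) + \int_0^T\int_\Omega \big|\psi\big(\cdot,\cdot,\tfrac{\cdot}{\epsilon}\big)\big|^2,
\]
and pass to the limit $\epsilon \to 0$ in each of the three terms: the first converges to $\int_0^T\int_\Omega\int_Y |v_0|^2$ by the assumed norm convergence \eqref{B.2}; the cross term converges to $\int_0^T\int_\Omega\int_Y v_0\,\psi$ because $\psi$ is an admissible test function for the two-scale convergence of $v^{\epsilon}$ (Definition \ref{dB.0}); and the last term converges to $\int_0^T\int_\Omega\int_Y |\psi|^2$ by the mean-value (Riemann--Lebesgue type) property of periodically oscillating functions, since $|\psi|^2$ is again admissible. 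Collecting these limits yields
\[
\limsup_{\epsilon \to 0} \ \Big\| v^{\epsilon} - \psi\big(\cdot,\cdot,\tfrac{\cdot}{\epsilon}\big) \Big\|_{L^2([0,T]\times\Omega)}^2 = \big\| v_0 - \psi \big\|_{L^2([0,T]\times\Omega\times Y)}^2 ,
\]
which can be made arbitrarily small thanks to the density of $C^1([0,T]\times\overline{\Omega};C_{\#}^{\infty}(Y))$ in $L^2([0,T]\times\Omega\times Y)$.

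The second step is the decomposition
\[
\int_0^T\int_\Omega v^{\epsilon} w^{\epsilon}\,\phi\big(\cdot,\cdot,\tfrac{\cdot}{\epsilon}\big) = \int_0^T\int_\Omega \big(v^{\epsilon} - \psi(\cdot,\cdot,\tfrac{\cdot}{\epsilon})\big)\, w^{\epsilon}\,\phi\big(\cdot,\cdot,\tfrac{\cdot}{\epsilon}\big) + \int_0^T\int_\Omega \psi\big(\cdot,\cdot,\tfrac{\cdot}{\epsilon}\big) w^{\epsilon}\,\phi\big(\cdot,\cdot,\tfrac{\cdot}{\epsilon}\big).
\]
In the last term the product $\psi\,\phi$ is once more an admissible test function (of class $C^1$ in $(t,x)$, smooth and periodic in $y$), so the two-scale convergence of $w^{\epsilon}$ gives directly $\int_0^T\int_\Omega\int_Y w_0\,\psi\,\phi$. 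In the first term I would apply the Cauchy--Schwarz inequality to bound it by $\| v^{\epsilon} - \psi(\cdot,\cdot,\tfrac{\cdot}{\epsilon})\|_{L^2}\,\|\phi\|_{\infty}\,\|w^{\epsilon}\|_{L^2}$, where $\|w^{\epsilon}\|_{L^2}$ is bounded (a two-scale convergent sequence is bounded in $L^2$) and the first factor is controlled by the first step.

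Finally I would take $\limsup_{\epsilon \to 0}$ and then let $\psi \to v_0$ in $L^2([0,T]\times\Omega\times Y)$: the contribution of the error term together with the difference $\int_0^T\int_\Omega\int_Y w_0(\psi - v_0)\,\phi$ are both bounded by a constant times $\|v_0 - \psi\|_{L^2([0,T]\times\Omega\times Y)}$, which tends to $0$, leaving exactly the desired identity \eqref{B.3}. I expect the main obstacle to be the rigorous justification of this strong two-scale \emph{self-test} in the first step --- in particular the mean-value convergence of $\int |\psi(\cdot,\cdot,\tfrac{\cdot}{\epsilon})|^2$ and the uniform control of all the error terms --- since the sequence $w^{\epsilon}$ carries no norm-convergence hypothesis, and everything must therefore be routed through the $L^2$ smallness of $v^{\epsilon} - \psi(\cdot,\cdot,\tfrac{\cdot}{\epsilon})$ provided by \eqref{B.2}.
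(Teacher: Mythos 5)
Your proof is correct, and it is essentially \emph{the} proof of this statement: the paper itself does not prove Proposition \ref{tB.2} (Appendix B merely quotes it from the two-scale convergence literature), and your two steps --- using \eqref{B.2} to show that $v^{\epsilon}$ is strongly two-scale convergent, i.e. that $\limsup_{\epsilon\to 0}\Vert v^{\epsilon}-\psi(\cdot,\cdot,\tfrac{\cdot}{\epsilon})\Vert_{L^2([0,T]\times\Omega)}$ is small for a smooth approximant $\psi$ of $v_0$, and then splitting $v^{\epsilon}w^{\epsilon}\phi$ and using $\psi\,\phi$ as an admissible test function for $w^{\epsilon}$ --- are exactly those of Allaire's proof (Theorem 1.8 of \cite{3}), which is the source the paper points to.

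One caveat deserves mention. Your parenthetical justification ``a two-scale convergent sequence is bounded in $L^2$'' is not a consequence of Definition \ref{dB.0} taken literally: for instance, on $[0,T]\times\Omega$ the sequence $w^{\epsilon}=\epsilon^{-3}\,\mathbb{1}_{A_{\epsilon}}$, with $A_{\epsilon}$ of Lebesgue measure $\epsilon^{4}$, satisfies $\Vert w^{\epsilon}\Vert_{L^1}=\epsilon\to 0$ and hence two-scale converges to $0$, while $\Vert w^{\epsilon}\Vert_{L^2}\to\infty$; pairing it with $v^{\epsilon}=1+\epsilon^{-1}\mathbb{1}_{A_{\epsilon}}$ (which converges strongly to $v_0\equiv 1$, so \eqref{B.2} holds) even makes \eqref{B.3} fail with $\phi\equiv 1$, so $L^2$-boundedness of $w^{\epsilon}$ is genuinely needed and not automatic. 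This is a defect of the statement as written (inherited from the literature) rather than of your argument: in the standard convention --- and in every application in this paper, where the two-scale limits are produced from $L^2$-bounded sequences via the compactness Proposition \ref{tB.1} --- boundedness in $L^2$ is part of the notion of two-scale convergence, and under that convention your Cauchy--Schwarz step, and with it the whole proof, is sound.
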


{\bf{Remark}}: Note that, in the setting of this paper, 
identity \eqref{B.2} can be obtained 
by standard computations, used in problems with perforated domains, 
thanks to the existence  
 of the extension operators $P_{\epsilon}$
(stated in Lemma \ref{lA.1}). 
\medskip

The next Propositions yield a characterization of the two-scale limits of 
gradients of bounded sequences $v^{\epsilon}$. This result is crucial for
applications to homogenization problems.


\begin{proposition} \label{tB.3}
Let $v^{\epsilon}$ be a bounded sequence in
$L^2 (0,T; H^1 (\Omega))$ that converges weakly to a limit
$v := v(t,x)$ in $L^2 (0,T; H^1 (\Omega))$.
Then, $v^{\epsilon}$ also two-scale converges to
$v$, and there exists a function $v_1 := v_1 (t,x,y)$ in
$L^2 ([0,T] \times \Omega; H^1_{\#} (Y)/\mathbb {R})$ such that, up to extraction of a
subsequence, $\nabla v^{\epsilon}$ two-scale converges to
$\nabla_x v(t,x)+\nabla_y v_1 (t,x,y)$.
\end{proposition}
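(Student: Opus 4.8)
The plan is to follow the classical argument of Allaire \cite{3}, adapted to the present time-dependent setting in which $\Omega$ is the full (non-perforated) domain. First I would invoke the compactness result of Proposition \ref{tB.1}: since $v^{\epsilon}$ is bounded in $L^2(0,T;H^1(\Omega))$, both $v^{\epsilon}$ and each component of $\nabla v^{\epsilon}$ are bounded in $L^2([0,T]\times\Omega)$, so after extracting a (diagonal) subsequence there exist $v_0 \in L^2([0,T]\times\Omega\times Y)$ and $\xi_0 \in L^2([0,T]\times\Omega\times Y)^3$ such that $v^{\epsilon}$ two-scale converges to $v_0$ and $\nabla v^{\epsilon}$ two-scale converges to $\xi_0$.

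The first substantive step is to identify $v_0$. To show it does not depend on the fast variable $y$, I would test against $\epsilon\,\Psi(t,x,\tfrac{x}{\epsilon})$, with $\Psi \in C^1([0,T]\times\overline{\Omega};C^{\infty}_{\#}(Y))^3$ compactly supported in $x$, using the chain rule $\nabla_x[\epsilon\,\Psi(t,x,\tfrac{x}{\epsilon})] = \epsilon\,(\nabla_x\Psi)(t,x,\tfrac{x}{\epsilon}) + (\nabla_y\Psi)(t,x,\tfrac{x}{\epsilon})$. After integrating by parts in $x$, the term involving $\nabla v^{\epsilon}$ is $O(\epsilon)$ (because $\nabla v^{\epsilon}$ is bounded in $L^2$), and passing to the two-scale limit leaves
\begin{equation}
\int_0^T\int_{\Omega}\int_{Y} v_0(t,x,y)\,(\nabla_y\cdot\Psi)(t,x,y)\,dt\,dx\,dy = 0
\end{equation}
for all such $\Psi$; hence $\nabla_y v_0 = 0$ and $v_0 = v_0(t,x)$. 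Since two-scale convergence towards a $y$-independent limit forces weak $L^2$ convergence to that same limit, and $v^{\epsilon}\rightharpoonup v$ by hypothesis, I conclude $v_0 = v$, which already gives the first assertion.

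The heart of the proof is the characterization of $\xi_0$. Here I would use divergence-free oscillating test fields: take $\Psi \in C^1([0,T]\times\overline{\Omega};C^{\infty}_{\#}(Y))^3$ with $\nabla_y\cdot\Psi = 0$ and compact support in $x$. Testing $\nabla v^{\epsilon}$ against $\Psi(t,x,\tfrac{x}{\epsilon})$ and integrating by parts in $x$, the $\epsilon^{-1}$ contribution drops out precisely because $\nabla_y\cdot\Psi = 0$; passing to the two-scale limit on the left and integrating by parts once more in $x$ on the right (using $v_0 = v$) yields
\begin{equation} \label{tB3b}
\int_0^T\int_{\Omega}\int_{Y} \big(\xi_0(t,x,y)-\nabla_x v(t,x)\big)\cdot\Psi(t,x,y)\,dt\,dx\,dy = 0
\end{equation}
for every $Y$-periodic, $y$-divergence-free $\Psi$.

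The main obstacle, and the only genuinely non-elementary ingredient, is the orthogonal-decomposition (de Rham type) lemma on the torus: a field $g \in L^2_{\#}(Y)^3$ that is $L^2(Y)$-orthogonal to every divergence-free periodic field must be a $y$-gradient, $g = \nabla_y p$ with $p \in H^1_{\#}(Y)/\mathbb{R}$. Applying this for a.e. $(t,x)$ to $g = \xi_0(t,x,\cdot) - \nabla_x v(t,x)$, identity \eqref{tB3b} produces a potential $v_1$ with $\xi_0 = \nabla_x v + \nabla_y v_1$; the linearity and continuity of the decomposition yield the measurability of $v_1$ in $(t,x)$ together with the bound placing it in $L^2([0,T]\times\Omega;H^1_{\#}(Y)/\mathbb{R})$. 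A standard density argument then removes the compact-support restriction on $\Psi$, and the proof is complete.
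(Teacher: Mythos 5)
Your proposal is correct and follows essentially the same route as the paper, which states Proposition \ref{tB.3} in Appendix B without proof as a background result cited from Allaire \cite{3}: your argument (two-scale compactness, testing against $\epsilon\,\Psi(t,x,\tfrac{x}{\epsilon})$ to show the limit is $y$-independent and equals the weak limit $v$, then divergence-free oscillating test fields combined with the Helmholtz/de Rham decomposition on the torus to write the gradient limit as $\nabla_x v+\nabla_y v_1$) is precisely the classical proof being invoked, transposed to the time-parametrized setting where $t$ plays a purely passive role. The only cosmetic point is that $\nabla_x[\epsilon\,\Psi(t,x,\tfrac{x}{\epsilon})]$ should read $\mathrm{div}_x[\epsilon\,\Psi(t,x,\tfrac{x}{\epsilon})]$ for the vector-valued test field, and the final density step is not actually needed, since the compact-support test fields serve only to identify the limit $\xi_0$ already furnished by Proposition \ref{tB.1}.
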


\begin{proposition} \label{tB.4}
Let $v^{\epsilon}$ be a bounded
sequence in $L^2 ([0,T] \times \Omega)$, such that $\epsilon \nabla_x v^{\epsilon}$
also is  a bounded
sequence in $L^2 ([0,T] \times \Omega)$.
Then, there exists a function $v_1 := v_1 (t,x,y)$ in
$L^2 ([0,T] \times \Omega; H^1_{\#} (Y)/\mathbb {R})$ such that, up to extraction of a
subsequence, $v^{\epsilon}$ and $\epsilon \nabla v^{\epsilon}$ respectively 
two-scale converge to $v_1$ and $\nabla_y v_1$. 
\end{proposition}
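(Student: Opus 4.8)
The plan is to follow the classical argument of Allaire (\cite{3}, Theorem 1.8), adapted to the present time-dependent setting. First I would invoke the two-scale compactness result, Proposition \ref{tB.1}: since both $v^\epsilon$ and $\epsilon\,\nabla_x v^\epsilon$ are bounded in $L^2([0,T]\times\Omega)$ (the latter componentwise), I can extract a single subsequence, obtained diagonally so as to be common to all three spatial components, along which $v^\epsilon$ two-scale converges to some $v_0 \in L^2([0,T]\times\Omega\times Y)$ and $\epsilon\,\nabla_x v^\epsilon$ two-scale converges to some vector field $\chi_0 \in L^2([0,T]\times\Omega\times Y)^3$. The whole task then reduces to identifying $\chi_0$ with $\nabla_y v_0$ and showing that $v_0$ lies, for a.e. $(t,x)$, in $H^1_{\#}(Y)/\mathbb{R}$.

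The key computation is an integration by parts exploiting the scale separation. Fix a test field $\Psi \in C^1([0,T]\times\overline{\Omega}; C^{\infty}_{\#}(Y))^3$ whose support is compactly contained in $\Omega$ with respect to $x$, so that no contribution arises from $\partial\Omega$. I would write
$$\int_0^T\int_\Omega \epsilon\,\nabla_x v^\epsilon \cdot \Psi\Big(t,x,\frac{x}{\epsilon}\Big)\,dt\,dx = -\int_0^T\int_\Omega \epsilon\, v^\epsilon\, \nabla_x\cdot\Big[\Psi\Big(t,x,\frac{x}{\epsilon}\Big)\Big]\,dt\,dx,$$
and then apply the chain rule $\nabla_x\cdot[\Psi(t,x,x/\epsilon)] = (\nabla_x\cdot\Psi)(t,x,x/\epsilon) + \epsilon^{-1}(\nabla_y\cdot\Psi)(t,x,x/\epsilon)$. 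The decisive feature is that the factor $\epsilon^{-1}$ produced by differentiating the oscillating argument exactly cancels the prefactor $\epsilon$, leaving
$$\int_0^T\int_\Omega \epsilon\,\nabla_x v^\epsilon \cdot \Psi\Big(t,x,\tfrac{x}{\epsilon}\Big) = -\int_0^T\int_\Omega \epsilon\, v^\epsilon\,(\nabla_x\cdot\Psi)\Big(t,x,\tfrac{x}{\epsilon}\Big) - \int_0^T\int_\Omega v^\epsilon\,(\nabla_y\cdot\Psi)\Big(t,x,\tfrac{x}{\epsilon}\Big).$$
Passing to the two-scale limit via Definition \ref{dB.0}, the left-hand side tends to $\int_0^T\int_\Omega\int_Y \chi_0\cdot\Psi$; on the right, the first term carries an extra factor $\epsilon$ multiplying bounded quantities and hence vanishes, while the second converges to $-\int_0^T\int_\Omega\int_Y v_0\,(\nabla_y\cdot\Psi)$.

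This yields the identity $\int_0^T\int_\Omega\int_Y \chi_0\cdot\Psi = -\int_0^T\int_\Omega\int_Y v_0\,\nabla_y\cdot\Psi$ for every admissible $\Psi$, which is precisely the weak formulation of $\nabla_y v_0 = \chi_0$ in the variable $y$. Since $\chi_0 \in L^2$, this forces $v_0(t,x,\cdot) \in H^1_{\#}(Y)$ for a.e. $(t,x)$ with $\nabla_y v_0 = \chi_0$; setting $v_1 := v_0$, viewed modulo additive constants in $y$ (that is, as an element of $H^1_{\#}(Y)/\mathbb{R}$), gives the assertion. Equivalently, restricting first to $\Psi$ with $\nabla_y\cdot\Psi = 0$ shows that $\chi_0$ is $L^2$-orthogonal to the $y$-divergence-free fields, hence a $y$-gradient by the de Rham--Helmholtz decomposition of $L^2_{\#}(Y)^3$. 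I expect the main obstacle to be bookkeeping rather than conceptual: one must justify that the boundary term on $\partial\Omega$ genuinely drops out (handled by the compact support of $\Psi$, or by a density argument reducing the general case to such $\Psi$) and that periodicity in $y$ eliminates the cell-boundary contributions, so that the only surviving term after integration by parts is the one carrying the uncancelled $\nabla_y\cdot\Psi$.
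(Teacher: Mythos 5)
Your proof is correct, and it is essentially the argument behind the result the paper itself invokes: Proposition \ref{tB.4} is stated in Appendix B without proof, as a classical fact recalled from the two-scale convergence literature (Allaire \cite{3}, Nguetseng \cite{16}), and your extraction of a common subsequence followed by the integration by parts in $x$ --- where the $\epsilon^{-1}$ from differentiating the oscillating argument cancels the $\epsilon$ prefactor, leaving the weak identity $\int \chi_0 \cdot \Psi = -\int v_0\, \nabla_y \cdot \Psi$ that identifies $\chi_0 = \nabla_y v_0$ and forces $v_0(t,x,\cdot) \in H^1_{\#}(Y)$ --- is precisely the standard proof of that cited result.
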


The main result of two-scale convergence can be generalized to the case of
sequences defined in $L^2 ([0,T] \times \Gamma_{\epsilon})$.

\begin{proposition} \label{tB.5}
Let $v^{\epsilon}$ be a sequence in $L^2 ([0,T] \times \Gamma_{\epsilon})$
such that

\begin{equation} \label{B.4}
\epsilon \, \int_0^T \int_{\Gamma_{\epsilon}} \vert v^{\epsilon}(t,x) 
\vert^2 \, dt \, d\sigma_{\epsilon}(x) \leq C,
\end{equation}
where $C$ is a positive constant, independent of $\epsilon$.
There exist a subsequence (still denoted by $\epsilon$) and a two-scale limit
$v_0(t,x,y) \in L^2([0,T] \times \Omega; L^2(\Gamma))$ such that
$v^{\epsilon}(t,x)$ two-scale converges to $v_0(t,x,y)$ in the sense that

\begin{equation} \label{B.5} 
 \lim_{\epsilon \rightarrow 0} \, \epsilon \, \displaystyle \int_0^T 
\int_{\Gamma_{\epsilon}} v^{\epsilon}(t,x) \, \phi \bigg(t,x,
\frac{x}{\epsilon} \bigg) \, dt \,
d\sigma_{\epsilon}(x)  = 
 \displaystyle \int_0^T \int_{\Omega} \int_{\Gamma} v_0(t,x,y) \,
\phi(t,x,y) \, dt \, dx \, d\sigma(y) 
\end{equation}
for any function $\phi \in C^1 ([0, T] \times \overline {\Omega};
C_{\#}^{\infty}(Y))$.
\end{proposition}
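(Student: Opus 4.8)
The plan is to construct the boundary two-scale limit $v_0$ by the Riesz representation theorem applied to the Hilbert space $H := L^2([0,T] \times \Omega; L^2(\Gamma))$. For each admissible test function $\phi \in C^1([0,T] \times \overline{\Omega}; C_\#^\infty(Y))$, introduce the quantity
$$\mu^\epsilon(\phi) := \epsilon \int_0^T \int_{\Gamma_\epsilon} v^\epsilon(t,x)\, \phi\Big(t,x,\frac{x}{\epsilon}\Big) \, dt \, d\sigma_\epsilon(x).$$
The cornerstone of the argument is the weighted surface-averaging identity
$$\lim_{\epsilon \to 0} \epsilon \int_0^T \int_{\Gamma_\epsilon} \Big|\phi\Big(t,x,\frac{x}{\epsilon}\Big)\Big|^2 \, dt \, d\sigma_\epsilon(x) = \int_0^T \int_\Omega \int_\Gamma |\phi(t,x,y)|^2 \, dt \, dx \, d\sigma(y) = \|\phi\|_H^2,$$
which refines the measure convergence \eqref{a}: it is obtained by decomposing $\Omega_\epsilon$ into the periodicity cells $\epsilon(k+Y)$, performing the change of variables $x = \epsilon(k+y)$ on each cell boundary $\epsilon(k+\Gamma)$ (the prefactor $\epsilon$ exactly compensating the $\epsilon^2$ scaling of the two-dimensional Hausdorff measure against the $\epsilon^3$ cell volume), and recognizing the resulting sum as a Riemann sum for the right-hand side, with a freezing error controlled by the $C^1$-regularity of $\phi$.

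Granted this identity, I would first bound $\mu^\epsilon$ uniformly. By the Cauchy--Schwarz inequality,
$$|\mu^\epsilon(\phi)| \le \Big(\epsilon \int_0^T \int_{\Gamma_\epsilon} |v^\epsilon|^2 \, dt \, d\sigma_\epsilon\Big)^{1/2} \Big(\epsilon \int_0^T \int_{\Gamma_\epsilon} \big|\phi(t,x,x/\epsilon)\big|^2 \, dt \, d\sigma_\epsilon\Big)^{1/2};$$
the first factor is at most $\sqrt C$ by the hypothesis \eqref{B.4}, while the second factor is controlled by estimate \eqref{1.29} and, more precisely, converges to $\|\phi\|_H$ by the averaging identity above. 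Hence $\limsup_{\epsilon\to 0} |\mu^\epsilon(\phi)| \le \sqrt C\, \|\phi\|_H$ for every admissible $\phi$.

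Since $H$ is separable and the admissible test functions are dense in $H$, I would then extract, by a diagonal procedure over a countable dense family of test functions, a subsequence (still indexed by $\epsilon$) along which $\mu^\epsilon(\phi)$ converges for each member of the family; the uniform bound above propagates this convergence to all of $H$ by density, so that $\mu(\phi) := \lim_\epsilon \mu^\epsilon(\phi)$ defines a bounded linear functional on $H$ of norm at most $\sqrt C$. By the Riesz representation theorem there is a unique $v_0 \in L^2([0,T] \times \Omega; L^2(\Gamma))$ with $\mu(\phi) = \int_0^T\int_\Omega\int_\Gamma v_0\, \phi\, dt\, dx\, d\sigma(y)$, which is precisely \eqref{B.5}. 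The main obstacle is the rigorous proof of the weighted averaging identity, and in particular the treatment of the periodicity cells meeting the exterior boundary $\partial\Omega$: here the security-zone part of Assumption $0$ (no hole touches $\partial\Omega$) is decisive, since it ensures that only whole cells contribute to $\Gamma_\epsilon$, so that no boundary correction terms arise and the Riemann-sum argument closes cleanly, the residual freezing error being routine given the regularity of $\phi$.
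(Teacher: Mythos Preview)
Your proposal is correct and follows essentially the same approach as the paper. The paper does not spell out a proof but states that it ``is very similar to the usual two-scale convergence theorem'' and ``relies on'' Lemma~\ref{lB.6}, which packages exactly the three ingredients you use: separability and density of the test space in $L^2(\Omega;L^2(\Gamma))$, the uniform bound \eqref{B.6}, and the surface-averaging identity \eqref{B.7}; your Cauchy--Schwarz / diagonal-extraction / Riesz argument is precisely the standard two-scale compactness proof carried over to the boundary measures.
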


The proof of Prop. \ref{tB.5} is very similar to the usual two-scale
convergence theorem \cite{3}. It relies on the following lemma \cite{3a}:

\begin{lemma} \label{lB.6}
Let $B=C [\overline{\Omega}; C_{\#} (Y)]$ be the space of continuous
functions $\phi (x, y)$ on $\overline{\Omega} \times Y$ which are
$Y$-periodic in $y$. Then, $B$ is a separable Banach space which is dense
in $L^2 (\Omega; L^2 (\Gamma))$, and such that any function
$\phi (x, y) \in B$ satisfies

\begin{equation} \label{B.6}
\epsilon \, \displaystyle \int_{\Gamma_{\epsilon}} 
\bigg \vert \phi (x,\frac{x}{\epsilon} ) \bigg \vert^2 \, d\sigma_{\epsilon}(x)
\leq C \, \Vert \phi \Vert^2_B, 
\end{equation}
and

\begin{equation} \label{B.7}
 \lim_{\epsilon \rightarrow 0} \, \epsilon \, \int_{\Gamma_{\epsilon}}
\bigg \vert \phi \bigg(x, \frac{x}{\epsilon} \bigg) \bigg \vert^2 \, 
d\sigma_{\epsilon} (x)
=\displaystyle \int_{\Omega} \int_{\Gamma} \vert \phi (x, y) \vert^2 \, dx \, 
d\sigma(y).
\end{equation}

\end{lemma}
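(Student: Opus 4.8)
The plan is to establish the four assertions in turn, putting most of the effort into the oscillating surface estimates \eqref{B.6} and \eqref{B.7}, since the functional-analytic statements are classical.

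For the structural claims, I would identify $B = C[\overline{\Omega}; C_{\#}(Y)]$ with the space $C(\overline{\Omega} \times \mathbb{T}^3)$ of continuous functions on the compact metric space $\overline{\Omega} \times \mathbb{T}^3$, where $\mathbb{T}^3 = \R^3/\mathbb{Z}^3$ encodes the $Y$-periodicity, endowed with the uniform norm $\Vert \phi \Vert_B = \sup_{(x,y)} |\phi(x,y)|$. It is standard that the space of continuous functions on a compact metric space is a separable Banach space, which settles the first assertion. For density in $L^2(\Omega; L^2(\Gamma)) = L^2(\Omega \times \Gamma)$, I would use that $\overline{T} \subset \mathrm{Int}\, Y$, so $\Gamma$ lies in the interior of the cell: by Tietze's theorem every continuous function on the compact set $\overline{\Omega} \times \Gamma$ extends to a continuous function on $\overline{\Omega} \times \overline{Y}$ supported away from $\partial Y$ in the $y$-variable, hence to a $Y$-periodic element of $B$; since continuous functions are dense in $L^2(\Omega \times \Gamma)$, so is $B$.

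For the two surface estimates, the key step is to unfold $\Gamma_\epsilon$ cell by cell. Writing $\Gamma_\epsilon$ as the disjoint union of the scaled hole boundaries $\epsilon(k+\Gamma)$ over the index set $I_\epsilon := \{k \in \mathbb{Z}^3 : \epsilon(k+\overline{T}) \subset \Omega\}$, and performing the change of variables $x = \epsilon(k+y)$ on each piece, I note that the surface element scales as $d\sigma_\epsilon(x) = \epsilon^2\, d\sigma(y)$ and that, by $Y$-periodicity, $\phi(x, x/\epsilon) = \phi(\epsilon(k+y), y)$. This yields the exact identity
\[
\epsilon \int_{\Gamma_\epsilon} \Big| \phi\big(x, \tfrac{x}{\epsilon}\big) \Big|^2 \, d\sigma_\epsilon(x) = \epsilon^3 \sum_{k \in I_\epsilon} \int_\Gamma \big| \phi(\epsilon(k+y), y) \big|^2 \, d\sigma(y).
\]
Bounding the integrand by $\Vert \phi \Vert_B^2$ and using that $\epsilon^3\, \#(I_\epsilon)$ stays bounded (it in fact converges to $|\Omega|_3$, consistently with \eqref{a}) gives \eqref{B.6} at once, with $C$ of order $|\Gamma|_2\,|\Omega|_3$.

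Finally, for the limit \eqref{B.7} I would read the right-hand side of the identity above as a Riemann sum. For each fixed $y \in \Gamma$, the quantity $\epsilon^3 \sum_{k \in I_\epsilon} |\phi(\epsilon(k+y), y)|^2$ is a Riemann sum, with sample points $\epsilon(k+y)$ and cell volume $\epsilon^3$, for $\int_\Omega |\phi(x, y)|^2\, dx$; by the uniform continuity of $\phi$ on the compact set $\overline{\Omega} \times \mathbb{T}^3$, this convergence is uniform in $y$, so integrating over $\Gamma$ and exchanging limit and integral produces $\int_\Omega \int_\Gamma |\phi(x,y)|^2\, d\sigma(y)\, dx$. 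The main obstacle is to make this Riemann-sum argument rigorous while controlling the boundary layer: the cells meeting $\partial\Omega$ are not full cells, but the security-zone hypothesis in Assumption $0$ keeps the hole-bearing cells away from $\partial\Omega$, and since the number of such boundary cells is $O(\epsilon^{-2})$, their total contribution is $O(\epsilon) \to 0$. Collecting these estimates establishes \eqref{B.7}.
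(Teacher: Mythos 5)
Your proof is correct. There is, however, no proof in the paper to compare it against: the lemma is quoted verbatim from the reference of Allaire--Damlamian--Hornung (cited as the source right before the statement), and your argument is essentially the standard one given there. Specifically: identifying $B$ with $C(\overline{\Omega}\times\mathbb{T}^3)$ for separability and density (via Tietze extension plus a cutoff in $y$, which is legitimate because $\overline{T}\subset \mathrm{Int}\,Y$ keeps $\Gamma$ away from $\partial Y$); unfolding $\Gamma_{\epsilon}=\bigsqcup_{k\in I_\epsilon}\epsilon(k+\Gamma)$ with the change of variables $x=\epsilon(k+y)$, $d\sigma_\epsilon = \epsilon^2\,d\sigma$, which gives the exact identity you state; bounding $\epsilon^3\,\#(I_\epsilon)$ (e.g.\ by disjointness of the cells) to get \eqref{B.6}; and a Riemann-sum argument, uniform in $y$ by uniform continuity of $\phi$, together with an $O(\epsilon)$ boundary-layer estimate, to get \eqref{B.7}. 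Two small points you would want to tighten in a final write-up: (i) for cells of $I_\epsilon$ that straddle $\partial\Omega$, the integrand $|\phi(\cdot,y)|^2$ is not defined on the part of the cell outside $\overline{\Omega}$, so either extend $\phi$ continuously to a neighborhood of $\overline{\Omega}$ or simply lump those cells into the boundary layer (they number $O(\epsilon^{-2})$, so this is harmless); (ii) it is worth noting explicitly that the sample points $\epsilon(k+y)$, $y\in\Gamma$, do lie in $\Omega$, precisely because $k\in I_\epsilon$ means $\epsilon(k+\overline{T})\subset\Omega$. Neither point affects the validity of the argument.
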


\section*{Acknowledgements}

S.L. is supported by GNFM of INdAM, Italy.
L.D. also acknowledges support from
the French ``ANR blanche'' project Kibord: ANR-13-BS01-0004, and by Universit\'e Sorbonne Paris Cit\'e, in the framework of the ``Investissements d'Avenir'', convention ANR-11-IDEX-0005.
\par
L.D. warmly thanks Harsha Hutridurga for very fruitful discussions during the preparation of this work.

\end{document}